\documentclass[11pt]{article}

\usepackage{fullpage}
\usepackage[utf8]{inputenc}
\usepackage[T1]{fontenc}
\usepackage{mathtools}
\usepackage{amsmath,amsfonts,amssymb,amsthm}
\usepackage{dsfont}
\usepackage{bbm}
\usepackage{bm}
\usepackage{mathrsfs}
\usepackage{nicefrac}
\usepackage{graphicx}
\usepackage{latexsym}
\usepackage{pgfplots}

\usepackage{array}
\usepackage{multirow}
\usepackage{booktabs}
\usepackage[shortlabels]{enumitem}
\usepackage{float}

\usepackage{thmtools}
\usepackage{thm-restate}
\usepackage{etoolbox}

\usepackage{verbatim}
\usepackage[strict]{changepage}
\usepackage[marginal]{footmisc}
\usepackage{url}
\usepackage{authblk}
\usepackage{optidef}

\allowdisplaybreaks

\definecolor{darkblue}{RGB}{0,76,156}
\definecolor{darkkblue}{RGB}{0,0,153}
\definecolor{blue2}{RGB}{102,178,255}
\definecolor{darkred}{RGB}{195,0,0}
\definecolor{refcolor}{rgb}{0.067,0.5,0.65}
\definecolor{urlcolor}{rgb}{0.1,0,0.9}

\usepackage[
    breaklinks,
    pdftex,
    colorlinks=true,
    linkcolor=refcolor,
    citecolor=refcolor,
    urlcolor=refcolor
]{hyperref}
\usepackage{cleveref}

\newtheorem{theorem}{Theorem}[section]
\newtheorem{definition}{Definition}
\newtheorem{proposition}[theorem]{Proposition}
\newtheorem{lemma}[theorem]{Lemma}
\newtheorem{corollary}[theorem]{Corollary}

\theoremstyle{definition}
\newtheorem{remark}[theorem]{Remark}

\mathchardef\ordinarycolon\mathcode`\:
\mathcode`\:=\string"8000
\def\vcentcolon{\mathrel{\mathop\ordinarycolon}}
\begingroup \catcode`\:=\active\lowercase{\endgroup\let :\vcentcolon}

\newcommand{\nc}{\newcommand}
\nc{\rnc}{\renewcommand}

\nc{\bra}[1]{\langle#1|}
\nc{\ket}[1]{|#1\rangle}
\nc{\ketbra}[2]{|#1\rangle\!\langle#2|}
\nc{\tr}{\operatorname{Tr}}
\nc{\ox}{\otimes}
\nc{\id}{\mathrm{id}}

\nc{\cA}{{\cal A}}
\nc{\cC}{{\cal C}}
\nc{\cD}{{\cal D}}
\nc{\cE}{{\cal E}}
\nc{\cF}{{\cal F}}
\nc{\cG}{{\cal G}}
\nc{\cH}{{\cal H}}
\nc{\cL}{{\cal L}}
\nc{\cM}{{\cal M}}
\nc{\cN}{{\cal N}}
\nc{\cS}{{\cal S}}
\nc{\cT}{{\cal T}}
\nc{\cU}{{\cal U}}
\nc{\cW}{{\cal W}}

\nc{\NN}{{\mathbb{N}}}

\nc{\idop}{{\mathds{1}}}
\nc{\SEP}{{\operatorname{SEP}}}
\nc{\PPT}{{\operatorname{PPT}}}
\nc{\LOCC}{{\operatorname{LOCC}}}
\nc{\Choi}{Choi-Jamio\l{}kowski }
\nc{\CPTP}{\text{\rm CPTP}}

\nc{\pT}{\mathsf{T}}
\nc{\bfA}{\mathbf{A}}
\nc{\bfB}{\mathbf{B}}
\nc{\suc}{\mathrm{suc}}

\begin{document}

\title{Entanglement cost of bipartite quantum channel discrimination under positive partial transpose operations}

\author[1]{Chengkai Zhu\thanks{C.\ Z.\ and S.\ H.\ contributed equally to this work. Email: czhu696@connect.hkust-gz.edu.cn}}
\author[1]{Shuyu He\thanks{she726@connect.hkust-gz.edu.cn}}
\author[2]{Gereon Ko\ss mann
\thanks{kossmann@physik.rwth-aachen.de}}
\author[1]{Xin Wang
\thanks{felixxinwang@hkust-gz.edu.cn}}
\affil[1]{\small Thrust of Artificial Intelligence, Information Hub,\par The Hong Kong University of Science and Technology (Guangzhou), Guangzhou 511453, China}
\affil[2]{RWTH Aachen University, Institute for Quantum Information, Germany}
\maketitle

\begin{abstract}
Quantum channel discrimination is a fundamental task in quantum information processing. In the one-shot regime, discrimination between two candidate channels is characterized by the diamond norm. Beyond this basic setting, however, many scenarios in distributed quantum information processing remain unresolved, motivating notions of distinguishability that capture the power of the available resources. In this work, we formulate a theory of testers for bipartite channel discrimination, leading to the concept of the entanglement cost of bipartite channel discrimination: the minimum Schmidt rank $k$ of a shared maximally entangled state required for local protocols to achieve the globally optimal success probability. We introduce $k$-injectable testers as a tester-based description of entanglement-assisted local discrimination and, in particular, study the class of $k$-injectable positive-partial-transpose (PPT) testers, which constitutes a numerically tractable relaxation of the practically relevant class of LOCC testers.

For every $k$, we derive a semidefinite program (SDP) for the optimal success probability, which in turn yields an efficiently computable one-shot PPT entanglement cost. To render these optimization problems numerically feasible, we prove a symmetry-reduction principle for covariant channel pairs, thereby reducing the effective dimension of the associated SDPs. Finally, by dualizing the SDP, we derive bounds on the composite channel-discrimination problem and illustrate our framework with proof-of-principle examples based on the depolarizing channel, the depolarized SWAP channel, and the Werner--Holevo channels.
\end{abstract}

\newpage
\tableofcontents

\section{Introduction}

Quantum channel discrimination is a fundamental primitive in quantum information processing, with a wide range of applications (see, e.g.,~\cite{Duan2007,Duan2008,Pirandola_2011,Wilde_2017g,Zhuang_2020} and subsequent works). The basic setting assumes an ensemble of quantum channels, each occurring with a specified probability, and allows the tester to use an arbitrary input state, possibly entangled with an untouched memory register, as well as an arbitrary measurement on the joint output register. The goal is to determine the optimal input state and measurement that maximize the probability of correctly identifying the implemented channel. In the special case of discriminating between two channels, this problem admits a well-known and widely applicable solution in terms of the \textit{diamond-norm distance}, first introduced by Kitaev~\cite{Kitaev_1997}. 

A closer look at the general setup of channel discrimination naturally raises the question of the role of the memory register, which may be viewed as providing entanglement assistance. The usefulness of entanglement in channel discrimination has been studied extensively (see, e.g.,~\cite{Childs_2000,Rosgen2004OnTH,Sacchi2005,Sacchi2005O,Berta_2013}). A seminal result of Piani and Watrous~\cite{Piani2009} shows that every entangled state is useful for some channel discrimination task. Beyond this qualitative advantage, it is also important to understand the \textit{quantitative} entanglement required, namely the minimum ancilla dimension sufficient for optimal discrimination. It is known that an ancilla with dimension equal to the \textit{input} dimension is always sufficient~\cite{Kitaev_1997}, and in some cases also necessary~\cite{Puzzuoli2017}. By contrast, an ancilla with dimension equal to the \textit{output} dimension is not always sufficient~\cite{Puzzuoli2017}. However, no general and efficient method is currently known for determining the exact ancilla dimension required for arbitrary pairs of channels.

\begin{figure}[t]
    \centering
    \includegraphics[width=1\linewidth]{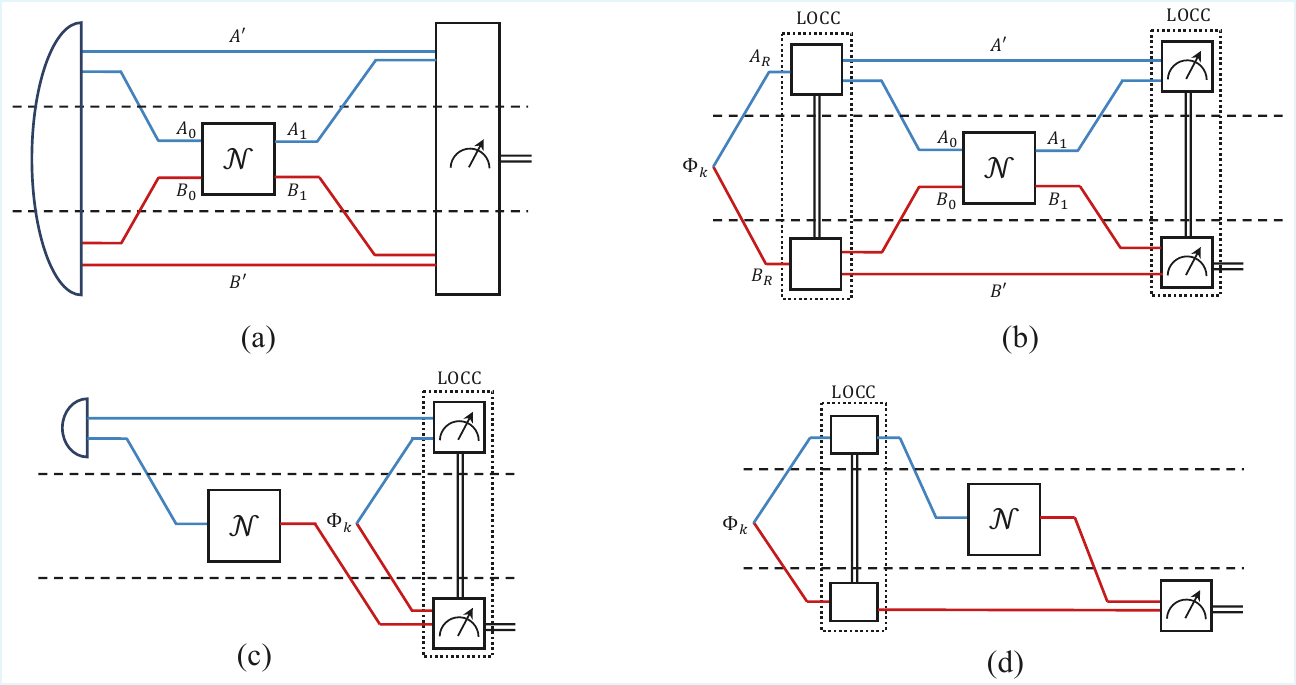}
    \caption{Protocols for entanglement-assisted discrimination of bipartite quantum channels.
    (a) A framework in which both parties prepared states and performed joint measurements to achieve diamond norm distance.
    (b) A general framework in which shared entanglement ($\Phi_k$) assists both the state-preparation and measurement stages. (c) A reduction to the point-to-point setting where entanglement specifically enhances the LOCC measurement. (d) A reduction in which entanglement is used to enhance probe-state preparation.}
    \label{fig:general_ent_assis}
\end{figure}

Moreover, many information theoretic tasks naturally involve \textit{bipartite quantum channels}, that is, channels with two input and two output registers that act jointly on inputs provided by both Alice and Bob, while restricting the allowed probe states and measurements to a practically relevant subclass such as local operations and classical communication (LOCC) or related classes~\cite{Chitambar_2014,CHILDS_2006}. In \Cref{fig:general_ent_assis}, we show the general case of a bipartite channel discrimination task. It is readily seen that the standard point-to-point channel is recovered as a special case when Bob's input and Alice's output are trivial one-dimensional systems. In the bipartite setting, if Alice and Bob, although spatially separated, are allowed to prepare entangled probe states and perform joint measurements, then the optimal performance in binary channel discrimination is again characterized by the diamond-norm distance, just as in the point-to-point case (see~\Cref{fig:general_ent_assis}~(a)). However, physically motivated constraints often limit Alice and Bob to LOCC in both probe-state preparation and measurement (see~\Cref{fig:locc_qcd}). Strategies subject to such locality constraints are known to be strictly less powerful than global ones~\cite{matthews2010}. Nevertheless, a quantitative understanding of the gap between local and global strategies for general bipartite channels remains largely lacking. This leads to the central question of this work:
\begin{center}
    \emph{How much shared entanglement is required to optimally discriminate a bipartite channel in the distributed setting via LOCC?}
\end{center}

We term this quantity the entanglement cost of bipartite channel discrimination. This quantity is particularly 
acute in a scenario of immense practical relevance to quantum networks and distributed quantum computing, where entanglement distribution is 
costly and must be optimized. Specifically, as illustrated in \Cref{fig:general_ent_assis}~(b), we aim to quantify the minimum dimension $k$ of a shared maximally entangled state $\Phi_k$, that allows Alice and Bob to achieve the optimal discrimination probability given by the diamond norm. In the point-to-point limit, our framework naturally recovers two settings: configuration (c) corresponds to utilizing entanglement to enhance the LOCC norm defined in Ref.~\cite{matthews2010} to the diamond norm, while configuration (d) corresponds to the problem of determining the minimum ancilla dimension required for the input probe to achieve the diamond norm.

\subsection{Contributions}
In this work, we introduce the \emph{entanglement cost of bipartite channel discrimination} and develop the mathematical framework of \emph{$k$-injectable testers} (cf.~Section~\ref{sec:bipartite_qcd_local}). These testers formalize discrimination strategies assisted by the shared entanglement of local dimension $k$. By additionally specifying the type of communication allowed in the final measurement stage, one obtains more refined classes, such as $k$-injectable PPT testers. This framework provides a systematic way to quantify the exact entanglement resources required for optimal channel discrimination under locality constraints (cf.~Section~\ref{sec:one_shot_entcost}). A natural problem is then to determine the entanglement cost and success probability for the class of $k$-injectable LOCC testers. However, since the set of LOCC operations is in general not even closed (see~\cite{Chitambar_2014}), it appears extremely difficult to characterize this class numerically. A tractable outer approximation is given by the set of separable testers, that is, testers with separable measurements, and a further outer approximation is provided by the set of PPT testers. Since characterizing separable testers essentially requires solving the separability problem, which is itself computationally hard (see, e.g.,~\cite{Fawzi2021_sep_states}), we focus in our numerical analysis on the set of PPT testers.

\begin{table}[b]
\centering
\caption{Summary of PPT entanglement costs for channel discrimination.}
\label{tab:ent_costs}
\renewcommand{\arraystretch}{1.3}
\begin{tabular}{@{} c c c @{}} 
\hline
\textbf{Channel pair} & \textbf{Dimension} & \textbf{Ent. cost} \\
\hline
Werner-Holevo & $d$ & $\log_2 d$ ebits \\
Point-to-point depolarizing & $d$ & 1 ebit\\
Bipartite depolarizing & $d_A \times d_B$ & 0 \\
Depolarized SWAP & $d$ & 1 ebit  \\
\hline
\end{tabular}
\end{table}

Focusing therefore on the class of PPT operations, which, as discussed above, provides a standard computable relaxation of LOCC, we prove that for fixed entanglement dimension $k$, the success probability can be computed via a semidefinite program (SDP) (cf.~Theorem~\ref{thm:sdp_k_inject_PPT}). This yields a tractable tool for bounding the performance of entanglement-assisted local strategies and, as a special case, recovers the previous analysis of the entanglement cost in bipartite quantum state discrimination~\cite{Zhu2025}. We further generalize our framework from discriminating individual channels to discriminating \emph{convex sets} of channels, thereby capturing scenarios in which the channels are only partially characterized. When both channel sets admit SDP representations, we derive an SDP for the entanglement-assisted worst-case optimal success probability in composite channel discrimination (cf.~Proposition~\ref{prop:composite_discrimination}).

As applications, we derive closed-form entanglement costs for highly symmetric channels (see~\Cref{tab:ent_costs}). Notably, we show that optimally distinguishing two bipartite depolarizing channels using PPT testers requires no entanglement. In striking contrast, the PPT entanglement cost is 1 ebit for any two distinct point-to-point depolarizing channels, regardless of the system's dimension. Furthermore, we establish that discriminating bipartite depolarized SWAP channels also incurs a PPT entanglement cost of 1 ebit across arbitrary dimensions. Finally, for a pair of $d$-dimensional Werner-Holevo channels, we prove that the PPT entanglement cost is $\log_2 d$ ebits.

\subsection{Organization}
The remainder of this paper is organized as follows. 
Section~\ref{sec:pre} establishes the notation and reviews the preliminaries of quantum channel discrimination and the diamond norm. In Section~\ref{sec:bipartite_qcd_local}, we formalize the bipartite discrimination task under locality constraints and introduce the hierarchy of testers (LOCC, SEP, and PPT). Section~\ref{sec:one_shot_entcost} contains our main theoretical results, defining the $k$-injectable tester and deriving the primal and dual SDPs for the entanglement-assisted success probability. In Section~\ref{sec:example}, we apply these tools to symmetric channels, providing exact calculations for the entanglement cost of channels of interest and numerical results.

\section{Preliminaries}\label{sec:pre}

\subsection{Notation}
We denote finite-dimensional Hilbert spaces associated with the systems of Alice and Bob as $\cH_A$ and $\cH_B$, respectively, with dimension $d_A$ and $d_B$. We abbreviate the tensor product space $A_{1}\ox A_{2}\ox\cdots\ox A_{n}$ as $A^{n}$ when all $A_{j}$ are isomorphic. The set of all linear operators on $\cH_{A}$ is denoted by $\cL(A)$, and the set of all quantum states, or density matrices, is denoted by $\cD(A)$. The identity operator on $\cH_A$ is denoted as $\idop_A$, and the maximally mixed state is denoted by $\pi_A = \idop_A/d_A$. The trace norm of a linear operator $X$ is defined as $\|X\|_1 = \tr\sqrt{X^\dagger X}$. Let $\{\ket{j}\}_{j = 0,\cdots,d-1}$ be the standard computational basis. A standard maximally entangled state of Schmidt rank $k$ is defined as $\Phi_k = 1/k \sum_{i,j = 0}^{k-1}\ketbra{ii}{jj}$. We will also use $\Phi_{AB}$ to explicitly denote the maximally entangled state on system $AB$. Measurements in quantum mechanics are described by Positive Operator-Valued Measures (POVM). A POVM on system $A$ is a set $\{ M_j \}_{j}$ of operators satisfying $M_j \geq 0$, and $\sum_j M_j = \idop_A$. In a bipartite quantum system $AB$, a state $\rho_{AB}$ is called separable (SEP) if it can be written as $\rho_{AB} = \sum_j p_j \rho_A^{(j)} \ox \sigma_B^{(j)}$ where $\{p_j\}$ is a probability distribution, and it is called entangled otherwise. A state is called a positive partial transpose (PPT) state if $\rho_{AB}^{\pT_B}\geq 0$ where $\pT_B$ denotes taking the transpose on the subsystem $B$. A POVM realized by local operations and classical communication (LOCC) is called a LOCC POVM. In order to clarify LOCC-operations in multipartite settings, we use the common notation $\operatorname{LOCC}(AA^\prime:B)$ to note LOCC-operations between $AA^\prime$ and $B$. For a POVM $\{ M_{AB}^{(j)}\}_{j}$, it is called an SEP POVM if $M_{AB}^{(j)}$ is separable for all $j$, and it is called a PPT POVM if $(M_{AB}^{(j)})^{\pT_B}\geq 0$ for all $j$.

A quantum channel is a completely positive trace-preserving (CPTP) map $\cN_{A\to B}: \cL(A) \rightarrow \cL(B)$. We denote by $\CPTP(A,B)$ the set of all quantum channels from $\cL(A)$ to $\cL(B)$. The \Choi operator of a quantum channel $\cN_{A\to B}$ is defined as $J_{A'B}^{\cN} \coloneqq \sum_{jk}\ketbra{j}{k}_{A'}\ox \cN_{A\to B}(\ketbra{j}{k})$ where $A'\cong A$. We denote by $\id_{A}$ the identity channel on $\cL(A)$. For a bipartite quantum channel $\cN_{A_0B_0\rightarrow A_1B_1}$, we assume that both input and output systems are shared between Alice and Bob, where we use the subscript `$0$' to denote the input system and the subscript `$1$' to denote the output system for each party.

For any two quantum processes, or, more generally, linear maps, they can be composed whenever the input system of one matches the output system of the other. After their connection, the Choi operator of the whole process can be derived through the \textit{link product}~\cite{Chiribella2008a}, denoted as $``\star"$. Formally, consider two linear maps $\cN_{A\to B} : \cL(A) \to \cL(B)$ and $\cM_{B\to C}: \cL(B) \to \cL(C)$. The Choi operator of the composite map $\cF_{A\to C} \coloneqq \cM_{B\to C} \circ \cN_{A\to B}$ from system $A$ to system $C$ is given by the link product of the individual Choi operators, defined by
\begin{equation}~\label{def:link_product}
J^{\cF}_{AC} = J^{\cN}_{AB} \star J^{\cM}_{BC} \coloneqq \tr_B \left[ \left( (J^{\cN}_{AB})^{\pT_B} \ox \idop_C \right) \left( \idop_A \ox J^{\cM}_{BC} \right) \right].
\end{equation}
The link product admits both associative and commutative properties, i.e.,
\begin{equation*}
\begin{aligned}
    J_{1} \star (J_{2} \star J_{3}) &= (J_{1} \star J_{2}) \star J_{3},\\
    J_{1} \star J_{2} &= J_{2} \star J_{1}.
\end{aligned}
\end{equation*}

\subsection{One-shot quantum channel discrimination}\label{sec:qcd}
In the task of minimum-error channel discrimination, one is given access to an unknown quantum channel $\cN_{A\to B}^{(j)}:\cL(A)\to\cL(B)$ that is known to have been drawn from a channel ensemble $\Omega = \{(p_j, \cN_{A\to B}^{(j)})\}_{j=0}^{m-1}$ with probability $p_j$. The goal is to determine which channel is given. To this end, a general strategy with only a single use of the unknown channel is to input a suitable probe state $\rho_{RA}$ with an arbitrary auxiliary system $R$ to the channel, and measure the output state with a POVM $\big\{M_{RB}^{(j)}\big\}_{j=0}^{m-1}$ on $RB$ system. By maximizing over all possible pairs of input state $\rho_{RA}$ and POVM $\big\{M_{RB}^{(j)}\big\}_{j=0}^{m-1}$ on $RB$, the maximal average success probability of discriminating $\big\{\cN_{A\to B}^{(j)}\big\}_{j=0}^{m-1}$ is then defined as (see,~e.g.,~\cite{Watrous2008} and subsequent work)
\begin{equation}\label{eq:general_success_probability_with_entanglement}
P_{\suc}(\Omega) \coloneqq \max_{\rho_{RA},\{M_j\}} \sum_{j=0}^{m-1} p_j \tr \Big[M_{RB}^{(j)} (\id_{R}\ox \cN_{A\to B}^{(j)})(\rho_{RA})\Big].
\end{equation}
When the set only consists of two quantum channels $\cN_0,\cN_1:\cL(A)\to \cL(B)$, the distinguishability of $\cN_0$ and $\cN_1$ and subsequently the average success probability in Eq.~(\ref{eq:general_success_probability_with_entanglement}) can be quantified by the distance induced by the \textit{diamond norm} (see,~e.g.,~\cite{Kitaev_1997}), i.e.,
\begin{equation*}
    \big\|\cN_{A\to B}^{(0)} - \cN_{A\to B}^{(1)}\big\|_{\Diamond} = \max_{\rho_{AR}\in\cD(RA)} \Big\|(\id_R\ox \cN_{A\to B}^{(0)})(\rho_{RA}) - (\id_R\ox \cN_{A\to B}^{(1)})(\rho_{RA})\Big\|_1,
\end{equation*}
where $R$ is an ancillary system with dimension $d_R \geq d_A$. The average success probability in Eq.~(\ref{eq:general_success_probability_with_entanglement}) can then be quantified as \cite{Watrous2008}
\begin{align}\label{eq:diamond_norm_as_success_probability}
    P_{\suc}\Big(\Big\{(\lambda,\cN_{A\to B}^{(0)}),(1-\lambda,\cN_{A\to B}^{(1)})\Big\}\Big) = \frac{1}{2}\big(1 + \big\| \lambda\, \cN_{A\to B}^{(0)}-(1-\lambda)\, \cN_{A\to B}^{(1)}\big\|_{\Diamond}\big).
\end{align}
If the two channels are given with equal probability, the maximal average success probability of discriminating $\cN_{A\to B}^{(0)}$ and $\cN_{A\to B}^{(1)}$ is according to Eq.~\eqref{eq:diamond_norm_as_success_probability} related to the diamond norm and has the following representation as a semidefinite program (SDP)~\cite{watrous2009}, involving the \Choi operators $J_{AB}^{\cN^{(j)}}$ of the underlying channels
\begin{equation}\label{Eq:Diamond}
\begin{aligned}
    P_{\suc}\Big(\Big\{(1/2,\cN_{A\to B}^{(0)}),(1/2,\cN_{A\to B}^{(1)})\Big\}\Big)  = \max &\; \frac{1}{2} + \frac{1}{2}\tr \Big[W_{AB}\big(J_{AB}^{\cN^{(0)}}-J_{AB}^{\cN^{(1)}}\big)\Big] \\
    {\rm s.t.}&\;\; 0\leq W_{AB} \leq \rho_{A}\ox\idop_{B}, \\
    &\;\; \rho_{A} \ge 0,~\tr\rho_{A} = 1.
\end{aligned}
\end{equation}
The corresponding dual problem is given by
\begin{equation}\label{sdp:sucprob_dual_diamond}
\begin{aligned}
    \min &\; \frac{1}{2} + \alpha\\
    {\rm s.t.} &\;\; C_{AB} \ge 0, ~\tr_{B}C_{AB} \le \alpha \idop_{A},\\
    &\;\; J_{AB}^{\cN} - J_{AB}^{\cM} \le  2C_{AB}.
\end{aligned}
\end{equation}
If we treat the probe state $\rho_A$ and the POVM $\big\{M_{AB}^{(j)}\big\}_{j=0}^{m-1}$ as a \textit{whole process}, a unified formalism of \textit{testers}~\cite{Chiribella2009}, also referred to as process POVMs~\cite{Ziman2008,Ziman2010}, is developed to tackle the problem of channel discrimination. Concretely, given a probe state $\rho_A \in \cD(A)$ and a POVM $\big\{M_{AB}^{(j)}\big\}_{j=0}^{m-1} \subseteq \cL(AB)$, the \Choi isomorphism associates to this choice a family of linear operators $\mathbf{T}=\big\{T_{AB}^{(j)}\big\}_{j=0}^{m-1}$, with $T_{AB}^{(j)}\in\cL(AB)$, called testers, satisfying $T_{AB}^{(j)} \geq 0$ for all $1\leq j \leq m$ such that 
\begin{equation}\label{eq:1tester_cond}
\sum_{j=0}^{m-1} T_{AB}^{(j)} = \rho_{A} \ox \idop_{B}.
\end{equation}
The maximal average success probability Eq.~\eqref{eq:general_success_probability_with_entanglement} for discriminating the ensemble $\big\{p_j,\cN_{A\to B}^{(j)}\big\}_{j=0}^{m-1}$ can then be rewritten as an optimization over testers as follows.
\begin{equation}\label{eq:1test_prob}
    P_{\suc}(\Omega) = \max_{\{T_{AB}^{(j)}\}} \sum_{j=0}^{m-1} p_j\tr\Big(T_{AB}^{(j)} J_{AB}^{\cN^{(j)}}\Big).
\end{equation}
Using this formalism, one can recover the SDP formulation in Eq.~\eqref{Eq:Diamond} from Eq.~\eqref{eq:1tester_cond} and~\eqref{eq:1test_prob} by observing that a binary POVM is uniquely characterized by a single positive effect $W_{AB}$.

\section{Bipartite channel discrimination under locality constraints}\label{sec:bipartite_qcd_local}
\begin{figure}[t]
    \centering
    \includegraphics[width=0.6\linewidth]{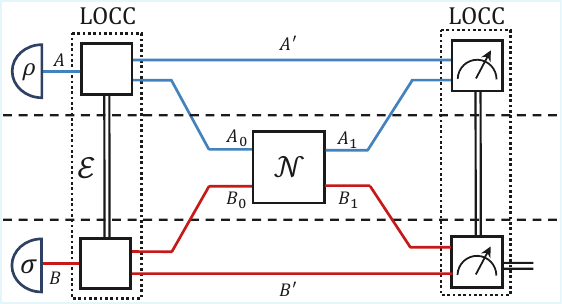}
    \caption{The general framework for bipartite channel discrimination is presented. Compared with the well-understood setting of channel discrimination in Section~\ref{sec:qcd}, several refinements are introduced. First, the channels in the ensemble $\Omega=\big\{(p_j,\cN_{A_0B_0\to A_1B_1}^{(j)})\big\}_j$ have bipartite inputs $A_0B_0$ and bipartite outputs $A_1B_1$, shared between Alice and Bob. Second, Alice and Bob are provided with local entanglement assistance in their laboratories, denoted by $A'$ and $B'$, respectively. By means of LOCC operations with respect to the bipartition $A\!:\!B$, they encode input states $\rho_A$ and $\sigma_B$ into states on $A_0A'$ and $B_0B'$. The unknown channel is then applied. Finally, restricting again to LOCC operations on $A_1A'\!:\!B_1B'$, Alice and Bob output a guess $j$, indicating that the channel $\cN_{A_0B_0\to A_1B_1}^{(j)}$ was applied.}
    \label{fig:locc_qcd}
\end{figure}

In this section, we introduce the task of bipartite quantum channel discrimination under locality constraints. This setting arises naturally in distributed quantum computing and quantum 
networks, where spatially separated parties cannot perform global operations. The central challenge is quantifying how much entanglement must be shared to overcome this locality restriction. To this end, we formalize the notion of $k$-injectable local testers.

\subsection{Problem formulation}
Specifically, consider the physically relevant scenario where two spatially separated parties, Alice and Bob, are restricted to LOCC~\cite{matthews2010}. This LOCC paradigm is also central to distributed quantum computing, offering a promising path toward scaling up quantum computations by linking separate quantum processors~\cite{peng2020simulating,mitarai2021constructing,piveteau2023circuit}. In this distributed scenario, the unknown channel is typically a bipartite $\cN_{A_0B_0\rightarrow A_1B_1}$ that serves as the communication link between Alice and Bob. For a binary discrimination, if the two parties are allowed to employ arbitrary shared entanglement as a resource, the optimal success probability for minimum-error discrimination is again characterized by the diamond norm as follows (see~\Cref{fig:general_ent_assis}~(a) for an operational illustration).
\begin{align*}
    P_{\suc}\Big(\Big\{(\lambda,\cN_{A_0B_0\rightarrow A_1B_1}^{(0)}),(1-\lambda,\cN_{A_0B_0\rightarrow A_1B_1}^{(1)})\Big\}\Big) = \frac{1}{2}\Big(1 + \Big\| \lambda\, \cN_{A_0B_0\rightarrow A_1B_1}^{(0)}-(1-\lambda)\, \cN_{A_0B_0\rightarrow A_1B_1}^{(1)}\Big\|_{\Diamond}\Big).
\end{align*}
In the LOCC setting considered here, however, the entanglement is constrained by the LOCC structure of the encoding and measurement operations. The corresponding LOCC discrimination protocol, as illustrated in~\Cref{fig:locc_qcd}, proceeds as follows.
\begin{itemize}
    \item[i).] Alice picks an initial state $\rho_{A} \in \cD(A)$, and Bob picks an initial state $\sigma_{B} \in \cD(B)$. They perform an encoding channel $\cE_{AB\to A'A_0B_0B'}$ that is $\operatorname{LOCC}(A\!:\!B)$ between Alice and Bob. Moreover, the encoding channel creates for Alice and Bob a memory system $A'$ and $B'$, respectively.
    \item[ii).] After applying the joint unknown channel $\cN^{(j)}_{A_0B_0\to A_1B_1}$ from the ensemble $\Omega$ to the output of the encoder $\cE_{AB\to A'A_0B_0B'}(\rho_{A}\ox \sigma_{B})$, Alice keeps the $A_1$-part and Bob keeps the $B_1$-part.
    \item[iii).] Alice and Bob perform an LOCC measurement $\operatorname{LOCC}(A_1A'\!:\!B_1B')$ on the state 
    \begin{align}
    \omega_{A_1A' B_1B'} \coloneqq \cN^{(j)}_{A_0 B_0\to A_1 B_1}\circ\cE_{AB\to A'A_0B_0B'}(\rho_{A}\ox \sigma_{B})    
    \end{align}
    whose $A_1A'$ system is held by Alice and $B_1B'$ system is held by Bob. Based on the measurement results for $\omega_{A_1A' B_1B'}$, Alice and Bob output a channel $\cN^{(j)}_{A_0B_0\to A_1B_1}$ from the ensemble $\Omega$.
\end{itemize}
This formulation allows us to compare the discrimination power of LOCC protocols with that of protocols utilizing global quantum operations or arbitrary shared entanglement.

\subsection{Tester under locality constraints}
Under the \Choi isomorphism, the discrimination process in~\Cref{fig:locc_qcd} can be described by the link product (see Eq.~\eqref{def:link_product}) of all input states and physical operations, excluding the unknown channel~\cite{Chiribella2009}. More precisely, for fixed input states $\rho_{A} \in \cD(A),\sigma_{B} \in \cD(B)$, encoding $\operatorname{LOCC}(A\!:\!B)$ channel $\cE_{AB\to A'A_0B_0 B'}$ and a $\operatorname{LOCC}(A_1A'\!:\!B_1B')$-POVM $\big\{M_{A'A_1B'B_1}^{(j)}\big\}_j$, we define the corresponding LOCC tester $\big\{T_{A_0B_0A_1B_1}^{(j)}\big\}_j$ of the channel discrimination task in~\Cref{fig:locc_qcd} as
\begin{equation}\label{eq:locc_tester}
    T_{A_0B_0A_1B_1}^{(j)} \coloneqq \rho_{A} \star \sigma_{B} \star J_{AB A_0B_0 A' B'}^{\cE} \star M_{A'A_1B'B_1}^{(j)}.
\end{equation}
Although an LOCC tester, constructed from an LOCC encoding $\cE$ and an LOCC POVM $\{M^{(j)}\}_j$, represents the most physically relevant class, characterizing such testers mathematically is notoriously difficult (see, e.g.,~\cite{Chitambar_2014}). 

To make the problem tractable, we consider relaxations of the LOCC constraint. A natural relaxation is to replace the LOCC encoding with SEP operations~\cite{Bennett1999,Rains1999}, i.e., channels with product Kraus operators between $A$ and $B$. Since $\cE_{AB\to A'A_0B_0 B'}$ is a SEP operation if and only if its \Choi operator is separable \cite{Horodecki_2003}, we can write
\begin{equation*}
    J_{AB A_0B_0 A' B'}^{\cE} = \sum_{x} p_x \alpha^{x}_{AA_0A'} \ox \beta^{x}_{BB_0B'}.
\end{equation*}
Meanwhile, we can relax the LOCC measurement $M^{(j)}$ to the SEP POVM, i.e.,
\begin{equation*}
    M_{A'A_1B'B_1}^{(j)} = \sum_y q_y \zeta^{y}_{A'A_1} \ox \mu^{y}_{B'B_1}.
\end{equation*}
Therefore, we arrive at a relaxation of the LOCC tester
\begin{equation*}
    \widehat{T}_{A_0B_0A_1B_1}^{(j)} = \sum_{x}\sum_{y} p_x q_y \tr_{A_0A_1}\Big[(\zeta_{A'A_1}^{y})^{\pT}\Big(\alpha^{x}_{AA_0A'}\rho_{A}^{\pT}\Big)\Big] \ox \tr_{B_0B_1}\Big[(\mu_{B'B_1}^{y})^{\pT}\Big(\beta^{x}_{BB_0B'}\sigma_{B}^{\pT}\Big)\Big].
\end{equation*}
Notice that $\widehat{T}_{A_0B_0A_1B_1}^{(j)}$ is a separable operator between $A_0A_1$ and $B_0B_1$, which motivates the definition of a SEP tester as follows.
\begin{definition}[SEP tester]
A set of linear operators $\big\{T^{(j)}_{A_0B_0 A_1B_1}\big\}_{j=0}^{m-1}$ is called an SEP tester if there exists a state $\rho_{A_0B_0}$ such that $T^{(j)}_{A_0B_0 A_1B_1} \geq 0,\sum_{j=0}^{m-1} T^{(j)}_{A_0B_0 A_1B_1} = \rho_{A_0B_0} \ox \idop_{A_1B_1}$, and each element admits a decomposition
\begin{equation*}
T^{(j)}_{A_0B_0 A_1B_1} = \sum_{k} E_{A_0A_1}^{(k)} \ox F_{B_0B_1}^{(k)},    
\end{equation*}
where $E_{A_0A_1}^{(k)}\geq 0$ and $F_{B_0B_1}^{(k)} \geq 0$ for all $k$.
\end{definition}

Unfortunately, even the set of separable testers is hard to describe, as it requires basically a solution to the separability problem, which is already known to be computationally intractable for quantum states \cite{Fawzi2021_sep_states}. A well-known relaxation, which is even SDP representable, i.e., can be incorporated in a semidefinite program, for the set of separable states, is the positive-partial-transpose (PPT) condition. It is based on the observation that every separable state needs to stay positive under local positive maps, which are not completely positive. The PPT is a well-known work-horse for outer approximations of the set of separable states, as it is positive but not completely positive. We thus take the definition of a SEP tester and relax its separability condition in the following definition for a PPT tester to a computationally tractable condition using the PPT map. 

\begin{definition}[PPT tester]
A set of linear operators $\big\{T^{(j)}_{A_0B_0 A_1B_1}\big\}_{j=0}^{m-1}$ is called a PPT tester if there exists a state $\rho_{A_0B_0}$ such that $T^{(j)}_{A_0B_0 A_1B_1} \geq 0, \sum_{j=0}^{m-1} T^{(j)}_{A_0B_0 A_1B_1} = \rho_{A_0B_0} \ox \idop_{A_1B_1}$, and each element satisfies
\begin{equation*}
    \big(T^{(j)}_{A_0B_0 A_1B_1}\big)^{\pT_{B_0B_1}} \geq 0.
\end{equation*}
\end{definition}
Concluding the discussion on different testers under locality constraints, we note the strict inclusions between the following sets
\begin{equation*}
    \text{LOCC testers} \subsetneq \text{SEP testers} \subsetneq \text{PPT testers} \subsetneq \text{All testers}.
\end{equation*}

\section{One-shot PPT entanglement cost for optimal discrimination}\label{sec:one_shot_entcost}

Given the setup in~\Cref{fig:locc_qcd}, it is natural to quantify the amount of additional shared entanglement required to match, under locality constraints, the average success probability of the general setting in Section~\ref{sec:qcd}. In this section, we present our main result, which quantitatively characterizes how much entanglement is necessary for a local strategy to attain the globally optimal average success probability for discriminating two bipartite quantum channels, namely, the success probability determined by the diamond-norm distance in Eq.~\eqref{Eq:Diamond}. Operationally, this yields for an ensemble of channels $\Omega=\big\{(p_j,\cN_{A_0B_0\to A_1B_1}^{(j)})\big\}_j$ a characterization in terms of the entanglement required to discriminate $\Omega$ optimally.

\subsection{Entanglement injectable testers}
To utilize entanglement, Alice and Bob can share a maximally entangled state $\Phi_k \in\cD(A_RB_R)$ to assist in the encoding and measurement process, as shown in~\Cref{fig:general_under_locc_kebit}. Henceforth, we denote by $A_R$ and $B_R$ the reference systems of Alice and Bob, respectively, which store their shared entanglement, and we assume $\cH_{A_R}\cong \cH_{B_R}$ with $d_{A_R}=d_{B_R}=k$. Then, a natural question is 
\begin{center}
\emph{What is the minimum dimension $k$ required for an LOCC (SEP/PPT) tester to achieve the optimal average success probability of quantum channel discrimination given in Eq.~\eqref{eq:1test_prob}?}
\end{center}

\begin{figure}[t]
    \centering
    \includegraphics[width=1\linewidth]{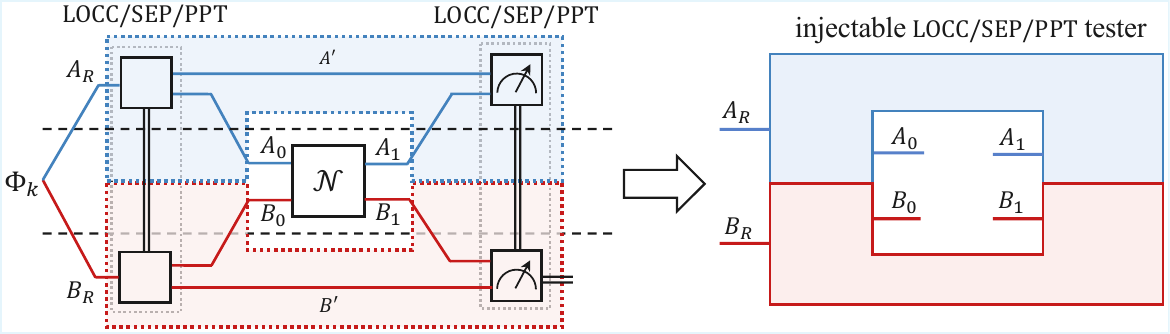}
    \caption{The adapted channel discrimination setup is shown. Comparing with Figure~\ref{fig:locc_qcd}, an additional resource of entanglement $\Phi_k\in\cD(A_RB_R)$ between Alice and Bob is added. In order to formally add them, we introduce the notion of $k$-injectable testers. It has explicit resource ports $A_RB_R$ such that, upon injecting $\Phi_k$, they induce an effective tester on the channel ports $A_0B_0A_1B_1$.
    }
    \label{fig:general_under_locc_kebit}
\end{figure}

To quantify the precise amount of entanglement required to enhance a tester under locality constraints to achieve global optimal performance, we introduce the concept of entanglement-injectable testers. Unlike conventional testers that act solely on the input channel, entanglement-injectable testers include two additional input systems, $A_R$ and $B_R$, which represent the interface for resource injection.

\begin{definition}[$k$-injectable PPT tester]~\label{def:k_injectable_PPT_def}
Let $\Big\{T_{k\text{-inj}}^{(j)}\Big\}_{j=0}^{m-1}$ be a set of linear operators where $T_{k\text{-inj}}^{(j)}\in\cL(A_RB_R A_0A_1 B_0B_1)$ and $d_{A_R} = d_{B_R} = k$. It is called a $k$-injectable PPT tester if
\begin{itemize}
    \item[i).] $T_{k\text{-inj}}^{(j)} \ge 0, ~\big(T_{k\text{-inj}}^{(j)}\big)^{\pT_{B_R B_0 B_1 }} \ge 0$ for all $j$,
    \item[ii).] $W_{A_RB_R A_0A_1 B_0B_1} = \tr_{A_1B_1} \left[W_{A_RB_RA_0A_1 B_0B_1} \right] \ox \frac{\idop_{A_1B_1}}{d_{A_1B_1}}$,
    \item[iii).] $W_{A_RB_R}  = d_{A_1B_1} \idop_{A_RB_R}$,
\end{itemize}
where $W_{A_RB_R A_0A_1 B_0B_1} = \sum_{j=0}^{m-1} T_{k\text{-inj}}^{(j)}$.
\end{definition}
We note that a $k$-injectable SEP tester can be defined in a similar way, where we require each $T_{k\text{-inj}}^{(j)}$ to be separable between $A_R A_0A_1$ and $B_R B_0B_1$. As we will mainly focus on the PPT case, which admits an SDP characterization, we only state the PPT case here. For a $k$-injectable PPT tester and any fixed injected state $\rho_{A_RB_R}$, the effective tester acting on the channel ports $A_0A_1$ and $B_0B_1$ is obtained by
\begin{equation}
    T_{A_0A_1 B_0B_1}^{(j)} \coloneqq \tr_{A_R B_R} \left[ T_{k\text{-inj}}^{(j)} (\rho_{A_RB_R}^{\pT} \ox \idop_{A_0A_1 B_0B_1}) \right].
\end{equation}

\subsection{Entanglement-assisted success probability and entanglement cost}
The entanglement-assisted average success probability of channel discrimination is then introduced as follows. Note that a similar idea has been employed to quantify the entanglement cost of quantum state discrimination in~\cite[Definition 4]{Zhu2025}.

\begin{definition}[Entanglement-assisted average success probability]\label{def:ent_ass_ave_suc_prob}
Given an ensemble of bipartite quantum channels $\Omega = \big\{(p_j,\cN^{(j)}_{A_0B_0\to A_1B_1})\big\}_{j=0}^{m-1}$ and an positive integer $k \ge 1$, the maximal $k$-ebit-assisted average success probability of discriminating $\Omega$ via a $k$-injectable $\cT$ tester is defined by
\begin{equation}\label{eq:max_k_ass_ave_suc_prob}
    P_{\suc,e}^{\cT}(\Omega; k) \coloneqq \max_{ \{T_{k\text{-inj}}^{(j)}\} \in \cT} \sum_{j=0}^{m-1} p_j\tr\left[T_{k\text{-inj}}^{(j)} \left(\Phi_k \ox J_{A_0A_1B_0B_1}^{\cN^{(j)}}\right)\right],
\end{equation}
where $\cT\in \{\LOCC, \SEP, \PPT\}$.
\end{definition}
This leads to the natural definition of the entanglement cost, representing the minimum resource dimension required to recover the performance of a global strategy.

\begin{definition}[One-shot entanglement cost of channel discrimination]
Given an ensemble of bipartite quantum channels $\Omega = \big\{(p_j,\cN^{(j)}_{A_0B_0\to A_1B_1})\big\}_{j=0}^{m-1}$, the one-shot $\cT$-entanglement cost of minimum-error discrimination is defined by
\begin{equation}
    E^{(1)}_{C,\cT}(\Omega) \coloneqq \min_{k\in \NN_+} \left\{\log k: P_{\suc,e}^{\cT}(\Omega; k) = P_{\suc}(\Omega)\right\}.
\end{equation}
\end{definition}

As our first result, we demonstrate that the maximal $k$-ebit-assisted average success probability of binary discrimination of $\{(\lambda,\cN_{A_0B_0\to A_1B_1}), (1-\lambda, \cM_{A_0B_0\to A_1B_1})\}$ via a $k$-injectable PPT tester can be written as an SDP shown in Theorem~\ref{thm:sdp_k_inject_PPT}. For simplicity, when there are only two channels, we use the notations $P_{\suc,e}^{\cT}(\cN,\cM; k, \lambda)$ to denote $P_{\suc,e}^{\cT}(\{(\lambda,\cN_{A_0B_0\to A_1B_1}), (1-\lambda, \cM_{A_0B_0\to A_1B_1})\})$ and $E^{(1)}_{C,\cT}(\cN,\cM;\lambda)$ to denote $E^{(1)}_{C,\cT}(\{(\lambda,\cN_{A_0B_0\to A_1B_1}), (1-\lambda, \cM_{A_0B_0\to A_1B_1})\})$. We will frequently use bold capital letter $\bfA$ to denote all subsystems that belong to Alice and use $\bfB$ to denote all subsystems that belong to Bob, e.g., $\bfA \coloneqq A_0A_1,\bfB \coloneqq B_0B_1$. 

\begin{theorem}\label{thm:sdp_k_inject_PPT}
Let $\cN_{A_0B_0\to A_1B_1}$ and $\cM_{A_0B_0\to A_1B_1}$ be two bipartite quantum channels given with prior probabilities $\lambda$ and $1-\lambda$, respectively, where $\lambda\in(0,1)$. The maximal $k$-ebit-assisted average success probability of discriminating $\cN_{A_0B_0\to A_1B_1}$ and $\cM_{A_0B_0\to A_1B_1}$ via a $k$-injectable PPT tester is given by
\begin{equation}\label{eq:sdp_k_ppt_inject}
\begin{aligned}
P^{\PPT}_{\suc,e}(\cN, \cM; k, \lambda) =\max &\; \lambda \tr (W_{\bfA\bfB}^{(0)}J_{\bfA\bfB}^{\cN}) + (1-\lambda)\tr (W_{\bfA\bfB}^{(1)}J_{\bfA\bfB}^{\cM})\\
{\rm s.t.}
&\;\; W_{\bfA\bfB}^{(j)} \geq 0,~Q_{\bfA\bfB}^{(j)}\geq 0, ~j \in \{0,1\},\\
&\;\; \tr (W_{\bfA\bfB}^{(0)}+W_{\bfA\bfB}^{(1)}) = \tr(Q_{\bfA\bfB}^{(0)}+Q_{\bfA\bfB}^{(1)}) = d_{A_1B_1},\\
&\;\; W^{(0)}_{\bfA\bfB}+W^{(1)}_{\bfA\bfB} = (W^{(0)}_{A_0B_0}+W^{(1)}_{A_0B_0})\ox \pi_{A_1B_1},\\
&\;\; Q^{(0)}_{\bfA\bfB}+Q^{(1)}_{\bfA\bfB} = (Q^{(0)}_{A_0B_0}+Q^{(1)}_{A_0B_0})\ox \pi_{A_1B_1},\\
&\;\; (1-k)(Q_{\bfA\bfB}^{(j)})^{\pT_{\bfB}} \le (W_{\bfA\bfB}^{(j)})^{\pT_{\bfB}} \le (1+k)(Q_{\bfA\bfB}^{(j)})^{\pT_{\bfB}}.
\end{aligned}
\end{equation}
\end{theorem}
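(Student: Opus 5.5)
We reduce the optimization in Definition~\ref{def:ent_ass_ave_suc_prob} over $k$-injectable PPT testers to an optimization over operators on $\bfA\bfB$ alone. The tool is twirling symmetry on the resource ports $A_RB_R$, in the spirit of the analysis of~\cite{Zhu2025} for state discrimination. The objective only sees $T^{(j)}_{k\text{-inj}}$ through its pairing with $\Phi_k$. Moreover, $\Phi_k$ is invariant under $U\ox\bar U$ on $A_RB_R$, and all the constraints of Definition~\ref{def:k_injectable_PPT_def} are preserved under this twirl: positivity, PPT across $A_RA_0A_1:B_RB_0B_1$ (since $(U\ox\bar U)^{\pT_{B_R}}$ acts as a local unitary conjugation), the marginal condition ii), and $W_{A_RB_R}=d_{A_1B_1}\idop$. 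Hence we may assume without loss of generality that
\begin{equation*}
T^{(j)}_{k\text{-inj}} = \Phi_k\ox X^{(j)}_{\bfA\bfB} + (\idop_{A_RB_R}-\Phi_k)\ox Y^{(j)}_{\bfA\bfB}.
\end{equation*}

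Next we rewrite each piece of the problem in terms of $X^{(j)},Y^{(j)}$.

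The objective becomes $\tr[T^{(j)}(\Phi_k\ox J^{(j)})]=\tr[X^{(j)}J^{(j)}]$. We therefore set $W^{(j)}:=X^{(j)}$.

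Positivity of $T^{(j)}$ is equivalent to $X^{(j)}\ge0$ and $Y^{(j)}\ge0$.

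For the PPT condition, take the partial transpose on $B_R$. Using $\Phi_k^{\pT_{B_R}}=F/k$, with $F$ the swap, and expressing everything in the projectors $P_\pm=(\idop\pm F)/2$, the operator $(T^{(j)})^{\pT_{B_RB_0B_1}}$ becomes
\begin{equation*}
P_+\ox\big[\tfrac1k X^{\pT_\bfB}+(1-\tfrac1k)Y^{\pT_\bfB}\big] + P_-\ox\big[-\tfrac1k X^{\pT_\bfB}+(1+\tfrac1k)Y^{\pT_\bfB}\big],
\end{equation*}
where the coefficients must be checked carefully. Positivity of both brackets is the two-sided constraint. We introduce $Q^{(j)}$ as a suitably normalized combination of $X^{(j)}$ and $Y^{(j)}$, for instance $Q^{(j)}=\big(X^{(j)}+(k^2-1)Y^{(j)}\big)/k^2$, which is the partial trace of $T^{(j)}$ over $A_RB_R$ divided by $k^2$. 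We then verify that the two brackets turn into $(1\pm k)(Q^{(j)})^{\pT_\bfB}$ versus $(W^{(j)})^{\pT_\bfB}$ after rescaling. This matching of $Q$ to the stated inequalities $(1-k)Q^{\pT}\le W^{\pT}\le(1+k)Q^{\pT}$ is the main bookkeeping obstacle. If the naive choice gives the wrong constants, we will adjust the definition of $Q$, for example to $\tr_{A_RB_R}T/k^2$ or to $Y$-weighted variants, until the ratio lines up.

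Finally we translate conditions ii) and iii). Because $\Phi_k$ and $\idop-\Phi_k$ are linearly independent, condition ii) splits into separate conditions on $\sum_jX^{(j)}$ and on $\sum_jY^{(j)}$. Each is a product with $\pi_{A_1B_1}$, which gives the two marginal constraints for $W$ and $Q$ (as a combination of the two). Condition iii), $W_{A_RB_R}=d_{A_1B_1}\idop$, means $\tr\sum_jX^{(j)}=\tr\sum_jY^{(j)}=d_{A_1B_1}$, which yields the trace constraints. Positivity of $Q$ follows from $X,Y\ge0$.

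The converse direction also has to be checked: given feasible $(W,Q)$, recover $Y\ge0$ from $Q$ and $W$. If $Y=(k^2Q-W)/(k^2-1)$, its positivity needs to come from the PPT inequalities. Otherwise we pick the parametrization in which $Q$ is directly proportional to $Y$, so that $Q\ge0$ is exactly $Y\ge0$. This bijection between feasible sets proves the SDP formula.
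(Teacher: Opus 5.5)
Your strategy is the paper's: twirl the resource ports with $U\ox\overline{U}$, reduce to $T^{(j)}_{k\text{-inj}}=\Phi_k\ox X^{(j)}+(\idop-\Phi_k)\ox Y^{(j)}$, and split the partial transpose along $P_\pm$. Your bracket computation is also correct.

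The gap is that you never fix $Q^{(j)}$, and the candidate you actually propose does not give the stated program. That candidate is $Q^{(j)}=(X^{(j)}+(k^2-1)Y^{(j)})/k^2=\tr_{A_RB_R}T^{(j)}_{k\text{-inj}}/k^2$. Substitute $Y=(k^2Q-X)/(k^2-1)$ into $k$ times your two brackets. You get $k(X^{\pT_\bfB}+kQ^{\pT_\bfB})/(k+1)\ge 0$ and $k(kQ^{\pT_\bfB}-X^{\pT_\bfB})/(k-1)\ge 0$. That is, $-kQ^{\pT_\bfB}\le W^{\pT_\bfB}\le kQ^{\pT_\bfB}$, with constants $\pm k$ rather than $1\pm k$. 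In addition, positivity of $Y$ becomes the extra constraint $W\le k^2Q$, which does not appear in the theorem. So the step you call the main bookkeeping obstacle is left open, and your first attempt at it fails. Saying you will adjust $Q$ until the ratio lines up is not a proof.

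The missing step is to read off your own brackets. Multiplying by $k$, positivity of $(T^{(j)}_{k\text{-inj}})^{\pT_{B_RB_0B_1}}$ is equivalent to $X^{\pT_\bfB}+(k-1)Y^{\pT_\bfB}\ge 0$ and $(k+1)Y^{\pT_\bfB}-X^{\pT_\bfB}\ge 0$. This is exactly $(1-k)Q^{\pT_\bfB}\le W^{\pT_\bfB}\le (1+k)Q^{\pT_\bfB}$ with $W^{(j)}:=X^{(j)}$ and $Q^{(j)}:=Y^{(j)}$, with no rescaling. A multiple $cY$ with $c\neq 1$ would change the constants, and would also violate the trace constraint $\tr(Q^{(0)}+Q^{(1)})=d_{A_1B_1}$ that condition iii) imposes on $\sum_j Y^{(j)}$.

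With this choice, $Q\ge 0$ is just $Y\ge 0$, and the marginal and trace conditions transfer term by term. The converse is immediate: any feasible $(W^{(j)},Q^{(j)})$ defines the tester $\Phi_k\ox W^{(j)}+(\idop-\Phi_k)\ox Q^{(j)}$. This is precisely the paper's argument.

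One small caveat, which the paper also passes over: at $k=1$ you have $\idop-\Phi_1=0$. Your appeal to linear independence of $\Phi_k$ and $\idop-\Phi_k$ then fails, and $Y$ is unconstrained. In that case take $Q^{(j)}:=W^{(j)}$, which satisfies every constraint because $(W^{(j)})^{\pT_\bfB}\ge 0$ and $\tr(W^{(0)}+W^{(1)})=d_{A_1B_1}$.
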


\begin{proof}
Following Definition \ref{def:ent_ass_ave_suc_prob}, we specifically focus on the discrimination between the quantum channels $\cN_{A_0B_0\to A_1B_1}$ and $\cM_{A_0B_0\to A_1B_1}$ with prior probability $\lambda$ and $1-\lambda$. 
Precisely, we have that
\begin{subequations}
\begin{align}
P^{\PPT}_{\suc,e} &(\cN, \cM; k, \lambda)\notag \\
=  \max_{\Pi^{(0)},\Pi^{(1)}}& \ \lambda \tr \left[ \Pi^{(0)}_{A_RB_RA_0A_1B_0B_1} \big( \Phi_k\ox J_{A_0A_1B_0B_1}^{\cN} \big) \right] \notag\\
&\quad + (1-\lambda) \tr \left[ \Pi^{(1)}_{A_RB_R A_0A_1B_0B_1} \big( \Phi_k\ox J_{A_0A_1B_0B_1}^{\cM} \big) \right] \label{eq:proof_objective_func}\\
{\rm s.t.} &\;\; \Pi^{(0)}_{A_RB_R A_0A_1B_0B_1},~\Pi^{(1)}_{A_RB_R A_0A_1B_0B_1} \ge 0,\label{sdp:positivity}\\
&\;\; \Pi^{(0)}_{A_RB_R A_0A_1B_0B_1}+\Pi^{(1)}_{A_RB_R A_0A_1B_0B_1} = (\Pi^{(0)}_{A_RB_R A_0B_0} + \Pi^{(1)}_{A_RB_R A_0B_0})\ox \pi_{A_1B_1},\label{sdp:ns}\\
&\;\; \Pi^{(0)}_{A_RB_R}+\Pi^{(1)}_{A_RB_R} = d_{A_1B_1}\idop_{A_RB_R},\label{sdp:tp}\\
&\;\; (\Pi^{(0)}_{A_RB_R A_0A_1B_0B_1})^{\pT_{B_R B_0B_1}}, ~(\Pi^{(1)}_{A_RB_R A_0A_1B_0B_1})^{\pT_{B_R B_0B_1}} \ge 0,\label{sdp:ppt}
\end{align}
\end{subequations}
where $\{\Pi^{(j)}_{A_RB_R A_0A_1B_0B_1}\}_j$ is a $k$-injectable PPT testers. For any optimal solution $\Pi^{(j)}_{A_RB_R A_0A_1B_0B_1}$ ($j=0,1$), consider
\begin{equation*}
\widehat{\Pi}^{(j)}_{A_RB_R A_0A_1B_0B_1} \coloneqq (U_{A_R}\ox \overline{U}_{B_R}\ox \idop_{A_0A_1B_0B_1}) \Pi^{(j)}_{A_RB_R A_0A_1B_0B_1} (U_{A_R}\ox \overline{U}_{B_R}\ox \idop_{A_0A_1B_0B_1})^{\dagger},
\end{equation*}
where $\cH_{A_R}\cong \cH_{B_R}$ and $U_{A_R} = U_{B_R}$. We note that $\{\widehat{\Pi}^{(j)}_{A_RB_R A_0A_1B_0B_1}\}_j$ satisfies all conditions in Eq.~\eqref{sdp:positivity}~\eqref{sdp:ns}~\eqref{sdp:tp}~\eqref{sdp:ppt}, and yields the same objective value in Eq.~\eqref{eq:proof_objective_func}, since
\begin{equation*}
\begin{aligned}
    &\tr \left[ \widehat{\Pi}^{(j)}_{A_RB_R A_0A_1B_0B_1} \big( \Phi_k\ox J_{A_0A_1B_0B_1}^{\cN} \big) \right]\\
    =& \tr \left[ \Pi^{(j)}_{A_RB_R A_0A_1B_0B_1} \big( (U_{A_R}\ox \overline{U}_{B_R})\Phi_k(U_{A_R}\ox \overline{U}_{B_R})^{\dag} \ox J_{A_0A_1B_0B_1}^{\cN} \big) \right]\\
    =& \tr \left[ \Pi^{(j)}_{A_RB_R A_0A_1B_0B_1} \big( \Phi_k \ox J_{A_0A_1B_0B_1}^{\cN} \big) \right],
\end{aligned}
\end{equation*}
where we have leveraged the cyclicity of the trace in the first equality and the fact that $(U_{A_R}\ox \overline{U}_{B_R})\Phi_k(U_{A_R}\ox \overline{U}_{B_R})^{\dag} = \Phi_k$ for any unitary $U$ in the second equality. Hence, $\{\widehat{\Pi}^{(j)}_{A_RB_R A_0A_1B_0B_1}\}_j$ is also optimal. Furthermore, we note that any convex combination of optimal solutions remains optimal. With these two observations, we conclude that the following operator is also an optimal solution to the SDP
\begin{equation*}
\widetilde{\Pi}^{(j)}_{A_RB_R A_0A_1B_0B_1} \coloneqq \int dU (U_{A_R}\ox \overline{U}_{B_R}\ox \idop_{A_0A_1B_0B_1}) \Pi^{(j)}_{A_RB_R A_0A_1B_0B_1} (U_{A_R}\ox \overline{U}_{B_R}\ox \idop_{A_0A_1B_0B_1})^{\dagger},
\end{equation*}
where the integral is taken over the Haar measure of the unitary group. Notice that $\widetilde{\Pi}^{(j)}_{A_RB_R A_0A_1B_0B_1}$ commutes with all unitaries of the form $U\ox \overline{U}$ on $ \cH_{A_R}\ox \cH_{B_R}$ where $d_{A_R} = d_{B_R} = k$. and the properties of isotropic twirling~\cite{Horodecki1999,Watrous_2018}, the commutant of the group representation $U \mapsto U \ox \overline{U}$ on system $A_R B_R$ is spanned by the orthogonal projectors $\Phi_{A_R B_R}$ and $\idop_{A_R B_R} - \Phi_{A_R B_R}$. Furthermore, note that the unitary group acts trivially on $A_0A_1B_0B_1$, and the global commutant is strictly the tensor product of the local commutants~\cite[Chapter IV, Theorem 5.9]{Takesaki2002}. 
Consequently, we can consider the optimal solution $\widetilde{\Pi}^{(j)}_{A_RB_R A_0A_1B_0B_1}$ with a form
\begin{equation}
\widetilde{\Pi}^{(j)}_{A_RB_R A_0A_1B_0B_1} = \Phi_{A_RB_R} \ox W_{A_0A_1B_0B_1}^{(j)} + (\idop_{A_RB_R}-\Phi_{A_RB_R}) \ox Q_{A_0A_1B_0B_1}^{(j)}
\end{equation}
for some operators $W_{A_0A_1B_0B_1}^{(j)}$ and $Q_{A_0A_1B_0B_1}^{(j)}$. The objective function from Eq.~\eqref{eq:proof_objective_func} is rewritten as
\begin{equation*}
\lambda \tr (W_{A_0A_1B_0B_1}^{(0)}J_{A_0A_1B_0B_1}^{\cN}) + (1-\lambda)\tr (W_{A_0A_1B_0B_1}^{(1)}J_{A_0A_1B_0B_1}^{\cM}).
\end{equation*}
In the following, we will equivalently deduce the constraints on a $k$-injectable PPT tester, i.e., the constraints on $\Pi^{(j)}_{A_RB_R A_0A_1B_0B_1}$, to the constraints on $W_{A_0A_1B_0B_1}^{(j)}$ and $Q_{A_0A_1B_0B_1}^{(j)}$. 
For the positivity constraints in Eq.~\eqref{sdp:positivity}, we have 
\begin{equation}\label{eq:proof_cp_condition}
W_{A_0A_1B_0B_1}^{(j)} \geq 0,~Q_{A_0A_1B_0B_1}^{(j)}\geq 0.
\end{equation}
For the constraint in Eq.~\eqref{sdp:ns}, i.e.,
\begin{equation}
    \Pi^{(0)}_{A_RB_R A_0A_1B_0B_1}+\Pi^{(1)}_{A_RB_R A_0A_1B_0B_1} = (\Pi^{(0)}_{A_RB_R A_0B_0} + \Pi^{(1)}_{A_RB_R A_0B_0})\ox \pi_{A_1B_1},
\end{equation}
we have 
\begin{equation}\label{eq:proof_second_condition}
\begin{aligned}
&W^{(0)}_{A_0A_1B_0B_1}+W^{(1)}_{A_0A_1B_0B_1} = (W^{(0)}_{A_0B_0} + W^{(1)}_{A_0B_0})\ox \pi_{A_1B_1},\\
&Q^{(0)}_{A_0A_1B_0B_1}+Q^{(1)}_{A_0A_1B_0B_1} = (Q^{(0)}_{A_0B_0} + Q^{(1)}_{A_0B_0})\ox \pi_{A_1B_1}.
\end{aligned}
\end{equation}
For the constraint in Eq.~\eqref{sdp:tp}, i.e., $\Pi^{(0)}_{A_RB_R}+\Pi^{(1)}_{A_RB_R} = d_{A_1B_1}\idop_{A_RB_R}$, we have 
\begin{equation}\label{eq:proof_third_condition}
\tr (W_{A_0A_1B_0B_1}^{(0)}+W_{A_0A_1B_0B_1}^{(1)}) = \tr(Q_{A_0A_1B_0B_1}^{(0)}+Q_{A_0A_1B_0B_1}^{(1)}) = d_{A_1B_1}.
\end{equation}
For the PPT constraints in Eq.~\eqref{sdp:ppt}, i.e.,  $(\Pi^{(0)}_{A_RB_R A_0A_1B_0B_1})^{\pT_{B_R B_0B_1}}, ~(\Pi^{(1)}_{A_RB_R A_0A_1B_0B_1})^{\pT_{B_R B_0B_1}} \ge 0$, we denote $P_+$ and $P_-$ as the symmetric and anti-symmetric projections, respectively. From the spectral decomposition, we know that $\Phi_{A_RB_R}^{\pT_{B_R}} = (P_+ - P_-)/k$ and  
\begin{equation*}
\begin{aligned}
\Pi^{\pT_{B_RB_0B_1}} &= (W_{A_0A_1B_0B_1}^{(j)})^{\pT_{B_0B_1}} \ox \Phi_{A_RB_R}^{\pT_{B_R}} + (Q_{A_0A_1B_0B_1}^{(j)})^{\pT_{B_0B_1}}\ox (\idop_{A_RB_R} - \Phi_{A_RB_R})^{\pT_{B_R}}\\
&= (W_{A_0A_1B_0B_1}^{(j)})^{\pT_{B_0B_1}} \ox \frac{P_+ - P_-}{k} + (Q_{A_0A_1B_0B_1}^{(j)})^{\pT_{B_0B_1}} \ox \frac{(k-1)P_+ + (k+1)P_-}{k}\\
&= \Big[ (W_{A_0A_1B_0B_1}^{(j)})^{\pT_{B_0B_1}} + (k-1)(Q_{A_0A_1B_0B_1}^{(j)})^{\pT_{B_0B_1}}\Big] \ox \frac{P_+}{k}\\
&\quad + \Big[-(W_{A_0A_1B_0B_1}^{(j)})^{\pT_{B_0B_1}} +(k+1)(Q_{A_0A_1B_0B_1}^{(j)})^{\pT_{B_0B_1}}\Big] \ox \frac{P_-}{k}.
\end{aligned}
\end{equation*}
Since $P_+$ and $P_-$ are positive and orthogonal to each other, we have $\Pi_{A_RB_R A_0A_1B_0B_1}^{\pT_{B_R B_0B_1}} \geq 0$ if and only if 
\begin{equation}\label{eq:proof_forth_condition}
\begin{aligned}
(W_{A_0A_1B_0B_1}^{(j)})^{\pT_{B_0B_1}} + (k-1)(Q_{A_0A_1B_0B_1}^{(j)})^{\pT_{B_0B_1}} & \ge 0,\\
-(W_{A_0A_1B_0B_1}^{(j)})^{\pT_{B_0B_1}} + (k+1)(Q_{A_0A_1B_0B_1}^{(j)})^{\pT_{B_0B_1}} & \ge 0.
\end{aligned}
\end{equation}
Combining Eq.~\eqref{eq:proof_cp_condition},~\eqref{eq:proof_second_condition},~\eqref{eq:proof_third_condition}, and~\eqref{eq:proof_forth_condition}, the SDP can be simplified to
\begin{equation}
\begin{aligned}
\max &\; \lambda \tr (W_{\bfA\bfB}^{(0)}J_{\bfA\bfB}^{\cN}) + (1-\lambda)\tr (W_{\bfA\bfB}^{(1)}J_{\bfA\bfB}^{\cM})\\
{\rm s.t.}
&\;\; W_{\bfA\bfB}^{(j)} \geq 0,~Q_{\bfA\bfB}^{(j)}\geq 0, ~j \in \{0,1\},\\
&\;\; \tr (W_{\bfA\bfB}^{(0)}+W_{\bfA\bfB}^{(1)}) = \tr(Q_{\bfA\bfB}^{(0)}+Q_{\bfA\bfB}^{(1)}) = d_{A_1B_1},\\
&\;\; W^{(0)}_{\bfA\bfB}+W^{(1)}_{\bfA\bfB} = (W^{(0)}_{A_0B_0}+W^{(1)}_{A_0B_0})\ox \pi_{A_1B_1},\\
&\;\; Q^{(0)}_{\bfA\bfB}+Q^{(1)}_{\bfA\bfB} = (Q^{(0)}_{A_0B_0}+Q^{(1)}_{A_0B_0})\ox \pi_{A_1B_1},\\
&\;\; (1-k)(Q_{\bfA\bfB}^{(j)})^{\pT_{\bfB}} \le (W_{\bfA\bfB}^{(j)})^{\pT_{\bfB}} \le (1+k)(Q_{\bfA\bfB}^{(j)})^{\pT_{\bfB}}.
\end{aligned}
\end{equation}
Hence, we complete the proof.
\end{proof}

To simplify the SDP form of Eq.~\eqref{eq:sdp_k_ppt_inject}, we denote $\rho_{A_0B_0} \coloneqq (W_{A_0B_0}^{(0)} + W_{A_0B_0}^{(1)})/d_{A_1B_1}$ and $\sigma_{A_0B_0} \coloneqq (Q_{A_0B_0}^{(0)} + Q_{A_0B_0}^{(1)})/d_{A_1B_1}$. The variable $W_{A_0B_0}^{(1)}$ and $Q_{A_0B_0}^{(1)}$ can be eliminated by defining $W_{A_0A_1B_0B_1}^{(1)} \coloneqq \rho_{A_0B_0}\ox \idop_{A_1B_1} - W_{A_0A_1B_0B_1}^{(0)}$ and $Q_{A_0A_1B_0B_1}^{(1)} \coloneqq \sigma_{A_0B_0}\ox \idop_{A_1B_1} - Q_{A_0A_1B_0B_1}^{(0)}$. Thus, we have the following corollary.
\begin{corollary}
The maximal $k$-ebit-assisted success probability of discriminating $\cN_{A_0B_0\to A_1B_1}$ and $\cM_{A_0B_0\to A_1B_1}$ via a $k$-injectable PPT tester can be rewritten as 
\begin{equation}\label{sdp:kebit_sucprob}
\begin{aligned}
\max &\; 1-\lambda + \tr \Big[W_{\bfA\bfB}\Big(\lambda J_{\bfA\bfB}^{\cN} - (1-\lambda) J_{\bfA\bfB}^{\cM}\Big)\Big]\\
{\rm s.t.}
&\;\; \rho_{A_0B_0},\sigma_{A_0B_0}\geq 0,~\tr \rho_{A_0B_0} = \tr\sigma_{A_0B_0} = 1,\\
&\;\; 0\leq W_{\bfA\bfB}\leq \rho_{A_0B_0} \ox \idop_{A_1B_1},\\
&\;\; 0\leq Q_{\bfA\bfB}\leq \sigma_{A_0B_0} \ox \idop_{A_1B_1},\\
&\;\; (1-k)Q_{\bfA\bfB}^{\pT_{\bfB}} \le W_{\bfA\bfB}^{\pT_{\bfB}} \le (1+k)Q_{\bfA\bfB}^{\pT_{\bfB}}\\
&\;\; (1-k)(\sigma_{A_0B_0}^{\pT_{B_0}}\ox \idop_{A_1B_1} - Q_{\bfA\bfB}^{\pT_{\bfB}}) \le \rho_{A_0B_0}^{\pT_{B_0}}\ox\idop_{A_1B_1} - W_{\bfA\bfB}^{\pT_{\bfB}} \le (1+k)(\sigma_{A_0B_0}^{\pT_{B_0}}\ox\idop_{A_1B_1} - Q_{\bfA\bfB}^{\pT_{\bfB}}).
\end{aligned}
\end{equation}
\end{corollary}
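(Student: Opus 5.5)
The plan is to start from the SDP of Theorem~\ref{thm:sdp_k_inject_PPT} and perform the change of variables indicated just before the corollary. This will be an exact reformulation: the feasible sets are in bijection and the objective values coincide.

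\textbf{From the theorem to the corollary.} Take a feasible tuple $(W^{(0)},W^{(1)},Q^{(0)},Q^{(1)})$ for Eq.~\eqref{eq:sdp_k_ppt_inject}.
\begin{itemize}
\item Define $\rho_{A_0B_0} := (W^{(0)}_{A_0B_0}+W^{(1)}_{A_0B_0})/d_{A_1B_1}$, and define $\sigma_{A_0B_0}$ analogously from the $Q^{(j)}$.
\item Both are positive, since partial traces of positive operators are positive.
\item Both have unit trace by the trace constraint.
\item The no-signalling constraint, after multiplying $\pi_{A_1B_1}$ by $d_{A_1B_1}$, reads $W^{(0)}+W^{(1)} = \rho_{A_0B_0}\ox\idop_{A_1B_1}$, and similarly $Q^{(0)}+Q^{(1)} = \sigma_{A_0B_0}\ox\idop_{A_1B_1}$.
\end{itemize}
Now set $W := W^{(0)}$ and $Q := Q^{(0)}$.
\begin{itemize}
\item Positivity of $W^{(1)} = \rho\ox\idop - W$ gives $0\le W\le \rho\ox\idop$. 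The same argument gives $0\le Q\le\sigma\ox\idop$.
\item The PPT sandwich for $j=0$ is exactly the constraint on $W^{\pT_{\bfB}}$ and $Q^{\pT_{\bfB}}$.
\item The sandwich for $j=1$ becomes the last constraint. Here I use that partial transpose is linear and $(\rho_{A_0B_0}\ox\idop_{A_1B_1})^{\pT_{\bfB}} = \rho^{\pT_{B_0}}\ox\idop_{A_1B_1}$.
\end{itemize}
For the objective, use $\tr(J^{\cM})$ computed against $\rho\ox\idop$:
\[
\tr\big[(\rho_{A_0B_0}\ox\idop_{A_1B_1})J^{\cM}\big] = \tr\big[\rho_{A_0B_0}\,\tr_{A_1B_1}J^{\cM}\big] = \tr\big[\rho\,\idop_{A_0B_0}\big] = 1,
\]
because $\cM$ is trace preserving, so $\tr_{A_1B_1}J^{\cM} = \idop_{A_0B_0}$. (The transpose convention is harmless, since the trace of $\rho^{\pT}$ is also $1$.) Therefore $(1-\lambda)\tr(W^{(1)}J^{\cM}) = (1-\lambda) - (1-\lambda)\tr(WJ^{\cM})$, and the objective becomes the stated one.

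\textbf{From the corollary back to the theorem.} Given a feasible $(\rho,\sigma,W,Q)$ for the corollary, define
\[
W^{(0)} = W,\quad W^{(1)} = \rho\ox\idop - W,\quad Q^{(0)} = Q,\quad Q^{(1)} = \sigma\ox\idop - Q,
\]
and check the constraints of Eq.~\eqref{eq:sdp_k_ppt_inject}:
\begin{itemize}
\item Positivity follows from the operator inequalities $0\le W\le\rho\ox\idop$ and $0\le Q\le\sigma\ox\idop$.
\item The sums are $W^{(0)}+W^{(1)} = \rho\ox\idop$, so their trace is $d_{A_1B_1}$. The marginal on $A_0B_0$ equals $d_{A_1B_1}\rho$, so $\rho\ox\idop = (W^{(0)}_{A_0B_0}+W^{(1)}_{A_0B_0})\ox\pi_{A_1B_1}$. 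The same holds for the $Q^{(j)}$ with $\sigma$.
\item The two PPT sandwiches are exactly the corresponding constraints of the corollary.
\end{itemize}
The objective values agree by the same computation as in the forward direction, so the two optima are equal.

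\textbf{Expected obstacle.} None of these steps is deep. The only point needing care is the bookkeeping of the normalization $d_{A_1B_1}$ between $\pi_{A_1B_1}$ and $\idop_{A_1B_1}$. The other place to be careful is rewriting the complementary PPT condition with $\rho^{\pT_{B_0}}\ox\idop$; this requires that the transpose on $B_1$ acts trivially on $\idop_{B_1}$.
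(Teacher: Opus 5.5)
Your proposal is correct and follows the same route as the paper. You set $\rho_{A_0B_0}$ and $\sigma_{A_0B_0}$ to the normalized $A_0B_0$-marginals of $W^{(0)}+W^{(1)}$ and $Q^{(0)}+Q^{(1)}$, then eliminate $W^{(1)}=\rho_{A_0B_0}\ox\idop_{A_1B_1}-W^{(0)}$ and $Q^{(1)}=\sigma_{A_0B_0}\ox\idop_{A_1B_1}-Q^{(0)}$ in the SDP of Theorem~\ref{thm:sdp_k_inject_PPT}. Beyond the paper's one-paragraph sketch, you also verify the two-way correspondence of feasible sets and derive the constant $1-\lambda$ from $\tr[(\rho_{A_0B_0}\ox\idop_{A_1B_1})J^{\cM}_{\bfA\bfB}]=1$, which follows from trace preservation of $\cM$; the paper leaves this step implicit.
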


\begin{remark}
We make two remarks on Theorem~\ref{thm:sdp_k_inject_PPT}. Firstly, when $\cN_{A_0B_0\to A_1B_1}$ and $\cM_{A_0B_0\to A_1B_1}$ are replacer channels, e.g., $\cN_{A_0B_0\to A_1B_1}(X) = \tr (X)\cdot \rho_{A_1B_1}$ and $\cM_{A_0B_0\to A_1B_1}(X) = \tr (X)\cdot \sigma_{A_1B_1}$ for any $X\in\cL(A_0B_0)$, our result here reduces to the case of bipartite quantum state discrimination. The SDP~\eqref{eq:sdp_k_ppt_inject} recovers the one in~\cite[Proposition 5]{Zhu2025}:
\begin{equation*}
\begin{aligned}
P^{\PPT}_{\suc,e}(\rho_{A_1B_1}, \sigma_{A_1B_1}; k, \lambda) = \max &\; \lambda \tr (W_{A_1B_1}^{(0)}\rho_{A_1B_1} + (1-\lambda)\tr (W_{A_1B_1}^{(1)}\sigma_{A_1B_1})\\
{\rm s.t.}
&\;\; W_{A_1B_1}^{(j)} \geq 0,~Q_{A_1B_1}^{(j)}\geq 0, ~j \in \{0,1\},\\
&\;\; W^{(0)}_{A_1B_1}+W^{(1)}_{A_1B_1} = \idop_{A_1B_1},\\
&\;\; Q^{(0)}_{A_1B_1}+Q^{(1)}_{A_1B_1} = \idop_{A_1B_1},\\
&\;\; (1-k)(Q_{A_1B_1}^{(j)})^{\pT_{B_1}} \le (W_{A_1B_1}^{(j)})^{\pT_{B_1}} \le (1+k)(Q_{A_1B_1}^{(j)})^{\pT_{B_1}}.
\end{aligned}
\end{equation*}
Secondly, when $B_0$ and $A_1$ are one-dimensional trivial subsystems, our result reduces to the case of point-to-point channel discrimination. The $k$-ebit-assisted average success probability is given as follows, where we abbreviate $A_0$ and $B_1$ by $A$ and $B$, respectively.
\begin{equation}\label{sdp:kebit_sucprob_pp}
\begin{aligned}
\max &\; 1-\lambda + \tr \Big[W_{AB}\Big(\lambda J_{AB}^{\cN} - (1-\lambda) J_{AB}^{\cM}\Big)\Big]\\
{\rm s.t.}
&\;\; \rho_{A},\sigma_{A}\geq 0,~\tr \rho_{A} = \tr\sigma_{A} = 1,\\
&\;\; 0\leq W_{AB}\leq \rho_{A} \ox \idop_{B},\\
&\;\; 0\leq Q_{AB}\leq \sigma_{A} \ox \idop_{B},\\
&\;\; (1-k)Q_{AB}^{\pT_B} \le W_{AB}^{\pT_B} \le (1+k)Q_{AB}^{\pT_B}\\
&\;\; (1-k)(\sigma_{A}\ox \idop_{B} - Q_{AB}^{\pT_{\bfB}}) \le \rho_{A}\ox\idop_{B} - W_{AB}^{\pT_{\bfB}} \le (1+k)(\sigma_A\ox\idop_{B} - Q_{AB}^{\pT_B}).
\end{aligned}
\end{equation}
When $B_0$ is a one-dimensional trivial subsystem, we obtain maximal success probability of discriminating broadcasting channels $\cN_{A_0\to A_1B_1}$ and $\cM_{A_0\to A_1B_1}$.
\begin{equation*}
\begin{aligned}
\max &\; 1-\lambda + \tr \Big[W_{\bfA B_1}\Big(\lambda J_{\bfA B_1}^{\cN} - (1-\lambda) J_{\bfA B_1}^{\cM}\Big)\Big]\\
{\rm s.t.}
&\;\; \rho_{A_0},\sigma_{A_0}\geq 0,~\tr \rho_{A_0} = \tr\sigma_{A_0} = 1,\\
&\;\; 0\leq W_{\bfA B_1}\leq \rho_{A_0} \ox \idop_{A_1B_1},\\
&\;\; 0\leq Q_{\bfA B_1}\leq \sigma_{A_0} \ox \idop_{A_1B_1},\\
&\;\; (1-k)Q_{\bfA B_1}^{\pT_{B_1}} \le W_{\bfA B_1}^{\pT_{\bfB}} \le (1+k)Q_{\bfA B_1}^{\pT_{B_1}}\\
&\;\; (1-k)(\sigma_{A_0}\ox \idop_{A_1B_1} - Q_{\bfA B_1}^{\pT_{B_1}}) \le \rho_{A_0}\ox\idop_{A_1B_1} - W_{\bfA B_1}^{\pT_{B_1}} \le (1+k)(\sigma_{A_0}\ox\idop_{A_1B_1} - Q_{\bfA B_1}^{\pT_{B_1}}).
\end{aligned}
\end{equation*}
\end{remark}

Building on the SDP formulation of the entanglement-assisted average success probability established in Theorem~\ref{thm:sdp_k_inject_PPT}, the one-shot entanglement cost is given by the minimal integer $k$ for which the optimal value of SDP~\eqref{sdp:kebit_sucprob} equals the diamond-norm distance, shown in the following theorem. Essentially, we seek the minimum PPT entanglement cost of a global tester that can achieve the optimal discrimination. Similar ideas have been used to study the entanglement cost of quantum channel simulation~\cite{Wang2023}.

\begin{theorem}\label{thm:one_shot_ent_cost}
Let $\cN_{A_0B_0\to A_1B_1}$ and $\cM_{A_0B_0\to A_1B_1}$ be two bipartite quantum channels given with prior probabilities $\lambda$ and $1-\lambda$, respectively, where $\lambda\in(0,1)$. The one-shot PPT entanglement cost of minimum-error discrimination is given by
\begin{equation*}
\begin{aligned}
E_{c,\PPT}^{(1)}\big(&\cN_{A_0B_0\to A_1B_1}, \cM_{A_0B_0\to A_1B_1};\lambda\big) \\
=\log\min &\; k\\
{\rm s.t.}
&\;\; \rho_{A_0B_0},\sigma_{A_0B_0}\geq 0,~\tr \rho_{A_0B_0} = \tr\sigma_{A_0B_0} = 1,\\
&\;\; 0\leq W_{\bfA\bfB}\leq \rho_{A_0B_0} \ox \idop_{A_1B_1},\\
&\;\; 0\leq Q_{\bfA\bfB}\leq \sigma_{A_0B_0} \ox \idop_{A_1B_1},\\
&\;\; (1-k)Q_{\bfA\bfB}^{\pT_{\bfB}} \le W_{\bfA\bfB}^{\pT_{\bfB}} \le (1+k)Q_{\bfA\bfB}^{\pT_{\bfB}},\\
&\;\; (1-k)(\sigma_{A_0B_0}^{\pT_{B_0}}\ox \idop_{A_1B_1} - Q_{\bfA\bfB}^{\pT_{\bfB}}) \le \rho_{A_0B_0}^{\pT_{B_0}}\ox\idop_{A_1B_1} - W_{\bfA\bfB}^{\pT_{\bfB}} \le (1+k)(\sigma_{A_0B_0}^{\pT_{B_0}}\ox\idop_{A_1B_1} - Q_{\bfA\bfB}^{\pT_{\bfB}}),\\
&\;\; \tr \Big[W_{\bfA\bfB}(\lambda J_{\bfA\bfB}^{\cN} - (1-\lambda) J_{\bfA\bfB}^{\cM})\Big] = \frac{1}{2}\big\|\lambda \cN_{A_0B_0\to A_1B_1}- (1-\lambda) \cM_{A_0B_0\to A_1B_1}\big\|_{\Diamond}.
\end{aligned}
\end{equation*}
\end{theorem}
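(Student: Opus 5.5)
The argument is a direct reduction to the definition of the one-shot cost, using the Corollary (SDP~\eqref{sdp:kebit_sucprob}) as the characterization of $P^{\PPT}_{\suc,e}(\cN,\cM;k,\lambda)$ for each fixed $k$. By definition, $E^{(1)}_{C,\PPT}$ is $\log$ of the least $k\in\NN_+$ with $P^{\PPT}_{\suc,e}(\cN,\cM;k,\lambda)=P_{\suc}$. By Eq.~\eqref{eq:diamond_norm_as_success_probability},
\[
P_{\suc}=\tfrac12\big(1+\|\lambda\cN-(1-\lambda)\cM\|_\Diamond\big).
\]
The objective of~\eqref{sdp:kebit_sucprob} is $1-\lambda+\tr[W(\lambda J^{\cN}-(1-\lambda)J^{\cM})]$. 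Since $\frac12(1+x)-(1-\lambda)=\lambda-\frac12+\frac x2$, I first reconcile the constant in the last constraint of the statement with this identity. I will check carefully that the diamond-norm expression matches, interpreting the right-hand side via the global SDP~\eqref{Eq:Diamond} with weights $\lambda,1-\lambda$, namely as $P_{\suc}-(1-\lambda)$. If the constants differ, I will simply read the last constraint as ``objective $=P_{\suc}$''.

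The key steps are as follows.

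\textbf{(1) The optimum never exceeds $P_{\suc}$.} Every $k$-injectable PPT tester, with $\Phi_k$ injected, induces an ordinary tester on $A_0A_1B_0B_1$ satisfying~\eqref{eq:1tester_cond}. Hence $P^{\PPT}_{\suc,e}(\cN,\cM;k,\lambda)\le P_{\suc}$ for every $k$. Directly in the SDP, the constraints $0\le W\le\rho\ox\idop$ with $\tr\rho=1$ are exactly the feasible set of~\eqref{Eq:Diamond}, so the objective is at most its global value.

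\textbf{(2) Equality is attained exactly when the last constraint holds.} Consequently $P^{\PPT}_{\suc,e}(k)=P_{\suc}$ if and only if there exists a feasible $(\rho,\sigma,W,Q)$ of~\eqref{sdp:kebit_sucprob} at that $k$ whose objective reaches the global value. This is precisely the added last equality constraint. Optimality is attained because the feasible set is compact, as all variables are bounded by states tensored with the identity. So the optimization in the statement, minimizing $k$ over the union of the feasible sets, returns exactly the minimal $k$ in the definition, and taking $\log$ gives the claim.

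\textbf{(3) The minimum exists.} I will argue that the feasible set in $k$ is nonempty and upward closed. For $k\ge\min(d_{A_0B_0}d_{A_1B_1})$, or via teleportation, global optimal testers become implementable. More simply, in the SDP take $Q=\sigma\ox\idop$ with $\sigma=\rho$ and check the two sandwich inequalities for large $k$; this may need $\rho$ of full rank, which is the delicate part. Monotonicity in $k$ follows since the constraint $(1-k)Q^{\pT}\le W^{\pT}\le(1+k)Q^{\pT}$ only loosens as $k$ grows when $Q^{\pT}$-terms are handled. That in turn is best argued operationally: a larger maximally entangled state can be locally converted to a smaller one (LOCC, hence PPT). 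Upward closedness is not strictly needed for the statement, only existence of some feasible $k$.

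\textbf{Main obstacle.} The hardest step is the existence of a finite feasible $k$, i.e.\ the sandwich inequalities at large $k$. I would try first the operational route: with $k=d_{A_0}d_{A_1}$ (or $d_{B_0}d_{B_1}$), Bob teleports his systems to Alice using the $\Phi_k$. This realizes any global tester by LOCC and hence by a $k$-injectable PPT tester. It guarantees that the min is over a nonempty set.
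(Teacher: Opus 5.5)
Your argument is correct, and its core matches the paper's justification. The paper presents this theorem as an immediate consequence of the definition of the cost and the Corollary (SDP~\eqref{sdp:kebit_sucprob}): the constraints $0\le W\le\rho\ox\idop_{A_1B_1}$, $\tr\rho=1$ already cap the objective at $P_{\suc}$, and one appends the equality constraint.

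You should make your hedge about the constant definite. By your own identity, the right-hand side of the last constraint must be $P_{\suc}-(1-\lambda)=\lambda-\tfrac12+\tfrac12\|\lambda\cN-(1-\lambda)\cM\|_{\Diamond}$. This agrees with the printed $\tfrac12\|\lambda\cN-(1-\lambda)\cM\|_{\Diamond}$ only at $\lambda=\tfrac12$. For $\lambda<\tfrac12$ the printed constraint is infeasible at every $k$. For $\lambda>\tfrac12$, any $W$ meeting it has objective $P_{\suc}-(\lambda-\tfrac12)<P_{\suc}$. So reading it as ``objective $=P_{\suc}$'', with $k$ ranging over $\NN_+$, is necessary, not optional.

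Where you genuinely differ from the paper is the existence of a feasible $k$.

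\textbf{Your SDP-level attempt.} The choice $Q=\sigma\ox\idop$ fails for every $k$. It makes $\sigma^{\pT_{B_0}}\ox\idop-Q^{\pT_{\bfB}}=0$, and the second sandwich then forces $W=\rho\ox\idop$.

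\textbf{The paper's route.} Proposition~\ref{thm:upperbound} takes $\sigma=\pi_{A_0B_0}$ and $Q=\idop_{\bfA\bfB}/(2d_{A_0B_0})$. Then both $Q^{\pT_{\bfB}}$ and $(\sigma\ox\idop-Q)^{\pT_{\bfB}}$ are strictly positive multiples of the identity. Bounding the spectrum of the partially transposed optimal global tester gives $k\le 2d_{A_0B_0}^2d_{A_1B_1}-1$, with no rank condition on $\rho$.

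\textbf{Your teleportation route.} This is valid and sharper. Centralizing at Alice means teleporting $B_0$ to Bob before the channel and $B_1$ back afterwards. That consumes $\Phi_{d_{B_0}}\ox\Phi_{d_{B_1}}\cong\Phi_{d_{B_0}d_{B_1}}$, while centralizing at Bob costs $d_{A_0}d_{A_1}$; your sentence pairs these the wrong way round. Every classical transcript contributes a product of positive operators on $A_RA_0A_1$ and $B_RB_0B_1$. Hence the protocol is a $k$-injectable SEP, hence PPT, tester satisfying (ii)--(iii). This gives $E^{(1)}_{C,\PPT}\le\log\min\{d_{A_0}d_{A_1},d_{B_0}d_{B_1}\}$, which is tight for the Werner--Holevo pair and also bounds the SEP and LOCC costs. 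The paper's construction instead stays purely within the SDP but gives a much weaker bound.

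\textbf{Monotonicity.} Upward closedness in $k$ needs no operational argument. Adding the two halves of $(1-k)Q^{\pT_{\bfB}}\le W^{\pT_{\bfB}}\le(1+k)Q^{\pT_{\bfB}}$ gives $Q^{\pT_{\bfB}}\ge0$, and similarly for the complementary pair. So both sandwiches only loosen as $k$ grows.
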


Notably, the optimization problem in Theorem~\ref{thm:one_shot_ent_cost} is not a valid SDP since both $k$ and $Q_{\bfA\bfB}$ are variables and there are bilinear constraints like $W_{\bfA\bfB}^{\pT_{\bfB}} \le (1+k)Q_{\bfA\bfB}^{\pT_{\bfB}}$. However, the cost can be computed via an SDP hierarchy by discretely increasing $k$. For the convergence step of the hierarchy, we have the following upper bound on the one-shot PPT entanglement cost of minimum-error discrimination of two bipartite quantum channels.

\begin{proposition}\label{thm:upperbound}
Let $\cN_{A_0B_0\to A_1B_1}$ and $\cM_{A_0B_0\to A_1B_1}$ be two bipartite quantum channels given with prior probabilities $\lambda$ and $1-\lambda$, respectively, where $\lambda\in(0,1)$. The one-shot PPT entanglement cost of minimum-error discrimination satisfies
\begin{equation}
    E_{c,\PPT}^{(1)}\big(\cN_{A_0B_0\to A_1B_1},\cM_{A_0B_0\to A_1B_1};\lambda\big) \leq \log \big(2d_{A_0B_0}^2 d_{A_1B_1} - 1\big).
\end{equation}
\end{proposition}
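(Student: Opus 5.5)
The plan is to build, for the explicit value $k_0 \coloneqq 2d_{A_0B_0}^2 d_{A_1B_1}-1$, a feasible point of the SDP~\eqref{sdp:kebit_sucprob} whose objective equals the global optimum $\frac12(1+\|\lambda\cN-(1-\lambda)\cM\|_\Diamond)$. By Theorem~\ref{thm:one_shot_ent_cost} this gives $E^{(1)}_{c,\PPT}\le \log k_0$.

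\textbf{Step 1: fix the global part.} Let $(W_{\bfA\bfB},\rho_{A_0B_0})$ be an optimal solution of the unrestricted diamond-norm SDP~\eqref{Eq:Diamond}, written with the bipartite input $A_0B_0$ and output $A_1B_1$ (with priors $\lambda,1-\lambda$). This pair already satisfies $0\le W\le \rho\ox\idop_{A_1B_1}$ and attains the last constraint of Theorem~\ref{thm:one_shot_ent_cost}. It remains to choose the auxiliary variables $(\sigma,Q)$ so that the two PPT-type sandwich constraints hold.

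\textbf{Step 2: choose a maximally uninformative $Q$.} I will take
\[
\sigma_{A_0B_0}=\pi_{A_0B_0},\qquad Q_{\bfA\bfB}=\tfrac12\,\sigma_{A_0B_0}\ox\idop_{A_1B_1}.
\]
With this choice both $Q^{\pT_\bfB}$ and $\sigma^{\pT_{B_0}}\ox\idop-Q^{\pT_\bfB}$ equal $\frac{1}{2d_{A_0B_0}}\idop$, which is a positive multiple of the identity. Consequently the constraints only become weaker as $k$ grows, so monotonicity in $k$ is automatic. The sandwich constraints then reduce to the spectral conditions
\[
\lambda_{\max}(X^{\pT_\bfB})\le \frac{k+1}{2d_{A_0B_0}},\qquad \lambda_{\min}(X^{\pT_\bfB})\ge -\frac{k-1}{2d_{A_0B_0}},
\]
required for both $X=W$ and $X=\rho\ox\idop-W$. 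Both of these are positive semidefinite operators with trace at most $d_{A_1B_1}$.

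\textbf{Step 3: spectral estimates (the main step).} For $X\ge0$ I will use two bounds.
\begin{itemize}
\item Upper bound: $\|X^{\pT_\bfB}\|_\infty\le\|X^{\pT_\bfB}\|_2=\|X\|_2\le \tr X$, since the partial transpose preserves the Hilbert--Schmidt norm.
\item Lower bound: the known fact that the partial transpose of a state has all eigenvalues at least $-\frac12$, which gives $\lambda_{\min}(X^{\pT_\bfB})\ge -\frac12\tr X$. I expect this to be the delicate point. If a self-contained argument is needed, I would derive it from a negative eigenvector $\ket{v}$ by writing $\bra{v}X^{\pT}\ket{v}$ via the swap-type decomposition of $\ketbra{v}{v}^{\pT}$, whose eigenvalues lie in $[-\frac12,1]$.
\end{itemize}
Plugging in $\tr X\le d_{A_1B_1}$, it suffices that $k+1\ge 2d_{A_0B_0}d_{A_1B_1}$ and $k-1\ge d_{A_0B_0}d_{A_1B_1}$.

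\textbf{Step 4: conclude.} Both inequalities hold for $k=2d_{A_0B_0}d_{A_1B_1}-1$ whenever $d_{A_0B_0}d_{A_1B_1}\ge 2$; the trivial case $d_{A_0B_0}d_{A_1B_1}=1$ is handled by inspection. Since $d_{A_0B_0}\le d_{A_0B_0}^2$ and feasibility is monotone in $k$ by Step~2, the value $k_0=2d_{A_0B_0}^2d_{A_1B_1}-1$ is also feasible. Theorem~\ref{thm:one_shot_ent_cost} then yields the claimed bound. If the $-\frac12$ lower bound turns out to be unavailable, the cruder estimate $|\lambda(X^{\pT})|\le\tr X$ in Step~3 still gives $k=2d_{A_0B_0}d_{A_1B_1}+1$. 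This is again at most $k_0$ once $d_{A_0B_0}\ge2$, leaving only the case $d_{A_0B_0}=1$ to be treated with the sharper estimate.
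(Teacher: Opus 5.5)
Your proposal is correct and follows essentially the same route as the paper: $W$ comes from a globally optimal tester, $Q=\idop_{\bfA\bfB}/(2d_{A_0B_0})$ is exactly the paper's choice $Q^{(0)}=Q^{(1)}=\idop_{\bfA\bfB}/(2d_{A_0B_0})$, and the spectra of the partial transposes are controlled by the fact that pure-state partial transposes have eigenvalues in $[-\tfrac12,1]$. Your Step~3 in fact yields the sharper sufficient level $k=2d_{A_0B_0}d_{A_1B_1}-1$, which the paper's intermediate bounds also imply before it switches to the weaker rank bound $r_j\le d_{A_0B_0}d_{A_1B_1}$; your monotonicity remark then correctly transfers this to the stated value $k_0$.
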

\begin{proof}
Suppose $\big\{T^{(0)}_{\bfA\bfB},T^{(1)}_{\bfA\bfB}\big\}$ is an optimal tester for discriminating $\cN_{A_0B_0\to A_1B_1}$ and $\cM_{A_0B_0\to A_1B_1}$ as formulated in Eq.~\eqref{eq:1test_prob}, and $\mathrm{rank}\big(T^{(0)}_{\bfA\bfB}\big) = r_0, \mathrm{rank}\big(T^{(1)}_{\bfA\bfB}\big) = r_1$. Since $0\leq T^{(j)}_{\bfA\bfB} \leq \rho_{A_0B_0}\ox\idop_{A_1B_1}$, we have that $\big\|T^{(j)}_{\bfA\bfB}\big\|_{\infty}\leq \big\|\rho_{A_0B_0}\ox\idop_{A_1B_1}\big\|_{\infty} = 1$. Thus, we can write 
\begin{equation}
    T^{(j)}_{\bfA\bfB} = \sum_{l=1}^{r_j}\lambda_l^{(j)} \ketbra{\psi^{(j)}_l}{\psi^{(j)}_l}_{\bfA\bfB}, 
\end{equation}
where $\lambda_l^{(j)}\in(0,1]$ for all $l,j$, and have $\sum_{l=1}^{r_j}\lambda_l^{(j)}\leq d_{A_1B_1}$. We now bound the extremal eigenvalues of the partially transposed operator $(T^{(j)}_{\bfA\bfB})^{\pT_{\bfB}}$. Consider
\begin{equation}
\lambda_{\min}\Big((T^{(j)}_{\bfA\bfB})^{\pT_{\bfB}}\Big) \geq \sum_{l=1}^{r_j}\lambda_l^{(j)} \lambda_{\min}\Big(\ketbra{\psi^{(j)}_l}{\psi^{(j)}_l}_{\bfA\bfB}^{\pT_{\bfB}}\Big) \geq -\frac{1}{2}\sum_{l=1}^{r_j}\lambda_l^{(j)} \geq \max\left\{-\frac{d_{A_1B_1}}{2},-\frac{r_j}{2}\right\},
\end{equation}
where we have denoted $\lambda_{\min}(X)$ as the minimum eigenvalue of $X$ and have used the fact that $\lambda_{\min}(\sum_j A_j)\geq \sum_j\lambda_{\min}(A_j)$ for any Hermitian matrices $A_j$, and $\ketbra{\psi}{\psi}_{\bfA\bfB}^{\pT_{\bfB}}\in[-1/2,1]$~\cite{Rana2013}. Similarly, we have that
\begin{equation}
\lambda_{\max}\Big((T^{(j)}_{\bfA\bfB})^{\pT_{\bfB}}\Big) \leq \sum_{l=1}^{r_j}\lambda_l^{(j)} \lambda_{\max}\Big(\ketbra{\psi^{(j)}_l}{\psi^{(j)}_l}_{\bfA\bfB}^{\pT_{\bfB}}\Big) \leq \sum_{l=1}^{r_j}\lambda_l^{(j)} \leq \min\{d_{A_1B_1},r_j\}.
\end{equation}
Therefore, we have that
\begin{equation*}
    \max\left\{-d_{A_1B_1}d_{A_0B_0},-r_jd_{A_0B_0}\right\} \frac{\idop_{A_0B_0A_1B_1}}{2d_{A_0B_0}} \leq (T^{(j)}_{\bfA\bfB})^{\pT_{\bfB}} \leq \min\{2d_{A_1B_1}d_{A_0B_0},2r_jd_{A_0B_0}\} \frac{\idop_{A_0B_0A_1B_1}}{2d_{A_0B_0}}.
\end{equation*}
Since the rank $r_j$ is upper bounded by the full Hilbert space dimension $d_{A_0B_0}d_{A_1B_1}$, let $k_{\max} = 2d_{A_0B_0}^2 d_{A_1B_1} - 1$. By setting
\begin{equation*}
    W_{\bfA\bfB}^{(0)} = T^{(0)}_{\bfA\bfB}, \quad W_{\bfA\bfB}^{(1)} = T^{(1)}_{\bfA\bfB}, \quad Q_{\bfA\bfB}^{(0)} = Q_{\bfA\bfB}^{(1)} = \frac{\idop_{\bfA\bfB}}{2d_{A_0B_0}},
\end{equation*}
the bounding constraints $(1-k_{\max})Q_{\bfA\bfB}^{\pT_{\bfB}} \le W_{\bfA\bfB}^{\pT_{\bfB}} \le (1+k_{\max})Q_{\bfA\bfB}^{\pT_{\bfB}}$ are universally satisfied. Thus, this assignment provides a universally feasible solution to the SDP in Eq.~\eqref{eq:sdp_k_ppt_inject}, establishing the upper bound
\begin{equation*}
    E_{c,\PPT}^{(1)}\big(\cN_{A_0B_0\to A_1B_1},\cM_{A_0B_0\to A_1B_1};\lambda\big) \leq \log \big(2d_{A_0B_0}^2 d_{A_1B_1} - 1\big).
\end{equation*}
\end{proof}

\begin{corollary}
For two bipartite quantum channels $\cN_{A_0B_0\to A_1B_1}$ and $\cM_{A_0B_0\to A_1B_1}$ given with prior probability $\lambda$ and $1-\lambda$, respectively, where $\lambda \in (0,1)$, there exists an SDP hierarchy, parameterized by the entanglement dimension $k$, for the one-shot PPT entanglement cost of minimum-error discrimination. This hierarchy exactly converges at level $k\leq 2d_{A_0B_0}^2 d_{A_1B_1} - 1$.
\end{corollary}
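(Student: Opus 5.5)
The corollary combines Theorem~\ref{thm:sdp_k_inject_PPT} (more precisely its reformulation in SDP~\eqref{sdp:kebit_sucprob}) with Proposition~\ref{thm:upperbound}. The plan has three steps.

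\textbf{Step 1: each level is an SDP.} Fix $k\in\NN_+$. By Theorem~\ref{thm:sdp_k_inject_PPT}, $P^{\PPT}_{\suc,e}(\cN,\cM;k,\lambda)$ is the optimal value of \eqref{eq:sdp_k_ppt_inject}. With $k$ a fixed constant, the constraints $(1-k)(Q^{(j)})^{\pT_{\bfB}}\le (W^{(j)})^{\pT_{\bfB}}\le (1+k)(Q^{(j)})^{\pT_{\bfB}}$ are linear matrix inequalities in the variables, since partial transpose is linear. All other constraints are affine equalities or positivity constraints, and the objective is linear. So level $k$ of the hierarchy is a genuine SDP. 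Its value is compared with the diamond-norm value $P_{\suc}=\frac12(1+\|\lambda\cN-(1-\lambda)\cM\|_\Diamond)$, which is itself computable by the SDP~\eqref{Eq:Diamond}.

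\textbf{Step 2: the values are nondecreasing in $k$ and bounded above by $P_{\suc}$.}
\begin{itemize}
\item \emph{Monotonicity.} Suppose $(W,Q)$ is feasible at level $k$ and $k'\ge k$. The set of $X$ with $(1-k)Q^{\pT}\le X\le(1+k)Q^{\pT}$ grows with $k$ because $Q^{\pT_{\bfB}}$ is not fixed in sign, so I will avoid that argument. Instead I will use the operational fact that a $k$-injectable tester can be embedded as a $k'$-injectable one with $k'\ge k$. This holds provided $\Phi_{k'}$ can be converted to $\Phi_k$ by LOCC, hence by PPT operations. Composing that conversion with the tester preserves the PPT and normalization conditions of Definition~\ref{def:k_injectable_PPT_def}.
\item \emph{Upper bound.} Every $k$-injectable PPT tester with $\Phi_k$ injected induces a valid global tester on $A_0B_0A_1B_1$: conditions ii) and iii) give the normalization \eqref{eq:1tester_cond}. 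Hence $P^{\PPT}_{\suc,e}(k)\le P_{\suc}$ for all $k$.
\end{itemize}

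\textbf{Step 3: the hierarchy saturates at $k_{\max}$.} Proposition~\ref{thm:upperbound} shows that at $k_{\max}=2d_{A_0B_0}^2d_{A_1B_1}-1$ the globally optimal tester, paired with $Q^{(j)}=\idop/(2d_{A_0B_0})$, is feasible and attains $P_{\suc}$. Combined with Step 2, this gives $P^{\PPT}_{\suc,e}(k)=P_{\suc}$ for all $k\ge k_{\max}$. Therefore the minimal $k$ in the definition of $E^{(1)}_{C,\PPT}$ is found by solving the SDPs for $k=1,2,\dots$ and stopping at the first $k$ whose value equals $P_{\suc}$, and this happens at some $k\le k_{\max}$.

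\textbf{Main obstacle.} The delicate point is monotonicity in $k$. Without it, "converges" would mean only that some level $\le k_{\max}$ attains optimality, which is all the cost definition needs. I will state that weaker claim as the core of the corollary, and add monotonicity via the PPT conversion $\Phi_{k'}\to\Phi_k$ as a supporting remark.
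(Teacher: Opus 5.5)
Your proposal is correct and follows essentially the paper's route. At each fixed $k$ the $k$-injectable PPT problem is an SDP, and Proposition~\ref{thm:upperbound} guarantees that the first level attaining the global optimum satisfies $k\le 2d_{A_0B_0}^2d_{A_1B_1}-1$. The paper encodes each level as a disparity SDP minimizing $t$ against the target $\tfrac12\|\lambda\cN-(1-\lambda)\cM\|_{\Diamond}$. You instead compare the optimal value of SDP~\eqref{eq:sdp_k_ppt_inject} directly with $P_{\suc}$. Your criterion is the safer one, since that target equals $P_{\suc}-(1-\lambda)$ only when $\lambda=\tfrac12$.

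The monotonicity you flag as the main obstacle is immediate inside the SDP. Adding the two sides of $(1-k)(Q^{(j)}_{\bfA\bfB})^{\pT_{\bfB}}\le (W^{(j)}_{\bfA\bfB})^{\pT_{\bfB}}\le(1+k)(Q^{(j)}_{\bfA\bfB})^{\pT_{\bfB}}$ gives $2k(Q^{(j)}_{\bfA\bfB})^{\pT_{\bfB}}\ge0$. So these operators are automatically positive semidefinite on the feasible set, and every point feasible at level $k$ stays feasible at all $k'\ge k$. Your LOCC-conversion argument via $\Phi_{k'}\to\Phi_k$ is valid but unnecessary and can be dropped.
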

\begin{proof}
Theorem~\ref{thm:one_shot_ent_cost} implies that the following SDP can measure the disparity
between the average success probability via global testers and the $k$-injectable PPT testers, for any fixed $k\in\NN_+$.
\begin{equation*}
\begin{aligned}
\min &\; t\\
{\rm s.t.}
&\;\; \rho_{A_0B_0},\sigma_{A_0B_0}\geq 0,~\tr \rho_{A_0B_0} = \tr\sigma_{A_0B_0} = 1,\\
&\;\; 0\leq W_{\bfA\bfB}\leq \rho_{A_0B_0} \ox \idop_{A_1B_1},\\
&\;\; 0\leq Q_{\bfA\bfB}\leq \sigma_{A_0B_0} \ox \idop_{A_1B_1},\\
&\;\; (1-k)Q_{\bfA\bfB}^{\pT_{\bfB}} \le W_{\bfA\bfB}^{\pT_{\bfB}} \le (1+k)Q_{\bfA\bfB}^{\pT_{\bfB}}\\
&\;\; (1-k)(\sigma_{A_0B_0}^{\pT_{B_0}}\ox \idop_{A_1B_1} - Q_{\bfA\bfB}^{\pT_{\bfB}}) \le \rho_{A_0B_0}^{\pT_{B_0}}\ox\idop_{A_1B_1} - W_{\bfA\bfB}^{\pT_{\bfB}} \le (1+k)(\sigma_{A_0B_0}^{\pT_{B_0}}\ox\idop_{A_1B_1} - Q_{\bfA\bfB}^{\pT_{\bfB}}),\\
&\;\; -t \leq \tr \Big[W_{\bfA\bfB}(\lambda J_{\bfA\bfB}^{\cN} - (1-\lambda) J_{\bfA\bfB}^{\cM})\Big] - \frac{1}{2}\big\|\lambda \cN_{A_0B_0\to A_1B_1}- (1-\lambda) \cM_{A_0B_0\to A_1B_1}\big\|_{\Diamond}\leq t.
\end{aligned}
\end{equation*}
We can observe that the minimum $k$ such that the optimal value of the above SDP is zero gives the value of the PPT entanglement cost. By Proposition~\ref{thm:upperbound}, the SDP hierarchy converges in at most $2d_{A_0B_0}^2 d_{A_1B_1} - 1$ steps.
\end{proof}

\subsection{Composite quantum channel discrimination}\label{sec:composite}
Up to this point, our analysis has assumed that the candidate quantum channels are perfectly known to the discriminating parties. In practical distributed quantum networks, however, channel dynamics are often subject to partial characterization, correlated noise, or systematic uncertainties. In such scenarios, the unknown channel is drawn from one of two possible sets of channels, e.g., $\Omega_0$ and $\Omega_1$, rather than being a single specific channel. The task of distinguishing between these \textit{sets} is known as \textit{composite quantum channel discrimination}~\cite{bergh2025composite}. 

In composite quantum channel discrimination, the discriminating parties (Alice and Bob) know the sets $\Omega_0$ and $\Omega_1$, but they do not know exactly which specific channels $\cN$ and $\cM$ from those sets will be handed to them. Because they must choose their input states, encodings, and POVMs, which together form the tester, before the unknown channel is applied, their choice of tester must be universal for the entire set. Therefore, given prior probabilities $\lambda$ and $1-\lambda$, when $k$ ebits of shared entanglement assist Alice and Bob, the optimal worst-case success probability of discriminating $\Omega_0$ and $\Omega_1$ is defined by
\begin{equation}\label{eq:composite_maxmin}
P_{\suc,e}^{\PPT}(\Omega_0,\Omega_1;k,\lambda) \coloneqq \max_{\{T^{(j)}_{k\text{-inj}}\} \in \PPT} \min_{\substack{\cN \in \Omega_0,\\ \cM \in \Omega_1}} \left[ \lambda \tr\big[T_{k\text{-inj}}^{(0)}(\Phi_k \ox J^{\cN}_{\bfA\bfB})\big] + (1-\lambda) \tr\big[T_{k\text{-inj}}^{(1)}(\Phi_k \ox J^{\cM}_{\bfA\bfB})\big] \right],
\end{equation}
where the minimization ranges over all possible channels in the two sets. Suppose both $\Omega_0$ and $\Omega_1$ are convex and compact. Notice that the set of $k$-injectable PPT testers is convex and compact, and the trace objective function above is bilinear, i.e., linear in $T_{k\text{-inj}}^{(j)}$ and linear in the channel. We can therefore use Sion's minimax theorem~\cite{Sion1958OnGM} to exchange the $\min$ and $\max$ first, and then leverage SDP duality to absorb the inner maximization into the outer minimization in Eq.~\eqref{eq:composite_maxmin}, resulting in a single, tractable SDP. To this end, we first derive the dual problem of~\eqref{sdp:kebit_sucprob_pp}.

Let us introduce positive semidefinite Lagrange multipliers $C_{\bfA\bfB}, D_{\bfA\bfB}, E_{\bfA\bfB}, G_{\bfA\bfB}, H_{\bfA\bfB}, K_{\bfA\bfB} \ge 0$ for the operator inequalities, and scalar multipliers $\alpha, \beta \in \mathbb{R}$ for the trace constraints on the marginals. The Lagrangian function of the primal problem is given by
\begin{equation}
\begin{aligned}
\cL &= 1 - \lambda + \tr\big[W_{\bfA\bfB}\big(\lambda J_{\bfA\bfB}^{\cN} - (1-\lambda)J_{\bfA\bfB}^{\cM}\big)\big]\\
&\quad + \langle C_{\bfA\bfB}, \rho_{A_0B_0} \ox \idop_{A_1B_1} - W_{\bfA\bfB} \rangle + \langle D_{\bfA\bfB}, \sigma_{A_0B_0} \ox \idop_{A_1B_1} - Q_{\bfA\bfB} \rangle\\
&\quad + \langle E_{\bfA\bfB}, W_{\bfA\bfB}^{T_{\bfB}} - (1-k)Q_{\bfA\bfB}^{\pT_{\bfB}} \rangle + \langle G_{\bfA\bfB}, (1+k)Q_{\bfA\bfB}^{\pT_{\bfB}} - W_{\bfA\bfB}^{T_{\bfB}} \rangle\\
&\quad + \langle H_{\bfA\bfB}, \rho_{A_0B_0}^{\pT_{B_0}} \ox \idop_{A_1B_1} - W_{\bfA\bfB}^{T_{\bfB}} - (1-k)(\sigma_{A_0B_0}^{\pT_{B_0}} \ox \idop_{A_1B_1} - Q_{AB}^{T_{\bfB}}) \rangle \\
&\quad + \langle K_{\bfA\bfB}, (1+k)(\sigma_{A_0B_0}^{\pT_{B_0}} \ox \idop_{A_1B_1} - Q_{\bfA\bfB}^{T_{\bfB}}) - (\rho_{A_0B_0}^{\pT_{B_0}} \ox \idop_{A_1B_1} - W_{\bfA\bfB}^{T_{\bfB}}) \rangle \\
&\quad + \alpha(1 - \tr\rho_{A_0B_0}) + \beta(1 - \tr\sigma_{A_0B_0})\\
&= 1 - \lambda + \alpha + \beta\\
&\quad + \Big\langle \lambda J_{\bfA\bfB}^{\cN} - (1-\lambda)J_{\bfA\bfB}^{\cM} - C_{\bfA\bfB} + E_{\bfA\bfB}^{T_{\bfB}} - G_{\bfA\bfB}^{T_{\bfB}} - H_{\bfA\bfB}^{T_{\bfB}} + K_{\bfA\bfB}^{T_{\bfB}}, W_{\bfA\bfB} \Big\rangle\\
&\quad + \Big\langle \!\!-D_{\bfA\bfB} - (1-k)E_{\bfA\bfB}^{T_{\bfB}} + (1+k)G_{\bfA\bfB}^{T_{\bfB}} + (1-k)H_{\bfA\bfB}^{T_{\bfB}} - (1+k)K_{\bfA\bfB}^{T_{\bfB}}, Q_{\bfA\bfB} \Big\rangle \\
&\quad + \Big\langle \tr_{A_1B_1}\big(C_{\bfA\bfB} + H_{\bfA\bfB}^{\pT_{B_0}} - K_{\bfA\bfB}^{\pT_{B_0}}\big) - \alpha \idop_{A_0B_0}, \rho_{A_0B_0} \Big\rangle \\
&\quad + \Big\langle \tr_{A_1B_1}\big( D_{\bfA\bfB} + (k-1)H^{\pT_{B_0}}_{\bfA\bfB} + (k+1)K^{\pT_{B_0}}_{\bfA\bfB}\big) - \beta 1_{A_0B_0}, \sigma_{A_0B_0} \Big\rangle.
\end{aligned}
\end{equation}
The dual function is obtained by taking the supremum over the primal variables $W_{\bfA\bfB}, Q_{\bfA\bfB}, \rho_{A_0B_0}, \sigma_{A_0B_0} \ge 0$. To ensure the supremum is finite, the operator coefficients attached to the primal variables must be negative semidefinite. This defines the feasible region of the dual problem. 
Setting all variables proportional to the identity operator provides a strictly feasible interior point for the primal problem~\eqref{sdp:kebit_sucprob_pp}. Consequently, strong duality holds by Slater's condition~\cite{Boyd_Vandenberghe_2004}, giving the exact dual SDP as follows.
\begin{equation}\label{sdp:kebit_sucprob_dual}
\begin{aligned}
    P^{\PPT}_{\suc,e}(\cN, \cM; k, \lambda) = \min &\;\; 1-\lambda + \alpha + \beta \\
    {\rm s.t.}
    &\;\; C_{\bfA\bfB},~D_{\bfA\bfB},~E_{\bfA\bfB},~G_{\bfA\bfB},~H_{\bfA\bfB},~K_{\bfA\bfB} \ge 0, \\
    &\;\; \lambda J_{\bfA\bfB}^{\cN} - (1-\lambda) J_{\bfA\bfB}^{\cM} - C_{\bfA\bfB} \le H_{\bfA\bfB}^{\pT_{\bfB}}- E_{\bfA\bfB}^{\pT_{\bfB}} + G_{\bfA\bfB}^{\pT_{\bfB}}-K_{\bfA\bfB}^{\pT_{\bfB}},\\
    &\;\; (1+k)(G_{\bfA\bfB}^{\pT_{\bfB}}-K_{\bfA\bfB}^{\pT_{\bfB}}) + (1-k)(H_{\bfA\bfB}^{\pT_{\bfB}}- E_{\bfA\bfB}^{\pT_{\bfB}}) \le D_{\bfA\bfB}, \\
    &\;\; \tr_{A_1B_1}\Big(C_{\bfA\bfB} + H_{\bfA\bfB}^{\pT_{B_0}} - K_{\bfA\bfB}^{\pT_{B_0}}\Big) \le \alpha \idop_{A_0B_0},\\
    &\;\; \tr_{A_1B_1} \Big[ D_{\bfA\bfB}+(k-1)H_{\bfA\bfB}^{\pT_{B_0}} +(k+1)K_{\bfA\bfB}^{\pT_{B_0}} \Big] \le \beta \idop_{A_0B_0}.
\end{aligned}
\end{equation}
By applying Sion's minimax theorem, the original max-min problem in Eq.~\eqref{eq:composite_maxmin} transforms into a joint minimization over both the worst-case channels and the dual variables, formalized as follows.
\begin{proposition}[SDP for composite bipartite channel discrimination] \label{prop:composite_discrimination}
Let $\Omega_0$ and $\Omega_1$ be two convex, compact sets of bipartite quantum channels with SDP representation, given with prior probabilities $\lambda$ and $1-\lambda$, respectively. The $k$-ebit-assisted optimal worst-case success probability of discriminating $\Omega_0$ and $\Omega_1$ via a $k$-injectable PPT tester is given by the following SDP.
\begin{equation}
\begin{aligned}
P_{\suc,e}^{\PPT}(\Omega_0,\Omega_1;k,\lambda) = \min &\;\; 1-\lambda + \alpha + \beta \\
    {\rm s.t.}
    &\;\; C_{\bfA\bfB},~D_{\bfA\bfB},~E_{\bfA\bfB},~G_{\bfA\bfB},~H_{\bfA\bfB},~K_{\bfA\bfB} \ge 0, \\
    &\;\; J_{\bfA\bfB}^{\cN} \in \Omega_0,~J_{\bfA\bfB}^{\cM} \in \Omega_1,\\
    &\;\; \lambda J_{\bfA\bfB}^{\cN} - (1-\lambda) J_{\bfA\bfB}^{\cM} \le H_{\bfA\bfB}^{\pT_{\bfB}}- E_{\bfA\bfB}^{\pT_{\bfB}} + G_{\bfA\bfB}^{\pT_{\bfB}}-K_{\bfA\bfB}^{\pT_{\bfB}} + C_{\bfA\bfB},\\
    &\;\; (1+k)(G_{\bfA\bfB}^{\pT_{\bfB}}-K_{\bfA\bfB}^{\pT_{\bfB}}) + (1-k)(H_{\bfA\bfB}^{\pT_{\bfB}}- E_{\bfA\bfB}^{\pT_{\bfB}}) \le D_{\bfA\bfB}, \\
    &\;\; \tr_{A_1B_1}\Big(C_{\bfA\bfB} + H_{\bfA\bfB}^{\pT_{B_0}} - K_{\bfA\bfB}^{\pT_{B_0}}\Big) \le \alpha \idop_{A_0B_0},\\
    &\;\; \tr_{A_1B_1} \Big[ D_{\bfA\bfB}+(k-1)H_{\bfA\bfB}^{\pT_{B_0}} +(k+1)K_{\bfA\bfB}^{\pT_{B_0}} \Big] \le \beta \idop_{A_0B_0}.
\end{aligned}
\end{equation}
\end{proposition}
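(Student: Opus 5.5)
The plan is to start from the max--min definition in Eq.~\eqref{eq:composite_maxmin}, swap the order of optimization, and then replace the inner maximization by its dual. First I rewrite the objective through the symmetry reduction in the proof of Theorem~\ref{thm:sdp_k_inject_PPT}. For fixed channels, twirling any $k$-injectable PPT tester with $U\ox\overline{U}$ on $A_RB_R$ leaves the payoff unchanged. That twirl does not depend on $\cN,\cM$, so it leaves the payoff unchanged for every pair simultaneously. It also maps the feasible set into itself, and the payoff is concave (indeed linear) in the tester. Hence in the max--min we may restrict to twirled testers without loss, and the outer maximization runs over the feasible set $\mathcal{F}$ of the reduced program~\eqref{sdp:kebit_sucprob}, i.e.\ over $(W,Q,\rho,\sigma)$. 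The payoff becomes
\[
f(W;J^{\cN},J^{\cM}) = 1-\lambda+\tr\big[W(\lambda J^{\cN}-(1-\lambda)J^{\cM})\big],
\]
which is bilinear in $W$ and in the pair of Choi operators.

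Next I apply Sion's minimax theorem. The set $\mathcal{F}$ is convex and compact: it is cut out by the operator inequalities, and $0\le W\le\rho\ox\idop$ with $\tr\rho=1$ bounds all variables. The set $\Omega_0\times\Omega_1$ of Choi operators is convex and compact by assumption. Since $f$ is linear, hence continuous and quasi-concave/quasi-convex, in each argument, Sion's theorem gives
\[
P^{\PPT}_{\suc,e}(\Omega_0,\Omega_1;k,\lambda)=\min_{J^{\cN}\in\Omega_0,\,J^{\cM}\in\Omega_1}\ \max_{(W,Q,\rho,\sigma)\in\mathcal{F}} f = \min_{J^{\cN}\in\Omega_0,\,J^{\cM}\in\Omega_1} P^{\PPT}_{\suc,e}(\cN,\cM;k,\lambda).
\]

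Then I substitute the dual SDP~\eqref{sdp:kebit_sucprob_dual} for the inner value, which is valid because strong duality holds there by the Slater argument given before. The inner problem is itself a minimum over the dual variables $(C,D,E,G,H,K,\alpha,\beta)$. Nested minima merge into one joint minimum over the channels and the dual variables. Moving $C$ to the right-hand side of the first constraint gives exactly the stated program.

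Finally, the constraints are jointly linear in $(J^{\cN},J^{\cM})$ and the dual variables. Because $\Omega_0$ and $\Omega_1$ are SDP-representable, the membership constraints $J^{\cN}\in\Omega_0$ and $J^{\cM}\in\Omega_1$ can be written as LMIs, possibly with auxiliary variables. So the whole program is an SDP.

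The step needing most care is the first one. One must check that restricting the outer maximization to twirled testers is harmless uniformly over all channel pairs. It is, because the twirl is channel-independent and the payoff is invariant under it for every pair. One must also check that the reduced feasible set is compact, so that Sion's theorem applies. A minor point is that the minimum over the channels is attained, which follows from compactness of $\Omega_0\times\Omega_1$ and lower semicontinuity of the inner optimal value.
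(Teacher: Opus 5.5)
Your proof is correct and follows the paper's route. You use Sion's minimax theorem to exchange the max and min, then replace the inner maximum by the dual SDP~\eqref{sdp:kebit_sucprob_dual}, justified by strong duality via the Slater point, and merge the two minimizations. The only difference is ordering: you apply the channel-independent $U\ox\overline{U}$ twirl before Sion and work on the reduced set $\mathcal{F}$, whereas the paper applies Sion directly to the convex compact set of $k$-injectable PPT testers and then invokes Theorem~\ref{thm:sdp_k_inject_PPT} for each fixed channel pair; both orders are valid.
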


\section{Symmetry and examples}\label{sec:example}

In this section, we transition from the general theoretical framework established in Section~\ref{sec:one_shot_entcost} to concrete applications.
Given channels $\cN_{A_0B_0\to A_1B_1}$ and $\cM_{A_0B_0\to A_1B_1}$ with associated \Choi operators $J_{\bfA \bfB}^{\cN}$ and $J_{\bfA \bfB}^{\cM}$ and prior distribution $(\lambda,1-\lambda)$, Theorem~\ref{thm:sdp_k_inject_PPT} states that the maximal success probability with $k$-injectable PPT tester can be computed by the following SDP.
\begin{equation}\label{eq:problem_statement_revisited}
\begin{aligned}
P^{\PPT}_{\suc,e}(\cN, \cM; k, \lambda) =\max &\; \lambda \tr (W_{\bfA\bfB}^{(0)}J_{\bfA\bfB}^{\cN}) + (1-\lambda)\tr (W_{\bfA\bfB}^{(1)}J_{\bfA\bfB}^{\cM})\\
{\rm s.t.}
&\;\; W_{\bfA\bfB}^{(j)} \geq 0,~Q_{\bfA\bfB}^{(j)}\geq 0, ~j \in \{0,1\},\\
&\;\; \tr (W_{\bfA\bfB}^{(0)}+W_{\bfA\bfB}^{(1)}) = \tr(Q_{\bfA\bfB}^{(0)}+Q_{\bfA\bfB}^{(1)}) = d_{A_1B_1},\\
&\;\; W^{(0)}_{\bfA\bfB}+W^{(1)}_{\bfA\bfB} = (W^{(0)}_{A_0B_0}+W^{(1)}_{A_0B_0})\ox \pi_{A_1B_1},\\
&\;\; Q^{(0)}_{\bfA\bfB}+Q^{(1)}_{\bfA\bfB} = (Q^{(0)}_{A_0B_0}+Q^{(1)}_{A_0B_0})\ox \pi_{A_1B_1},\\
&\;\; (1-k)(Q_{\bfA\bfB}^{(j)})^{\pT_{\bfB}} \le (W_{\bfA\bfB}^{(j)})^{\pT_{\bfB}} \le (1+k)(Q_{\bfA\bfB}^{(j)})^{\pT_{\bfB}}.
\end{aligned}
\end{equation}
Generally, the problem size grows with the input and output dimension of the channels and thus becomes very large very quickly. However, in applications, we have many additional structures of the channels $\cN_{A_0B_0\to A_1B_1}$ and $\cM_{A_0B_0\to A_1B_1}$ at our disposal. 
In Section~\ref{sec:cov}, we demonstrate how covariance properties of quantum channels allow us to collapse the SDP into small, highly tractable linear programs. We then apply this tool to three fundamental channel models: depolarizing and depolarized SWAP channels. Then, in Section~\ref{sec:werner}, we study another class of fundamental channels, the Werner-Holevo channels.

\subsection{Covariant channel}\label{sec:cov}

For a finite group $\cG$ and every $g\in\cG$, let $g \mapsto U_{A_0}(g)$ and $g \mapsto V_{B_0}(g)$ be unitary representations acting on the input spaces of Alice and Bob, respectively. Assuming the input and output spaces are isomorphic, i.e., $\cH_{A_0}\cong \cH_{A_1}$ and $\cH_{B_0}\cong \cH_{B_1}$, a bipartite quantum channel $\cN_{A_0B_0 \to A_1B_1}$ is said to be 
\begin{itemize}
\item $(U_{A_0}, V_{B_0})$-covariant if for all $g\in \cG$
\begin{equation}\label{eq:def_covariant_channels}
\begin{aligned}
&\cN_{A_0B_0\to A_1B_1}\Big(\big(U_{A_0}(g)\ox V_{B_0}(g)\big) \rho_{A_0B_0} \big(U_{A_0}^\dag (g)\ox V_{B_0}^\dag(g)\big)\Big) \\
= &\;\Big(U_{A_1}(g)\ox V_{B_1}(g)\Big) \cN_{A_0B_0\to A_1B_1}(\rho_{A_0B_0}) \Big(U_{A_1}^\dag (g)\ox V_{B_1}^\dag(g)\Big),\quad \forall \rho_{A_0B_0}\in\cD(A_0B_0).
\end{aligned}
\end{equation}
\item $(U_{A_0}, V_{B_0})$-cross-covariant if for all $g\in \cG$
\begin{equation}\label{eq:def_cro_covariant_channels}
\begin{aligned}
&\cN_{A_0B_0\to A_1B_1}\Big(\big(U_{A_0}(g)\ox V_{B_0}(g)\big) \rho_{A_0B_0} \big(U_{A_0}^\dag (g)\ox V_{B_0}^\dag(g)\big)\Big) \\
= &\;\Big(V_{A_1}(g)\ox U_{B_1}(g)\Big) \cN_{A_0B_0\to A_1B_1}(\rho_{A_0B_0}) \Big(V_{A_1}^\dag (g)\ox U_{B_1}^\dag(g)\Big),\quad \forall \rho_{A_0B_0}\in\cD(A_0B_0).
\end{aligned}
\end{equation}
\end{itemize}
A crucial consequence of this operational symmetry is observed in the \Choi representation of the channel, i.e., the covariance of the map $\cN$ imposes a specific unitary invariance on its Choi operator.

\begin{lemma}\label{lem:covariant_choi}
Let $\cG$ be a finite group with unitary representations $U_{A_0}$ and $V_{B_0}$. 
$\cN_{A_0B_0 \to A_1B_1}$ is $(U_{A_0}, V_{B_0})$-covariant if and only if its Choi operator satisfies
\begin{equation*}
\Big(\overline{U}_{A_0}(g)\ox \overline{V}_{B_0}(g)\ox U_{A_1}(g)\ox V_{B_1}(g) \Big) J_{A_0B_0A_1B_1}^{\cN} \Big(\overline{U}^\dag_{A_0}(g)\ox \overline{V}^\dag_{B_0}(g)\ox U^\dag_{A_1}(g)\ox V^\dag_{B_1}(g) \Big) = J_{A_0B_0A_1B_1}^{\cN},
\end{equation*}
for all $g\in\cG$ and $\cN_{A_0B_0 \to A_1B_1}$ is $(U_{A_0}, V_{B_0})$-cross-covariant if and only if
\begin{equation*}
\Big(\overline{U}_{A_0}(g)\ox \overline{V}_{B_0}(g)\ox V_{A_1}(g)\ox U_{B_1}(g) \Big) J_{A_0B_0A_1B_1}^{\cN} \Big(\overline{U}^\dag_{A_0}(g)\ox \overline{V}^\dag_{B_0}(g)\ox V^\dag_{A_1}(g)\ox U^\dag_{B_1}(g) \Big) = J_{A_0B_0A_1B_1}^{\cN},
\end{equation*}
for all $g\in\cG$, where $\overline{V}$ denotes the complex conjugate of $V$.
\end{lemma}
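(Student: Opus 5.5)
The plan is to work directly with the definition $J^{\cN}_{A_0B_0A_1B_1} = \sum_{ij}\ketbra{i}{j}_{A_0B_0}\ox \cN(\ketbra{i}{j})$, viewing $J^{\cN} = (\id_{A_0B_0}\ox\cN)(\ketbra{\Gamma}{\Gamma})$ with the unnormalized maximally entangled vector $\ket{\Gamma} = \sum_i \ket{i}\ket{i}$ on $(A_0B_0)\ox(A_0B_0)$. The key tool is the transpose trick $(X\ox\idop)\ket{\Gamma} = (\idop\ox X^{\pT})\ket{\Gamma}$. For the unitary $X = U_{A_0}(g)\ox V_{B_0}(g)$ this gives $(\overline{X}\ox\idop)\ket{\Gamma} = (\idop\ox X^{\dagger})\ket{\Gamma}$, and also $(\overline{X}\ox X)\ket{\Gamma} = \ket{\Gamma}$. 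Linearity of $\cN$ lets us extend the covariance relation \eqref{eq:def_covariant_channels} from density matrices to all of $\cL(A_0B_0)$, since states span this space. We use this in both directions.

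\emph{Forward direction.} Assume covariance. Conjugating $J^{\cN}$ by $\overline{X}\ox Y$, where $Y = U_{A_1}(g)\ox V_{B_1}(g)$, we compute
\[
(\overline{X}\ox Y)J^{\cN}(\overline{X}\ox Y)^\dagger = (\id\ox \mathrm{Ad}_Y\circ\cN)\big((\overline{X}\ox\idop)\ketbra{\Gamma}{\Gamma}(\overline{X}\ox\idop)^\dagger\big).
\]
By the transpose trick the argument equals $(\idop\ox X^\dagger)\ketbra{\Gamma}{\Gamma}(\idop\ox X)$. Hence the expression becomes $(\id\ox \mathrm{Ad}_Y\circ\cN\circ\mathrm{Ad}_{X^\dagger})(\ketbra{\Gamma}{\Gamma})$. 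Now apply covariance at the group element $g^{-1}$, using that $U(g^{-1}) = U(g)^\dagger$ because these are representations. This gives $\mathrm{Ad}_{Y}\circ\cN\circ\mathrm{Ad}_{X^\dagger} = \cN$, and the expression reduces to $J^{\cN}$.

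\emph{Converse.} Assume the Choi invariance and recover $\cN$ from its Choi operator via $\cN(\rho) = \tr_{A_0B_0}[(\rho^{\pT}\ox\idop)J^{\cN}]$, i.e.\ the link product \eqref{def:link_product}. Substituting the invariant form of $J^{\cN}$ and cycling the $\overline{X}$ factors under the partial trace on $A_0B_0$ gives
\[
\cN(\rho) = Y\,\tr_{A_0B_0}[(X^{\pT}\rho^{\pT}\overline{X}\ox\idop)J^{\cN}]\,Y^\dagger.
\]
Here $X^{\pT}\rho^{\pT}\overline{X} = (X^\dagger\rho X)^{\pT}$, so $\cN(\rho) = Y\cN(X^\dagger\rho X)Y^\dagger$. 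Replacing $\rho$ by $X\rho X^\dagger$ yields exactly \eqref{eq:def_covariant_channels}.

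\emph{Cross-covariant case.} This is identical with $Y$ replaced by $V_{A_1}(g)\ox U_{B_1}(g)$, since the argument never used any relation between $Y$ and $X$ beyond $Y$ being a unitary representation on the output.

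The only delicate point is bookkeeping: we must keep track of which factor carries the complex conjugate versus the transpose, and use the group-inverse step so that we obtain $\overline{X}$ rather than $X^{\pT}$ acting on the $A_0B_0$ factor. We will check this against the link-product transpose convention in \eqref{def:link_product}.
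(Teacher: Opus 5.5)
Your proof is correct and is essentially the paper's argument: the paper simply asserts that the lemma follows from the definition of the Choi operator and the covariance relation. Your transpose-trick computation, together with the reconstruction $\cN(\rho)=\tr_{A_0B_0}[(\rho^{\pT}\ox\idop)J^{\cN}]$ for the converse, is exactly the omitted verification, and the group-inverse step is not even needed, since covariance at $g$ already gives $\mathrm{Ad}_Y\circ\cN\circ\mathrm{Ad}_{X^\dagger}=\cN\circ\mathrm{Ad}_X\circ\mathrm{Ad}_{X^\dagger}=\cN$.
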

\begin{proof}
    This readily follows from the definition of the \Choi operator and Eq.~\eqref{eq:def_covariant_channels}.
\end{proof}

When $B_0$ and $A_1$ are one-dimensional trivial subsystems, the channel covariance reduces to~\cite[Eq.~(2.11)]{Wilde_2017} 
\[
    \cN_{A_0\to B_1}(U_{A_0}(g)\rho_{A_0}U_{A_0}^\dag(g)) = V_{B_1}(g) \cN_{A_0\to B_1} V_{B_1}^\dag(g).
\]
We now apply this framework to the task of discriminating between two covariant channels, $\cN_{\bfA \to \bfB}$ and $\cM_{\bfA \to \bfB}$. Specifically, we consider the scenario where both channels are covariant with respect to the same compact group $\cG \subseteq \cU(\cH)$. Leveraging the invariance of the \Choi operators established in Lemma~\ref{lem:covariant_choi}, we can exploit the inherent symmetry of the optimization problem. This allows us to simplify the general SDP formulation in Eq.~\eqref{sdp:kebit_sucprob} by restricting the feasible region to the subspace of invariant operators without loss of generality.

\begin{proposition}[Symmetry reduction under channel covariance]\label{prop:sym_redu_covariance}
Let $\cG$ be a compact group with Haar measure $\mu$ and two unitary representations $g \mapsto U_{A_0}(g)$ on $\cH_{A_0}$ and $g\mapsto V_{B_0}(g)$ on $\cH_{B_0}$. Let two channels $\cN_{A_0B_0 \to A_1B_1}$ and $\cM_{A_0B_0 \to A_1B_1}$ be $(U_{A_0}, V_{B_0})$-covariant, i.e., for all $g\in\cG$ and $\rho \in \cD(A_0B_0)$
\begin{equation*}
\cC\Big(\big(U_{A_0}(g)\ox V_{B_0}(g)\big) \rho \big(U_{A_0}^\dag (g)\ox V_{B_0}^\dag(g)\big)\Big)= \Big(U_{A_1}(g)\ox V_{B_1}(g)\Big) \cC(\rho) \Big(U_{A_1}^\dag (g)\ox V_{B_1}^\dag(g)\Big), \forall\cC\in\{\cN,\cM\},
\end{equation*}
and $\cH_{A_0}\cong \cH_{A_1}, \cH_{B_0}\cong\cH_{B_1}$.
Define 
\begin{equation}\label{eq:gammaAB}
\Gamma_{\bfA\bfB}(g) \coloneqq \overline{U}_{A_0}(g)\ox \overline{V}_{B_0}(g) \ox U_{A_1}(g) \ox V_{B_1}(g).    
\end{equation}
Then the optimal value of SDP~\eqref{sdp:kebit_sucprob} is unchanged if one adds the invariance constraints: for all $g\in\cG$:
\begin{itemize}
\item[i).] $\Gamma_{\bfA\bfB}(g) W_{\bfA\bfB} \Gamma_{\bfA\bfB}(g)^\dagger = W_{\bfA\bfB}$,
\item[ii).] $\Gamma_{\bfA\bfB}(g) Q_{\bfA\bfB} \Gamma_{\bfA\bfB}(g)^\dagger = Q_{\bfA\bfB}$,
\item[iii).] $\Big(\overline{U}_{A_0}(g)\ox \overline{V}_{B_0}(g)\Big) \rho_{A_0B_0} \Big(\overline{U}_{A_0}(g)\ox \overline{V}_{B_0}(g)\Big)^\dagger = \rho_{A_0B_0}$,
\item[iv).] $\Big(\overline{U}_{A_0}(g)\ox \overline{V}_{B_0}(g)\Big) \sigma_{A_0B_0} \Big(\overline{U}_{A_0}(g)\ox \overline{V}_{B_0}(g)\Big)^\dagger = \sigma_{A_0B_0}$.
\end{itemize}
\end{proposition}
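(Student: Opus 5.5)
Our plan is the standard twirling argument, in the same spirit as the $U\ox\overline{U}$ averaging in the proof of Theorem~\ref{thm:sdp_k_inject_PPT}. Take any feasible point $(W_{\bfA\bfB},Q_{\bfA\bfB},\rho_{A_0B_0},\sigma_{A_0B_0})$ of SDP~\eqref{sdp:kebit_sucprob} and act on it with the symmetry. For each $g\in\cG$, define $W_g\coloneqq\Gamma_{\bfA\bfB}(g)W_{\bfA\bfB}\Gamma_{\bfA\bfB}(g)^\dagger$ and $Q_g$ in the same way. Also set $\rho_g\coloneqq(\overline{U}_{A_0}(g)\ox\overline{V}_{B_0}(g))\rho_{A_0B_0}(\cdots)^\dagger$ and $\sigma_g$ analogously. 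We will show that $(W_g,Q_g,\rho_g,\sigma_g)$ is feasible and has the same objective value. Then we average with respect to the Haar measure $\mu$, obtaining $\widetilde W=\int d\mu(g)\,W_g$ and similarly $\widetilde Q,\widetilde\rho,\widetilde\sigma$. Because the SDP is convex and the objective is linear, the averaged point is again feasible and achieves the same value. By invariance of the Haar measure, it also satisfies constraints i)--iv). Applying this to an optimal solution shows that adding the invariance constraints does not lower the optimum; they obviously cannot raise it.

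\textbf{Objective.} We have $\tr[W_g X]=\tr[W\,\Gamma(g)^\dagger X\Gamma(g)]$ with $X=\lambda J^{\cN}-(1-\lambda)J^{\cM}$. By Lemma~\ref{lem:covariant_choi} (and since $\Gamma(g^{-1})=\Gamma(g)^\dagger$), both Choi operators are $\Gamma$-invariant, so the objective is unchanged.

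\textbf{Positivity and marginal constraints.} Conjugation by a unitary preserves $W\ge0$ and $Q\ge0$, and it preserves the trace conditions on $\rho,\sigma$. For the bound $W\le\rho\ox\idop_{A_1B_1}$, note that $\Gamma(g)(\rho\ox\idop)\Gamma(g)^\dagger=\rho_g\ox\idop$, because the $U_{A_1}\ox V_{B_1}$ factor acts trivially on the identity. So $0\le W_g\le\rho_g\ox\idop$, and the same argument applies to $Q_g$ and $\sigma_g$.

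\textbf{PPT constraints.} This is the main point to check. We need to understand how conjugation by $\Gamma(g)$ interacts with the partial transpose $\pT_{\bfB}$ on $B_0B_1$. The key identity is that, for a product unitary $X_{\bfA}\ox Y_{\bfB}$,
\begin{equation*}
\big[(X\ox Y)Z(X\ox Y)^\dagger\big]^{\pT_{\bfB}}=(X\ox\overline{Y})Z^{\pT_{\bfB}}(X\ox\overline{Y})^\dagger .
\end{equation*}
Here $X=\overline{U}_{A_0}\ox U_{A_1}$ and $Y=\overline{V}_{B_0}\ox V_{B_1}$, so $\overline{Y}=V_{B_0}\ox\overline{V}_{B_1}$ is again unitary. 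Hence every partial-transposed constraint, such as
\begin{equation*}
(1-k)Q^{\pT_{\bfB}}\le W^{\pT_{\bfB}}\le(1+k)Q^{\pT_{\bfB}},
\end{equation*}
is mapped by one fixed unitary conjugation $\Gamma'(g)=X\ox\overline{Y}$ applied to both sides. Such a conjugation preserves operator inequalities. For the last constraint, involving $\rho^{\pT_{B_0}}\ox\idop-W^{\pT_{\bfB}}$, we use the same identity together with
\begin{equation*}
(\rho_g)^{\pT_{B_0}}\ox\idop=\Gamma'(g)\big(\rho^{\pT_{B_0}}\ox\idop\big)\Gamma'(g)^\dagger ,
\end{equation*}
which holds because the output factor acts trivially on the identity. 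Thus feasibility is preserved for every $g$.

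\textbf{Conclusion.} The remaining steps are routine. The averaged point stays feasible because each constraint set is convex and closed, so a Haar integral (a limit of convex combinations) stays inside it. Haar invariance $\mu(hg)=\mu(g)$ then gives $\Gamma(h)\widetilde W\Gamma(h)^\dagger=\widetilde W$, and likewise for $\widetilde Q,\widetilde\rho,\widetilde\sigma$. This yields constraints i)--iv) and completes the argument.
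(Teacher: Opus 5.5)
Your proposal is correct and follows essentially the same twirling argument as the paper: the objective is invariant by Lemma~\ref{lem:covariant_choi}, positivity and operator order survive unitary conjugation and Haar averaging, and $\pT_{\bfB}$ turns conjugation by $\Gamma_{\bfA\bfB}(g)$ into conjugation by another product unitary. The differences are cosmetic: you check feasibility for each fixed $g$ before averaging, whereas the paper works directly with the averaged twirl maps, and your intertwining unitary $X\ox\overline{Y}$ is the correct form of the identity the paper displays (the paper's version has misplaced daggers and conjugations on some factors).
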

\begin{proof}
Define the twirls as follows
\begin{equation*}
\cS_{\bfA\bfB}(X) \!\coloneqq\!\! \int_G \Gamma_{\bfA\bfB}(g) X \Gamma_{\bfA\bfB}^\dagger(g) \, d\mu(g), ~
\cS_{A_0 B_0}(Y) \!\coloneqq\!\! \int_G \Big(\overline{U}_{A_0}(g)\ox \overline{V}_{B_0}(g)\Big) Y \Big(\overline{U}_{A_0}(g)\ox \overline{V}_{B_0}(g)\Big)^\dagger \, d\mu(g).
\end{equation*}
Both maps are \emph{unital}, \emph{completely positive}, \emph{trace preserving}, self-adjoint w.r.t.\ the Hilbert--Schmidt inner product, and idempotent. By Lemma~\ref{lem:covariant_choi}, we have that $\cS_{\bfA\bfB}(J^{\cN}_{\bfA\bfB} - J^{\cM}_{\bfA\bfB}) = J^{\cN}_{\bfA\bfB} - J^{\cM}_{\bfA\bfB}$. Therefore, for any feasible $(W_{\bfA\bfB},Q_{\bfA\bfB},\rho_{A_0B_0},\sigma_{A_0B_0})$ of SDP~\eqref{sdp:kebit_sucprob}, we have
\begin{equation*}
\tr\Big[ W_{\bfA\bfB} (J^{\cN}_{\bfA\bfB} - J^{\cM}_{\bfA\bfB}) \Big]
= \tr\Big[ \cS_{\bfA\bfB}(W_{\bfA\bfB}) \, \cS_{\bfA\bfB}(J^{\cN}_{\bfA\bfB} - J^{\cM}_{\bfA\bfB}) \Big]
= \tr\Big[ \cS_{\bfA\bfB}(W_{\bfA\bfB}) \, (J^{\cN}_{\bfA\bfB} - J^{\cM}_{\bfA\bfB}) \Big],
\end{equation*}
i.e., $\cS_{\bfA\bfB}(W_{\bfA\bfB})$ yields the same objective value. Now, let
\begin{equation}
\bar{W}_{\bfA\bfB} \coloneqq \cS_{\bfA\bfB}(W_{\bfA\bfB}),~
\bar{Q}_{\bfA\bfB} \coloneqq \cS_{\bfA\bfB}(Q_{\bfA\bfB}),~
\bar{\rho}_{A_0B_0} \coloneqq \cS_{A_0B_0}(\rho_{A_0B_0}),~
\bar{\sigma}_{A_0B_0} \coloneqq \cS_{A_0B_0}(\sigma_{A_0B_0}).
\end{equation}
Since the positivity and operator order are preserved under unitary conjugation and convex averaging, it follows that
\begin{equation}
0 \leq \bar{W}_{\bfA\bfB} \leq \bar{\rho}_{A_0B_0} \ox \idop_{A_1B_1},\quad
0 \leq \bar{Q}_{\bfA\bfB} \leq \bar{\sigma}_{A_0B_0} \ox \idop_{A_1B_1},
\end{equation}
and $\tr \bar{\rho}_{A_0B_0} = \tr \bar{\sigma}_{A_0B_0} = 1$. For the $\pT_{\bfB}$-constraints, notice that
\begin{equation}
\begin{aligned}
&\Big[\Big( \overline{U}_{A_0}(g)\ox \overline{V}_{B_0}(g) \ox U_{A_1}(g) \ox V_{B_1}(g)\Big) X \Big(\overline{U}^\dag_{A_0}(g)\ox \overline{V}^\dag_{B_0}(g) \ox U^\dag_{A_1}(g) \ox V^\dag_{B_1}(g) \Big)\Big]^{\pT_{B_0B_1}}\\
=&\; \Big(\overline{U}^\dag_{A_0}(g)\ox V_{B_0}(g) \ox U_{A_1}^\dag(g) \ox \overline{V}_{B_1}(g) \Big) X^{\pT_{B_0B_1}} \Big( \overline{U}_{A_0}(g)\ox V^\dag_{B_0}(g) \ox U_{A_1}(g) \ox V_{B_1}^\dag (g)\Big).
\end{aligned}
\end{equation}
Conjugating $(1 - k) Q_{\bfA\bfB}^{\pT_{\bfB}} \leq W_{\bfA\bfB}^{\pT_{\bfB}} \leq (1 + k) Q_{\bfA\bfB}^{\pT_{\bfB}}$ by $\Big(\overline{U}^\dag_{A_0}(g)\ox V_{B_0}(g) \ox U_{A_1}^\dag(g) \ox \overline{V}_{B_1}(g) \Big)$ and averaging over $g$ yields
\begin{equation}
(1 - k) \bar{Q}_{\bfA\bfB}^{\pT_{\bfB}} \leq \bar{W}_{\bfA\bfB}^{\pT_{\bfB}} \leq (1 + k) \bar{Q}_{\bfA\bfB}^{\pT_{\bfB}},
\end{equation}
because we can exchange the order of the integral and $\pT_{B_0B_1}$. The linearity of $\cS_{\bfA\bfB}$ gives, similarly,
\begin{equation}
(1 - k)(\bar{\sigma}^{\pT_{B_0}}_{A_0B_0} \ox \idop_{A_1B_1} - \bar{Q}_{\bfA\bfB}^{\pT_{\bfB}}) \leq \bar{\rho}_{A_0B_0}^{\pT_{B_0}} \ox \idop_{A_1B_1} - \bar{W}_{\bfA\bfB}^{\pT_{\bfB}} \leq (1 + k)(\bar{\sigma}_{A_0B_0}^{\pT_{B_0}} \ox \idop_{A_1B_1} - \bar{Q}_{\bfA\bfB}^{\pT_{\bfB}}).
\end{equation}
Thus, $(\bar{W}, \bar{Q}, \bar{\rho}, \bar{\sigma})$ is also feasible of SDP~\eqref{sdp:kebit_sucprob}. By construction, $\bar{W}, \bar{Q}$ satisfy $\Gamma_{\bfA\bfB}(g) (\cdot) \Gamma_{\bfA\bfB}^\dagger(g) = (\cdot)$ and $\bar{\rho}, \bar{\sigma}$ satisfy $\Big(\overline{U}_{A_0}(g)\ox \overline{V}_{B_0}(g)\Big) (\cdot) \Big(\overline{U}_{A_0}(g)\ox \overline{V}_{B_0}(g)\Big)^\dagger = (\cdot)$ for all $g$. Since we know that the objective value is unchanged by replacing $(W_{AB},Q_{AB},\rho_{A},\sigma_{A})$ with $(\bar{W}, \bar{Q}, \bar{\rho}, \bar{\sigma})$, the supremum is the same when restricting to $G$-invariant variables, and an optimal solution may be chosen invariant.
\end{proof}

We note that a parallel symmetry reduction can be derived for a pair of $(U_{A_0},V_{B_0})$-cross-covariant channels by choosing $\Gamma_{\bfA\bfB} \coloneqq \overline{U}_{A_0}(g)\ox \overline{V}_{B_0}(g) \ox V_{A_1}(g) \ox U_{B_1}(g)$ in Eq.~\eqref{eq:gammaAB}. A direct application of Proposition~\ref{prop:sym_redu_covariance} is that for any pair of bipartite classical channels $\cN$ and $\cM$, the one-shot PPT entanglement cost of minimum-error discrimination is $0$ ebits.

\begin{proposition}[Classical channel]\label{prop:classical_torus}
Fix orthonormal product bases
$\{\ket{a_0}\}$ on $A_0$, $\{\ket{b_0}\}$ on $B_0$, $\{\ket{a_1}\}$ on $A_1$, and $\{\ket{b_1}\}$ on $B_1$, assume the partial transpose $\pT_{\bfB}$ is taken with respect to the chosen basis on ${\bfB}=B_0B_1$.
Let $\cN_{A_0B_0\to A_1B_1}$ and $\cM_{A_0B_0\to A_1B_1}$ be classical with respect to these bases, i.e., their Choi operators are diagonal in the induced basis on $A_0B_0A_1B_1$. Then for every $k\in\NN_+$ and $\lambda\in(0,1)$, SDP~\eqref{sdp:kebit_sucprob} admits an optimal solution
$(W_{\bfA\bfB},Q_{\bfA\bfB},\rho_{A_0B_0},\sigma_{A_0B_0})$
for which all four variables are diagonal in the above bases.
Consequently, for such an optimal solution, one has
\begin{equation}\label{eq:diag_pt_identity}
X^{\pT_{\bfB}} = X,~\forall X\in\{W_{\bfA\bfB},Q_{\bfA\bfB}\},~~
Y^{\pT_{B_0}} = Y,~\forall Y\in\{\rho_{A_0B_0},\sigma_{A_0B_0}\}.
\end{equation}
In particular, at $k=1$, we have $P^{\PPT}_{\suc,e}(\cN,\cM;1,\lambda)=P_{\suc}(\cN,\cM;\lambda)$ and hence,
\begin{equation*}
E^{(1)}_{c,\PPT}(\cN,\cM;\lambda)=0.
\end{equation*}
\end{proposition}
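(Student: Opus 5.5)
The plan is to apply Proposition~\ref{prop:sym_redu_covariance} with the diagonal torus as the symmetry group, and then read off the result at $k=1$.

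\textbf{Step 1: the torus symmetry.} Take $\cG$ to be the torus of diagonal unitaries in the fixed bases. Choose independent phase vectors $\theta^{A_0},\theta^{B_0},\theta^{A_1},\theta^{B_1}$, and let $\cG$ act on each of the four systems by the corresponding diagonal phase unitary. Any operator diagonal in the product basis is invariant under conjugation by any diagonal unitary. In particular, the Choi operators $J^{\cN}$ and $J^{\cM}$ are invariant under
\[
\Gamma(g)=\overline{D}_{A_0}\ox\overline{D}_{B_0}\ox D_{A_1}\ox D_{B_1}.
\]

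\textbf{Step 2: rerun the twirling argument.} The Proposition is phrased with the output representation equal to the input one. Here the four phase vectors are independent, so I will not cite it verbatim. Instead I redo its proof with the twirl $\cS_{\bfA\bfB}$ over the full four-system torus and $\cS_{A_0B_0}$ over the $A_0B_0$ torus.

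The key commutation property is that partial transpose on $\bfB$ maps conjugation by a diagonal $D_{B_0}\ox D_{B_1}$ to conjugation by its complex conjugate, which is again diagonal. Hence $\cS_{\bfA\bfB}$ commutes with $\pT_{\bfB}$, and the PPT-type constraints survive averaging exactly as in the Proposition's proof.

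The twirl over independent phases is precisely the pinching onto the diagonal. One checks that $\tr_{A_1B_1}$ intertwines the pinchings, so the constraints $W\le \rho\ox\idop$ and the corresponding $\rho$ constraints remain consistent after twirling. It follows that an optimal $(W,Q,\rho,\sigma)$ can be taken diagonal, and any diagonal operator is fixed by partial transpose, which gives Eq.~\eqref{eq:diag_pt_identity}.

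\textbf{Step 3: reduce to $k=1$.} At $k=1$ the constraints collapse. The conditions involving $Q$ become $0\le W\le 2Q$ and $0\le \rho\ox\idop-W\le 2(\sigma\ox\idop-Q)$.

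I choose $\sigma=\rho$ and $Q=\rho\ox\idop/2$. Then $2Q=\rho\ox\idop\ge W$, and $2(\sigma\ox\idop-Q)=\rho\ox\idop\ge \rho\ox\idop-W$. The $Q$ constraints $0\le Q\le\sigma\ox\idop$ also hold.

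For $W$ and $\rho$ I take a diagonal optimal solution of the global SDP~\eqref{Eq:Diamond}, with general $\lambda$, for the classical channels. Such a solution exists by the same pinching argument applied to the unrestricted SDP. It attains $P_{\suc}$, which shows $P^{\PPT}_{\suc,e}(\cN,\cM;1,\lambda)\ge P_{\suc}$. The reverse inequality holds trivially since PPT testers form a subset of all testers, so $E^{(1)}_{c,\PPT}=\log 1=0$.

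\textbf{Main obstacle.} The main subtlety is Step 2: making sure the Proposition genuinely applies with independent phases on inputs and outputs, and that pinching commutes with every constraint, in particular the partial trace and the $\pT_{\bfB}$ maps. I expect to verify this directly rather than via the stated Proposition. Step 3 is simple once we observe that at $k=1$ the slack variable $Q$ can be chosen freely.
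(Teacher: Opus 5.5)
Your proposal is correct and follows the paper's proof. The paper also twirls over the product torus of independent diagonal phases, which is the same as pinching onto the product basis; it obtains a diagonal optimal solution of the unrestricted SDP and uses diagonality to make that solution PPT-feasible at $k=1$. Your version is in fact slightly more careful in two places: you check the pinching against each constraint directly rather than invoking Proposition~\ref{prop:sym_redu_covariance}, whose hypothesis (the same representation on input and output) does not literally cover independent phases, and you supply the explicit slack choice $\sigma=\rho$, $Q=\rho\ox\idop_{A_1B_1}/2$, which the paper leaves implicit.
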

\begin{proof}
    See Appendix~\ref{appendix:proof_class}.
\end{proof}
The utility of Proposition~\ref{prop:sym_redu_covariance} as a general tool for any channel pairs with covariance properties is demonstrated through its application to two concrete examples: the depolarizing channel and the depolarized SWAP channel.

\subsubsection{Depolarizing channel}\label{sec:depo}
As a canonical example of covariant channels, we consider the depolarizing channel, which serves as the standard theoretical model for uniform quantum noise. We first investigate the bipartite setting, where the noise acts globally across the joint input of Alice and Bob, and subsequently, the point-to-point setting.

For a bipartite input system $A_0B_0$ and a bipartite output system $A_1B_1$ where $\cH_{A_0}\cong \cH_{A_1}, \cH_{B_0}\cong \cH_{B_1}$, a bipartite depolarizing channel $\cD^{p}_{A_0B_0\to A_1B_1}(\cdot)$ with noise parameter $p$ is defined by 
\begin{equation}
    \cD^{p}_{A_0B_0\to A_1B_1}(X) = (1-p)X + p\tr (X)\frac{\idop_{A_1B_1}}{d_{A_1B_1}}.
\end{equation}
The Choi operators of the depolarizing channels are given by
\begin{equation}
    J_{\bfA\bfB}^{\cD^p} = d_{A_0}d_{B_0}(1-p)\Phi_{d_{A_0}}\ox \Phi_{d_{B_0}} + p\frac{\idop_{\bfA\bfB}}{d_{\bfA\bfB}}.
\end{equation}
By applying the general symmetry reduction established in Proposition~\ref{prop:sym_redu_covariance}, we can significantly simplify the optimization complexity. The full SDP formulation reduces to a tractable linear program over the isotropic subspace, as detailed in the following proposition.

\begin{proposition}[Symmetry reduction for bipartite depolarizing channels]\label{prop:bi_depo_LP}
Let $\cD^{p}_{A_0B_0\to A_1B_1}$ and $\cD_{A_0B_0\to A_1B_1}^{q}$ be two bipartite depolarizing channels with distinct noise parameters $p$ and $q$, respectively. The maximal $k$-ebit-assisted success probability of discriminating $\cD^{p}_{A_0B_0\to A_1B_1},\cD_{A_0B_0\to A_1B_1}^{q}$ via a $k$-injectable PPT tester is given by the following linear program.
\begin{equation}
\begin{aligned}
P^{\PPT}_{\suc,e}(\cD^{p},\cD^{q}; k, \frac{1}{2}) =\max &\;\frac{1}{2} + \frac{1}{2}(p - q) \left(d_{A_0B_0} - \frac{1}{d_{A_0B_0}}\right) (x - y)\\
{\rm s.t.}
&\;\; x, y, u, v \in [0, 1/d_{A_0B_0}],\\
&\;\; (1 - k) u \leq x \leq (1 + k) u,\\
&\;\; (1 - k) v \leq y \leq (1 + k) v,\\
&\;\; (1 - k)\left( \frac{1}{d_{A_0B_0}} - u \right) \leq \frac{1}{d_{A_0B_0}} - x \leq (1 + k)\left( \frac{1}{d_{A_0B_0}} - u \right),\\
&\;\; (1 - k)\left( \frac{1}{d_{A_0B_0}} - v \right) \leq \frac{1}{d_{A_0B_0}} - y \leq (1 + k)\left( \frac{1}{d_{A_0B_0}} - v \right).
\end{aligned}
\end{equation}
\end{proposition}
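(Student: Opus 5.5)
The plan is to collapse SDP~\eqref{sdp:kebit_sucprob} with $\lambda=1/2$ using Proposition~\ref{prop:sym_redu_covariance}, and then evaluate every constraint on the small commutant that remains. Write $d\coloneqq d_{A_0B_0}=d_{A_1B_1}$. The channel $\cD^p$ is $(U_{A_0},V_{B_0})$-covariant for all local unitaries $U\in\cU(d_{A_0})$, $V\in\cU(d_{B_0})$. In fact it is covariant under every unitary on $A_0B_0$, and the Choi operator $J^{\cD^p}_{\bfA\bfB}$ is invariant under $\overline{X}_{A_0B_0}\ox X_{A_1B_1}$ for every $X\in\cU(d)$.

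\textbf{Step 1 (invariance of $\rho$ and $\sigma$).} Apply Proposition~\ref{prop:sym_redu_covariance} with the local group $\cU(d_{A_0})\times\cU(d_{B_0})$. Items iii) and iv) force $\rho_{A_0B_0}$ and $\sigma_{A_0B_0}$ to commute with all $\overline{U}\ox\overline{V}$. By Schur's lemma both must equal $\pi_{A_0B_0}$, so $\rho=\sigma=\pi_{A_0B_0}$ and $\rho^{\pT_{B_0}}=\sigma^{\pT_{B_0}}=\idop/d$. This already explains the constants $1/d$ that appear in the LP.

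\textbf{Step 2 (invariance of $W$ and $Q$).} I would then try to make $W$ and $Q$ invariant under the \emph{global} isotropic twirl $X\mapsto\int(\overline{X}\ox X)\,\cdot\,(\overline{X}\ox X)^\dagger\,dX$ over $\cU(d)$. Under that twirl,
\[
W=x\,\Phi_{A_0A_1}\ox\Phi_{B_0B_1}+y\,(\idop-\Phi_{A_0A_1}\ox\Phi_{B_0B_1}),
\]
and $Q$ has the same form with coefficients $u,v$.

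\textbf{Step 3 (objective and box constraints).} With this form, a direct computation gives
\[
\tr\big[W(J^{\cD^p}-J^{\cD^q})\big]=(p-q)\Big(d-\tfrac1d\Big)(y-x).
\]
This reproduces the objective after relabelling $x\leftrightarrow y$ (equivalently, swapping the roles of the two effects). The constraints $0\le W\le\pi\ox\idop$ and $0\le Q\le\pi\ox\idop$ become $x,y,u,v\in[0,1/d]$.

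\textbf{Step 4 (PPT constraints).} Next, $(\Phi_{A}\ox\Phi_{B})^{\pT_{\bfB}}=\Phi_{A}\ox F_{B}/d_{B}$, where $F_B$ is the swap on $B_0B_1$. Decompose with $F_B=P_+-P_-$ and $\Phi_A$ versus $\idop-\Phi_A$. Then $W^{\pT_{\bfB}}$ and $Q^{\pT_{\bfB}}$ are simultaneously diagonal on the three orthogonal projectors $\Phi_A\ox P_\pm$ and $(\idop-\Phi_A)\ox\idop$. Each operator inequality therefore splits into scalar inequalities, one per block. The same splitting handles the second pair of inequalities involving $\rho^{\pT_{B_0}}\ox\idop-W^{\pT_{\bfB}}$, since that operator again lies in the same commutative algebra.

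\textbf{Main obstacle.} The difficulty lies in Steps 2 and 4, and I flag two problems.

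First, the global twirl is not a local symmetry of the PPT cone. The partial transpose $\pT_{\bfB}$ does not intertwine $\overline{X}\ox X$ for a nonlocal $X$, so Proposition~\ref{prop:sym_redu_covariance} alone only yields the four-parameter local-isotropic form spanned by $\{\Phi_A,\idop-\Phi_A\}\ox\{\Phi_B,\idop-\Phi_B\}$.

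Second, the block inequalities carry factors $1\pm1/d_B$, whereas the stated LP appears dimension-free in $d_B$.

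My first attempt would therefore work in the four-parameter local form. I would write all PPT blocks explicitly in the basis $\{\Phi_A,\idop-\Phi_A\}\ox\{P_+,P_-\}$, observe that the objective depends only on the $\Phi_A\ox\Phi_B$ coefficient and on the trace, and then show that an optimal solution can be taken with the two mixed coefficients equal to the complement coefficient. The argument would be convexity: average a feasible point with its image under a suitable symmetry of the constraint set. I would then check whether the surviving scalar constraints coincide with the stated ones. This is the point where the argument may genuinely need a correction to the LP, for example $d_B$-dependent factors in the inequalities.
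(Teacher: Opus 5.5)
\textbf{The gap is in Step 4: the partial transpose is miscomputed.} The transpose $\pT_{\bfB}$ acts on \emph{both} $B_0$ and $B_1$. For the bipartite depolarizing channel, the factor $\Phi_{B_0B_1}$ lies entirely on Bob's side, so $\pT_{\bfB}$ acts on it as a full transpose. Since $\Phi_{B_0B_1}$ is real symmetric, $\Phi_{B_0B_1}^{\pT_{B_0B_1}}=\Phi_{B_0B_1}$, not $F_{B_0B_1}/d_{B_0}$. Your identity holds only when one half is transposed, e.g.\ $\pT_{B_1}$ alone, which is the point-to-point situation of Proposition~\ref{prop:pp_depo_LP}.

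Hence all four local-isotropic projectors $P_i\in\{\Phi_{A_0A_1},\idop-\Phi_{A_0A_1}\}\ox\{\Phi_{B_0B_1},\idop-\Phi_{B_0B_1}\}$ satisfy $P_i^{\pT_{\bfB}}=P_i$. So $W^{\pT_{\bfB}}=W$ and $Q^{\pT_{\bfB}}=Q$. Together with $\rho=\sigma=\pi_{A_0B_0}$ from your Step 1, both operator chains split along $\idop=\sum_i P_i$ into the \emph{same} scalar constraints for every $i$:
\begin{itemize}
\item $0\le w_i,q_i\le 1/d_{A_0B_0}$;
\item $(1-k)q_i\le w_i\le(1+k)q_i$;
\item $(1-k)(1/d_{A_0B_0}-q_i)\le 1/d_{A_0B_0}-w_i\le(1+k)(1/d_{A_0B_0}-q_i)$.
\end{itemize}
No $1\pm 1/d_{B_0}$ factors appear, and the stated LP needs no correction. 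As written, your proposal does not establish the statement and suggests it may be false.

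\textbf{How to finish.} The feasible set is now a product $\cF^4$ of one two-dimensional polytope $\cF$, so no averaging argument is needed. The objective is linear:
\[
\tr\big[W(J^{\cD^q}-J^{\cD^p})\big]=(p-q)\big[d_{A_0B_0}w_1-\tr(W)/d_{A_0B_0}\big].
\]
For $p>q$, the coefficient of $w_1$ is positive and those of $w_2,w_3,w_4$ are negative (proportional to $\tr P_i$). An optimum therefore takes $(w_1,q_1)=(x,u)$ maximizing $w$ over $\cF$, and $(w_i,q_i)=(y,v)$ minimizing $w$ over $\cF$ for $i=2,3,4$. Using $\sum_{i\ge 2}\tr P_i=d_{A_0B_0}^2-1$, the objective collapses to $(p-q)(d_{A_0B_0}-1/d_{A_0B_0})(x-y)$. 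Conversely, every LP-feasible $(x,y,u,v)$ lifts to a feasible $(W,Q)$ with $w_2=w_3=w_4=y$.

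This is the paper's route: the local twirl from Proposition~\ref{prop:sym_redu_covariance}, then decoupling. You are right that your global twirl in Step 2 is unjustified, but it is also unnecessary, since the two-parameter form comes from the decoupling. The sign and relabelling issue you noticed is harmless, because the LP's feasible region is invariant under $(x,u)\leftrightarrow(y,v)$.
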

\begin{proof}
See Appendix~\ref{appendix:proof_bi_depo}.
\end{proof}

As a direct application of this symmetry reduction, we can evaluate the composite channel discrimination task (c.f.~Section~\ref{sec:composite}) for sets of depolarizing channels. Suppose the noise parameters of the channels are not perfectly characterized, but are known to belong to two disjoint compact sets $P$ and $Q$. The composite discrimination problem is simply reduced to minimizing the scalar distance between the two parameter sets.

\begin{corollary}[Composite discrimination of depolarizing channels]Let $\Omega_0 = \{\cD^p_{A_0B_0\to A_1B_1} : p \in P\}$ and $\Omega_1 = \{\cD^q_{A_0B_0\to A_1B_1} : q \in Q\}$ be two sets of bipartite depolarizing channels defined by compact parameter sets $P, Q \subset [0, 1]$ such that $p > q$ for all $p \in P$ and $q \in Q$. The $k$-ebit-assisted optimal worst-case success probability of discriminating $\Omega_0$ and $\Omega_1$ via a $k$-injectable PPT tester is given by 
\begin{equation}
P_{\suc, e}^{\PPT}\Big(\Omega_0, \Omega_1; k, \frac{1}{2}\Big) = \max_{x,y} \left[ \frac{1}{2} + \frac{1}{2}(\Delta_{P,Q}) \left(d_{A_0B_0} - \frac{1}{d_{A_0B_0}}\right) (x - y) \right]
\end{equation}
subject to the same constraints on $(x, y, u, v)$ as in Proposition~\ref{prop:bi_depo_LP}, where $\Delta_{P,Q} \coloneqq \min_{p \in P, q \in Q} (p - q)$ is the minimum noise gap between the two channel sets.
\end{corollary}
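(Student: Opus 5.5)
The plan is to go through Proposition~\ref{prop:composite_discrimination}, specialised by the symmetry reduction of Proposition~\ref{prop:bi_depo_LP}. The key point is that, for fixed $(x,y,u,v)$, the objective of the linear program is affine in the channel parameters. It depends on $(p,q)$ only through the factor $(p-q)$, and the feasible region does not depend on $(p,q)$ at all.

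The first step is to reduce to a scalar max--min. The set of $k$-injectable PPT testers is convex and compact. The objective in Eq.~\eqref{eq:composite_maxmin} is bilinear in the tester and in the channel parameters. The parametrisation $p\mapsto J^{\cD^p}_{\bfA\bfB}$ is affine, so the channel sets can be taken to be the convex hulls $\mathrm{conv}(P)$ and $\mathrm{conv}(Q)$. Passing to these hulls changes neither the inner minimum of a linear function nor the gap $\Delta_{P,Q}$, since $\min(p-q)$ over the hulls equals $\min P-\max Q$. Sion's minimax theorem then lets us swap max and min. For each fixed pair $(p,q)$ the inner problem is $P^{\PPT}_{\suc,e}(\cD^p,\cD^q;k,\tfrac12)$.

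The second step applies symmetry reduction jointly. Every $\cD^p$ is covariant under the same group $U_{A_0}\ox V_{B_0}\mapsto$ (full unitary twirl on $A_0B_0$), so the twirl in Proposition~\ref{prop:sym_redu_covariance} does not depend on $p$. Twirling a tester therefore leaves the objective unchanged simultaneously for \emph{all} channels in both sets. So in the max--min we may restrict the tester to invariant variables without loss, before any minimisation. The computation behind Proposition~\ref{prop:bi_depo_LP} then shows that the worst-case objective equals
\[
\min_{p\in P,q\in Q}\Big[\tfrac12+\tfrac12(p-q)\big(d_{A_0B_0}-\tfrac{1}{d_{A_0B_0}}\big)(x-y)\Big],
\]
taken over the same feasible polytope in $(x,y,u,v)$.

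The third step evaluates the inner minimum. For fixed feasible $(x,y,u,v)$ the expression is affine in $p-q>0$ with slope proportional to $x-y$. When $x\ge y$, the minimum is attained at the smallest gap, namely $\Delta_{P,Q}$ (this exists by compactness). When $x<y$, the minimum is attained at the largest gap, but the resulting value is below $\tfrac12$. The constraints are symmetric under $(x,u)\leftrightarrow(y,v)$, so swapping those variables yields $x\ge y$ and a value at least $\tfrac12$; hence at the optimum we may assume $x\ge y$. Replacing $p-q$ by $\Delta_{P,Q}$ then gives the claimed formula.

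I expect the main obstacle to be the second step. One has to check that the symmetry-reduced objective really has the stated form for every pair simultaneously; this requires that the reduction to $(x,y,u,v)$ in Appendix~\ref{appendix:proof_bi_depo} uses only covariance and not the specific values of $p,q$. One must also justify applying the minimax theorem in the original tester space rather than only in the reduced LP.
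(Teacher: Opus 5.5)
Your proof is correct. Steps 2--3 are essentially the paper's proof: you insert the symmetry-reduced LP of Proposition~\ref{prop:bi_depo_LP} into the max--min, which is valid because one channel-independent local-unitary twirl $U_{A_0}\ox V_{B_0}$ fixes every $J^{\cD^p}$, and you use $x\ge y$ at the optimum so the inner minimum sits at the gap $\Delta_{P,Q}$. Your $(x,u)\leftrightarrow(y,v)$ swap supplies the justification for $x\ge y$ that the paper only asserts. The Sion step in Step~1 is never used afterwards, although combined with Proposition~\ref{prop:bi_depo_LP} applied to each fixed pair $(p,q)$ it would already give the result on its own.
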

\begin{proof}
By Proposition~\ref{prop:bi_depo_LP}, we have
\begin{equation}
P_{\suc, e}^{\PPT}\Big(\Omega_0, \Omega_1; k, \frac{1}{2}\Big) = \max_{(x,y) \in \cD_k} \min_{p \in P, q \in Q} \left[ \frac{1}{2} + \frac{1}{2}(p - q) \left(d_{A_0B_0} - \frac{1}{d_{A_0B_0}}\right) (x - y) \right],
\end{equation}
where $\mathcal{D}_k$ is the feasible region of the scalar variables $(x, y, u, v)$ parameterized by $k$ and $d_{A_0B_0}$. Because $\cD_k$ is independent of $P$ and $Q$, and the expression $\left(d_{A_0B_0} - \frac{1}{d_{A_0B_0}}\right) (x - y)$ is strictly non-negative for any optimal choice of $x$ and $y$, the inner minimization strictly evaluates to minimizing $(p-q)$ over $P \times Q$. Taking $\Delta_{P,Q} = \min_{p \in P, q \in Q} (p - q)$ completes the proof.
\end{proof}

Using the linear program reduction in Proposition~\ref{prop:bi_depo_LP}, we show that for a pair of bipartite depolarizing channels, the one-shot PPT entanglement cost of minimum-error discrimination is $0$ ebits.

\begin{proposition}[PPT entanglement cost of bipartite depolarizing channels]\label{prop:en_cost_one_shot_bip_depo}
For any bipartite depolarizing channels $\cD^{p}_{A_0B_0\to A_1B_1}, \cD_{A_0B_0\to A_1B_1}^{q}$ with distinct noise parameters $p,q$ and equiprobable priors, the one-shot PPT entanglement cost of minimum-error discrimination is $0$ ebits.
\end{proposition}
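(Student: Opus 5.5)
We must show that at $k=1$ the LP of Proposition~\ref{prop:bi_depo_LP} already attains the global optimum $P_{\suc}(\cD^p,\cD^q;\tfrac12)=\tfrac12+\tfrac14\|\cD^p-\cD^q\|_\Diamond$. The plan has three steps: evaluate the LP at $k=1$, compute the diamond norm, and compare the two values.

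First, the LP at $k=1$. Write $d\coloneqq d_{A_0B_0}$. The constraints $(1-k)u\le x\le(1+k)u$ become $0\le x\le 2u$. The constraints on $1/d-x$ become $0\le 1/d-x\le 2(1/d-u)$. Everything else is the box $x,y,u,v\in[0,1/d]$. Assume without loss of generality $p>q$, so the objective increases with $x-y$. We want $x$ as large as possible and $y$ as small as possible:
\begin{itemize}
\item Take $x=1/d$ with $u=1/d$; then $x\le 2u$ holds and $1/d-x=0\le 2(1/d-u)=0$ holds.
\item Take $y=0$ with $v=0$; then $0\le 0$ holds and $1/d-y=1/d\le 2/d$ holds.
\end{itemize}
This is feasible and gives $x-y=1/d$, which is the maximum over the box. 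Hence
\[
P^{\PPT}_{\suc,e}(\cD^p,\cD^q;1,\tfrac12)=\tfrac12+\tfrac12(p-q)\big(1-1/d^2\big).
\]
If $p<q$, the symmetric choice ($x=0$, $y=1/d$) gives $|p-q|$ in place of $p-q$. The LP's validity comes from Proposition~\ref{prop:bi_depo_LP}, which I take as given.

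Second, the global optimum. Since $\cD^p-\cD^q=(q-p)(\id-\cR)$ with $\cR$ the completely depolarizing channel, $\|\cD^p-\cD^q\|_\Diamond=|p-q|\,\|\id-\cR\|_\Diamond$. I would upper bound the diamond norm with the dual SDP~\eqref{sdp:sucprob_dual_diamond} and lower bound it with the primal~\eqref{Eq:Diamond}.
\begin{itemize}
\item \emph{Lower bound.} Use the maximally entangled input and $W=d\,\Phi_{A_0A_1}\ox\Phi_{B_0B_1}$ (for $p>q$). This gives $\tr W(J^{\cD^p}-J^{\cD^q})=d(p-q)(1-1/d^2)$, so the primal value is $\tfrac12+\tfrac12(p-q)(1-1/d^2)$.
\item \emph{Upper bound.} For the dual, take $C=\tfrac12(J^{\cD^p}-J^{\cD^q})_+$. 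Its positive part is supported on the isotropic projector, giving $\tr_B C=\tfrac{d}{2}(p-q)(1-1/d^2)\cdot\tfrac{\idop}{d}\le\alpha\idop$ with $\alpha=\tfrac12(p-q)(1-1/d^2)$.
\end{itemize}
Equivalently, this is the standard fact $\|\id-\cR\|_\Diamond=2(1-1/d^2)$, and the bounds match.

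Third, combine. The two values coincide, so $P^{\PPT}_{\suc,e}(\cD^p,\cD^q;1,\tfrac12)=P_{\suc}$. The minimizing $k$ in the definition of the one-shot cost is therefore $k=1$, and $E^{(1)}_{c,\PPT}=\log 1=0$.

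The main obstacle is the normalization bookkeeping. The LP objective's factor $(d-1/d)$ and the variable range $[0,1/d]$ must be checked to match the diamond-norm value exactly, using the paper's Choi normalization $J=d(1-p)\Phi\ox\Phi+p\,\idop/d_{\bfA\bfB}$, including the factor $d_{A_1B_1}$ in the tester trace. I would therefore verify the global value directly from the paper's own SDP~\eqref{Eq:Diamond} in the same units, rather than quote the diamond norm externally.
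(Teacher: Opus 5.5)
Your proof is correct and follows essentially the paper's route. You take a feasible point of the $k=1$ linear program of Proposition~\ref{prop:bi_depo_LP} (your $u=1/d$, $v=0$ instead of the paper's $u=v=1/(2d)$; both work because the $(x,u)$ and $(y,v)$ constraints decouple) and match it against a dual certificate for the diamond-norm SDP~\eqref{sdp:sucprob_dual_diamond}.

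The only slips are the bookkeeping issues you already anticipated, and they are harmless. The global primal witness must be $W=\tfrac1d\,\Phi_{A_0A_1}\ox\Phi_{B_0B_1}$, to satisfy $W\le\idop/d$. It pairs with $J^{\cD^q}-J^{\cD^p}$ when $p>q$, and it is the positive part of that difference, not of $J^{\cD^p}-J^{\cD^q}$, that sits on $\Phi\ox\Phi$. This step is redundant anyway, since $P_{\suc}\ge P^{\PPT}_{\suc,e}(\cD^p,\cD^q;1,\tfrac12)$ holds trivially. Either orientation of the positive part also gives the same $\alpha=\tfrac12(p-q)(1-1/d^2)$.
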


\begin{proof}
From Proposition~\ref{prop:bi_depo_LP}, when $k = 1$, let 
\begin{equation}
    x \coloneqq \frac{1}{d_{A_0}d_{B_0}}, \quad y \coloneqq 0, \quad u = v \coloneqq \frac{1}{2d_{A_0}d_{B_0}}.
\end{equation}
By simple calculation, these solutions are feasible. Thus we have 
\begin{equation*}
    P^{\PPT}_{\suc,e}\Big(\cD^{p}_{A_0B_0\to A_1B_1},\cD_{A_0B_0\to A_1B_1}^{q}; k, \frac{1}{2}\Big) \ge \frac{1}{2} +  \frac{(p - q)(d_{A_0}^2d_{B_0}^2-1)}{2d_{A_0}^2d_{B_0}^2}.
\end{equation*}
Consider the dual SDP~\eqref{sdp:sucprob_dual_diamond}, let
\begin{equation}
    \alpha \coloneqq \frac{(p - q)(d_{A_0}^2d_{B_0}^2-1)}{2d_{A_0}^2d_{B_0}^2}, \quad C_{\bfA\bfB} \coloneqq \frac{(p - q)(\idop_{\bfA\bfB}-\Phi_{d_{A_0}}\ox \Phi_{d_{B_0}})}{2d_{A_0}d_{B_0}}.
\end{equation}
It can be checked that these solutions are also feasible for dual SDP. Hence, we conclude that
\begin{equation*}
P^{\PPT}_{\suc,e}\Big(\cD^{p},\cD^{q}; k, \frac{1}{2}\Big)  = P_{\suc}\Big(\cD^p,\cD^q; \frac{1}{2}\Big) = \frac{1}{2} +  \frac{(p - q)(d_{A_0}^2d_{B_0}^2-1)}{2d_{A_0}^2d_{B_0}^2}.
\end{equation*}
\end{proof}

This result is physically intuitive because the bipartite depolarizing channel acts globally and symmetrically across Alice and Bob's joint system; the optimal discrimination strategy does not require them to establish any further non-local correlations. However, the situation changes drastically when we consider point-to-point depolarizing channels, where the noise acts purely locally on a system transmitted from Alice to Bob. First, we can also have the symmetry reduction for the point-to-point case based on Proposition~\ref{prop:bi_depo_LP} as follows.

\begin{proposition}[Symmetry reduction for point-to-point depolarizing channels]\label{prop:pp_depo_LP}
Let $\cD_{A\to B}^p$ and $\cD_{A\to B}^q$ be two depolarizing channels with distinct noise parameters $p$ and $q$, respectively where $d_{A} = d_{B} = d$. The maximal $k$-ebit-assisted success probability of discriminating $\cD_{A\to B}^p$ and $\cD_{A\to B}^q$ via a $k$-injectable PPT tester is given by the following linear program.
\begin{equation}
\begin{aligned}
P^{\PPT}_{\suc,e}(\cD_{A\to B}^p, \cD_{A\to B}^q; k, \frac{1}{2}) =\max &\; \frac{1}{2} + \frac{1}{2} (p - q) \left(d - \frac{1}{d}\right) (x - y)\\
{\rm s.t.}
&\;\; x, y, u, v \in [0, 1/d],\\
&\;\; \lambda_{\pm} \coloneqq y \pm \frac{x - y}{d}, \qquad
\mu_{\pm} \coloneqq v \pm \frac{u - v}{d},\\
&\;\; (1 - k) \mu_{\pm} \leq \lambda_{\pm} \leq (1 + k) \mu_{\pm},\\
&\;\; (1 - k)\left( \frac{1}{d} - \mu_{\pm} \right) \leq \frac{1}{d} - \lambda_{\pm} \leq (1 + k)\left( \frac{1}{d} - \mu_{\pm} \right).
\end{aligned}
\end{equation}
\end{proposition}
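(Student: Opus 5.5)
The plan is to specialize the point-to-point SDP~\eqref{sdp:kebit_sucprob_pp} to the depolarizing pair, with $\lambda=1/2$, and then collapse every variable with the symmetry reduction of Proposition~\ref{prop:sym_redu_covariance}. The relevant group is the full unitary group $\cU(d)$, which is compact, so the Haar twirl is available. Both $\cD^p$ and $\cD^q$ satisfy $\cD(UXU^\dagger)=U\cD(X)U^\dagger$ for every unitary $U$, so they are covariant in the point-to-point sense (trivial $B_0$ and $A_1$). By Lemma~\ref{lem:covariant_choi}, their Choi operators
\[
J^{\cD^p}_{AB}=d(1-p)\Phi_d+\tfrac{p}{d}\idop_{AB}
\]
are invariant under $\overline{U}\ox U$. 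Proposition~\ref{prop:sym_redu_covariance} then lets us assume, without changing the optimum, that $W_{AB}$ and $Q_{AB}$ are $\overline{U}\ox U$-invariant and that $\rho_A$ and $\sigma_A$ are $\overline{U}$-invariant.

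Next I would identify the invariant operators. The commutant of $\overline{U}\ox U$ is spanned by $\Phi_d$ and $\idop-\Phi_d$, the same isotropic-twirl fact used in the proof of Theorem~\ref{thm:sdp_k_inject_PPT}. Any $\overline{U}$-invariant state is proportional to the identity, so Schur's lemma forces $\rho_A=\sigma_A=\idop_A/d$. I would therefore write
\[
W_{AB}=x\,\Phi_d+y\,(\idop_{AB}-\Phi_d),\qquad Q_{AB}=u\,\Phi_d+v\,(\idop_{AB}-\Phi_d).
\]
Since $\Phi_d$ and $\idop-\Phi_d$ are orthogonal projectors and $\rho_A\ox\idop_B=\idop/d$, the constraints $0\le W\le\rho_A\ox\idop_B$ and $0\le Q\le\sigma_A\ox\idop_B$ become exactly $x,y,u,v\in[0,1/d]$.

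For the objective, I would compute
\[
J^{\cD^p}-J^{\cD^q}=(p-q)\Big(\tfrac1d\idop-d\,\Phi_d\Big).
\]
Tracing against $W$ gives $(p-q)\big[x/d+(d^2-1)y/d-dx\big]=(p-q)(d-\tfrac1d)(y-x)$. This matches the claimed objective after interchanging the labels $x\leftrightarrow y$ and $u\leftrightarrow v$ (equivalently, exchanging the roles of $p$ and $q$). I would state this relabeling explicitly. It is harmless because the constraint set is symmetric under the simultaneous swap of the pairs $(x,u)$ and $(y,v)$, which is visible once the PPT constraints are diagonalized.

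For the partial-transpose constraints, I would use $\Phi_d^{\pT_B}=F/d$ with $F=P_+-P_-$ the swap operator. This gives $W^{\pT_B}=y\,\idop+\frac{x-y}{d}F$, whose eigenvalues on the ranges of $P_\pm$ are $\lambda_\pm=y\pm\frac{x-y}{d}$. Likewise $Q^{\pT_B}$ has eigenvalues $\mu_\pm=v\pm\frac{u-v}{d}$. Also $(\rho_A\ox\idop_B)^{\pT_B}=\idop/d$.

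All operators in the last two constraint lines of~\eqref{sdp:kebit_sucprob_pp} are then simultaneously diagonal in the $P_\pm$ decomposition. Each operator inequality therefore splits into two scalar inequalities: $(1-k)\mu_\pm\le\lambda_\pm\le(1+k)\mu_\pm$, and $(1-k)(\tfrac1d-\mu_\pm)\le\tfrac1d-\lambda_\pm\le(1+k)(\tfrac1d-\mu_\pm)$. This yields the stated linear program. Conversely, every feasible $(x,y,u,v)$ of the LP lifts to a feasible SDP point with the same value, so the two optima coincide.

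The main obstacle I anticipate is bookkeeping rather than anything conceptual. I need to confirm that the twirl of Proposition~\ref{prop:sym_redu_covariance} really lands in the two-dimensional isotropic algebra, which requires the group to act irreducibly enough (here $\cU(d)$ or a unitary $2$-design suffices). I also need to track sign and transpose conventions ($\overline{U}$ versus $U$, and which of $x,y$ multiplies $\Phi_d$) so that the objective comes out with the sign $(x-y)$ stated. If a convention clash appears, I would fix it by the relabeling above or by checking against the $k=d$ case, where the LP value should reproduce the diamond-norm value.
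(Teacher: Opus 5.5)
Your route is the paper's: a Haar twirl with $\overline{U}\ox U$, then $\rho_A=\sigma_A=\idop/d$ and $W,Q\in\operatorname{span}\{\Phi_d,\idop-\Phi_d\}$, then $\Phi_d^{\pT_B}=F/d$, then splitting each operator inequality over the two eigenspaces of $F$. You also correctly notice that with $\cN=\cD^p$, $\cM=\cD^q$ the objective comes out as $\frac12+\frac12(p-q)(d-\frac1d)(y-x)$. The paper's proof avoids this by evaluating $\tr[W(J^{\cD^q}-J^{\cD^p})]$, i.e.\ it silently lets $\cD^q$ be the first hypothesis.

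\textbf{The gap.} Your justification of the sign flip fails: the feasible set is \emph{not} invariant under $(x,u)\leftrightarrow(y,v)$. That swap sends $\lambda_\pm=y\pm\frac{x-y}{d}$ to $x\mp\frac{x-y}{d}$, which is not a permutation of $\{\lambda_+,\lambda_-\}$. Concretely, take the $k=1$ feasible point $x=u=\frac1d$, $y=\frac{1}{d(d+1)}$, $v=\frac{d+2}{2d(d+1)}$ used in the proof of Proposition~\ref{prop:en_cost_ppdepo}. The swap maps it to a point with $\lambda_-=\frac1d+\frac{1}{d(d+1)}>\frac1d$, which violates $\frac1d-\lambda_-\ge(1-k)(\frac1d-\mu_-)=0$. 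So the literal relabeling produces a different LP from the stated one. It is also not equivalent to exchanging $p$ and $q$: that changes only the objective, whereas the relabeling changes the constraints too.

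\textbf{The fix.} The correct symmetry is complementation of the two tester elements, $W\mapsto\rho_A\ox\idop_B-W$ and $Q\mapsto\sigma_A\ox\idop_B-Q$. In coordinates this is $\tau:(x,y,u,v)\mapsto(\frac1d-x,\frac1d-y,\frac1d-u,\frac1d-v)$.
\begin{itemize}
\item It preserves the box $[0,\frac1d]^4$.
\item It sends $\lambda_\pm\mapsto\frac1d-\lambda_\pm$ and $\mu_\pm\mapsto\frac1d-\mu_\pm$, so it interchanges the two families of scalar constraints.
\item It turns $x-y$ into $y-x$.
\end{itemize}
Hence $\tau$ maps the feasible set bijectively onto itself and carries the stated objective to yours, so the two LPs have the same optimum.

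At the SDP level, the same map together with $\tr_B J^{\cD^p}=\tr_B J^{\cD^q}=\idop_A$ gives $P^{\PPT}_{\suc,e}(\cD^p,\cD^q;k,\frac12)=P^{\PPT}_{\suc,e}(\cD^q,\cD^p;k,\frac12)$. This is the properly justified version of your ``exchange the roles of $p$ and $q$''. Running your derivation on the ordered pair $(\cD^q,\cD^p)$ then yields exactly the stated LP. With this replacement the rest of your argument goes through, including the lifting of LP-feasible points back to feasible SDP points, which the paper leaves implicit.
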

\begin{proof}
See Appendix~\ref{appendix:proof_pp_depo}.
\end{proof}

Utilizing Proposition~\ref{prop:pp_depo_LP}, we demonstrate that the PPT entanglement cost for a pair of distinct point-to-point depolarizing channels is nonzero. Moreover, surprisingly, one ebit suffices for optimal discrimination between any two different point-to-point depolarizing channels of arbitrary dimension via PPT testers.

\begin{proposition}[PPT entanglement cost of point-to-point depolarizing channels]\label{prop:en_cost_ppdepo}
For any $d$-dimensional point-to-point depolarizing channels $\cD^p_{A\to B},\cD_{A\to B}^q$ with distinct noise parameters $p,q$ and equiprobable priors, the one-shot PPT entanglement cost of minimum-error discrimination is $1$ ebit.
\end{proposition}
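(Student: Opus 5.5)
We will work entirely with the linear program of Proposition~\ref{prop:pp_depo_LP}, and we assume without loss of generality that $p>q$; the case $p<q$ follows by exchanging the roles of the two channels, i.e.\ of $(x,u)$ and $(y,v)$. The plan has three steps: (i) identify the global optimum $P_{\suc}$ and the LP point that attains it; (ii) show that $k=1$ strictly falls short; (iii) exhibit a feasible point for $k=2$ that attains it. Together these give $E^{(1)}_{c,\PPT}=\log 2=1$ ebit.

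\emph{Step (i).} We first compute the global value. Take the primal candidate $W_{AB}=\Phi_d$ with $\rho_A=\pi_A$ in SDP~\eqref{Eq:Diamond}. This gives
\begin{equation*}
P_{\suc}=\tfrac12+\tfrac12(p-q)\big(1-\tfrac1{d^2}\big).
\end{equation*}
To certify optimality, we mimic the proof of Proposition~\ref{prop:en_cost_one_shot_bip_depo}. In the dual SDP~\eqref{sdp:sucprob_dual_diamond} we take $C_{AB}=\frac{p-q}{2d}(\idop-\Phi_d)$ and $\alpha=\frac{(p-q)(d^2-1)}{2d^2}$, which matches the primal value. In the LP of Proposition~\ref{prop:pp_depo_LP}, this optimum corresponds to $x-y=1/d$. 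Since $x,y\in[0,1/d]$, this forces $x=1/d$ and $y=0$, and hence $\lambda_\pm=\pm 1/d^2$.

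\emph{Step (ii): $k=1$ is insufficient.} For $k=1$ the injected resource $\Phi_1$ is trivial, so the LP at $k=1$ is exactly the unassisted PPT value. Its constraints reduce to $0\le\lambda_\pm\le 2\mu_\pm$ together with $\lambda_\pm\le 1/d$. Write $t=x-y$. The condition $\lambda_-\ge0$ gives $y\ge t/d$. Combined with $x=y+t\le 1/d$, this yields
\begin{equation*}
t\big(1+\tfrac1d\big)\le\tfrac1d,\qquad\text{i.e.}\qquad t\le\frac{1}{d+1}<\frac1d .
\end{equation*}
Because $p\neq q$, the objective at $k=1$ is therefore strictly below $P_{\suc}$. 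The point to verify carefully here is that the $k=1$ instance of Theorem~\ref{thm:sdp_k_inject_PPT} and Proposition~\ref{prop:pp_depo_LP} is indeed the unassisted problem. This holds because $\idop-\Phi_1=0$ when $k=1$, so the variable $Q$ plays no role.

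\emph{Step (iii): $k=2$ suffices.} Set $x=1/d$ and $y=0$, so that $\lambda_\pm=\pm1/d^2$, and choose $u=v=c$, so that $\mu_\pm=c$. The constraints at $k=2$ become:
\begin{itemize}
\item[a).] $-c\le \pm 1/d^2\le 3c$, which holds if $c\ge 1/d^2$;
\item[b).] $-(1/d-c)\le 1/d\mp 1/d^2$, which holds automatically for $c\le 1/d$;
\item[c).] $1/d\mp 1/d^2\le 3(1/d-c)$; the binding case is $c\le \frac{2d-1}{3d^2}$.
\end{itemize}
Taking $c=1/d^2\in[0,1/d]$ satisfies all of these whenever $d\ge2$. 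Hence the LP value at $k=2$ equals $P_{\suc}$.

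The main obstacle is step (ii). There we must argue that no feasible point at $k=1$ reaches $t=1/d$, using only the constraints on $\lambda_\pm$. We also rely on the reduction in Proposition~\ref{prop:pp_depo_LP} being exact rather than a relaxation, so that the LP bound genuinely certifies the gap.
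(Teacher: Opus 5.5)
Your proof is correct, but it reaches the $k=1$ gap by a different route from the paper. Steps (i) and (iii) match the paper: the same global value is certified through~\eqref{sdp:sucprob_dual_diamond}, and the same $k=2$ point $x=1/d$, $y=0$, $u=v=1/d^2$ is used.

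\textbf{The $k=1$ step.} The paper computes the exact $k=1$ value $\frac12+\frac{(p-q)(d-1)}{2d}$ by pairing two certificates:
\begin{itemize}
\item the LP-feasible point $x=u=1/d$, $y=\frac{1}{d(d+1)}$, $v=\frac{d+2}{2d(d+1)}$;
\item an explicit feasible point of the $k$-injectable dual~\eqref{sdp:kebit_sucprob_dual}, namely $C=\frac{(p-q)(d-1)}{2}\Phi_d$, $E=\frac{p-q}{2d}(\idop-F)$, $\alpha=\frac{(p-q)(d-1)}{2d}$.
\end{itemize}
You instead bound the LP directly from $\lambda_-\ge0$ and $x\le1/d$, getting $x-y\le\frac{1}{d+1}$. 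This is the same number, since the paper's primal point attains it. The two arguments are LP-dual to each other. $E=\frac{p-q}{d}\cdot\frac{\idop-F}{2}$ lives on the antisymmetric subspace, so it is the multiplier of $\lambda_-\ge0$. $C\propto\Phi_d$ is the multiplier of $x\le1/d$.

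\textbf{What each route buys.} Yours is shorter and never touches the dual SDP. It is valid only because Proposition~\ref{prop:pp_depo_LP} is an exact reduction: an upper bound on the symmetry-restricted problem bounds the full SDP only if the restriction is lossless. You correctly flag this dependence. The paper's certificate bounds the full SDP without the symmetry reduction. It also pins down the exact $k=1$ value, which you do not need.

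\textbf{Slips to fix.}
\begin{itemize}
\item $W_{AB}=\Phi_d$ is infeasible in~\eqref{Eq:Diamond}, since $\Phi_d$ has eigenvalue $1$ while $\pi_A\ox\idop_B=\idop_{AB}/d$. You mean $W_{AB}=\Phi_d/d$, i.e.\ $x=1/d$, which gives the value you state. This primal is redundant anyway, since your $k=2$ point already lower-bounds $P_{\suc}$.
\item Your certificate $C_{AB}=\frac{p-q}{2d}(\idop-\Phi_d)$ is feasible only for the labeling $J^{\cN}-J^{\cM}=J^{\cD^p}-J^{\cD^q}=(p-q)(\idop_{AB}/d-d\Phi_d)$. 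That is opposite to the LP's labeling, whose objective rewards weight on $\Phi_d$. For the LP's labeling use $C_{AB}=\frac{(p-q)(d^2-1)}{2d}\Phi_d$ with the same $\alpha$.
\item The symmetry behind your WLOG is not $(x,u)\leftrightarrow(y,v)$, which does not preserve the constraints. It is $W\mapsto\rho\ox\idop-W$, $Q\mapsto\sigma\ox\idop-Q$, i.e.\ $(x,y,u,v)\mapsto(1/d-x,1/d-y,1/d-u,1/d-v)$. This sends $\lambda_\pm\mapsto1/d-\lambda_\pm$ and $\mu_\pm\mapsto1/d-\mu_\pm$, swaps the two constraint families, and flips the sign of $x-y$.
\item $Q$ still appears in the $k=1$ LP via $\lambda_\pm\le2\mu_\pm$, so ``$Q$ plays no role'' is loose. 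Nothing hinges on it: the cost is defined through $P^{\PPT}_{\suc,e}(\cdot;k)$ itself, and your bound uses only $Q$-independent constraints.
\end{itemize}
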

\begin{proof}
First, we will show that when $k=2$,
\begin{equation*}
    P^{\PPT}_{\suc,e}\Big(\cD^p, \cD^q; 2,\frac{1}{2}\Big) = P_{\suc}\Big(\cD^p,\cD^q; \frac{1}{2}\Big) = \frac{1}{2} +  \frac{(p - q)(d^2-1)}{2d^2},
\end{equation*}
which means $k=2$ is sufficient. Next, we will show that when $k=1$,
\begin{equation*}
    P^{\PPT}_{\suc,e}\Big(\cD^p, \cD^q; 1,\frac{1}{2}\Big) < P^{\PPT}_{\suc,e}\Big(\cD^p, \cD^q; 2,\frac{1}{2}\Big),
\end{equation*}
which indicates that $k=2$ is necessary. 

When $k=2$, for Proposition \ref{prop:pp_depo_LP}, let
\begin{equation}
    x \coloneqq \frac{1}{d}, \quad y \coloneqq 0, \quad u = v \coloneqq \frac{1}{d^2}.
\end{equation}
It follows that 
\begin{equation*}
    \lambda_{\pm} = \pm \frac{1}{d^2},\quad \mu_{\pm} = \frac{1}{d^2},
\end{equation*}
and constitute a feasible solution since
\begin{equation*}
    -\frac{1}{d^2} \le \pm \frac{1}{d^2} \le \frac{3}{d^2}, ~-\Big(\frac{1}{d}-\frac{1}{d^2}\Big) \le \frac{1}{d} \mp \frac{1}{d^2} \le 3 \Big(\frac{1}{d} - \frac{1}{d^2}\Big).
\end{equation*}
Thus we have
\begin{equation*}
    P^{\PPT}_{\suc,e}\Big(\cD^p, \cD^q; 2,\frac{1}{2}\Big) \ge \frac{1}{2} +\frac{(p-q)(d^2-1)}{2d^2}.
\end{equation*}
Further, we consider the dual SDP~\eqref{sdp:sucprob_dual_diamond}. Let 
\begin{equation}
    C_{AB} \coloneqq \frac{(p-q)(d^2-1)}{2d} \Phi_d, ~\alpha \coloneqq \frac{(p-q)(d^2-1)}{2d^2}.
\end{equation}
It can be checked that these solutions are feasible, and we have
\begin{equation*}
    P_{\suc}\Big(\cD^p,\cD^q; \frac{1}{2}\Big) \le  \frac{1}{2} +  \frac{(p - q)(d^2-1)}{2d^2}.
\end{equation*}
Since $P_{\suc}(\cD^p,\cD^q;1/2) \ge P^{\PPT}_{\suc,e}\Big(\cD^p, \cD^q; 2,\frac{1}{2}\Big)$, we have
\begin{equation*}
    P^{\PPT}_{\suc,e}\Big(\cD^p, \cD^q; 2,\frac{1}{2}\Big) = P_{\suc}\Big(\cD^p,\cD^q;\frac{1}{2}\Big) = \frac{1}{2} +  \frac{(p - q)(d^2-1)}{2d^2}.
\end{equation*}
When $k=1$, let 
\begin{equation}
    x \coloneqq \frac{1}{d}, \quad y \coloneqq \frac{1}{d(d+1)},\quad u \coloneqq \frac{1}{d},\quad v \coloneqq \frac{d+2}{2d(d+1)}.
\end{equation}
It follows that
\begin{equation*}
    \lambda_{\pm} = \frac{1 \pm 1}{d(d+1)},\quad \mu_{\pm} = \frac{d+2 \pm 1}{2d(d+1)},
\end{equation*}
and constitute a feasible solution since
\begin{equation*}
0 \le \frac{1 \pm 1}{d(d+1)} \le \frac{d+2 \pm 1}{d(d+1)},\quad 
0 \le \left( \frac{1}{d} - \frac{1 \pm 1}{d(d+1)} \right) \le 2\left( \frac{1}{d} - \frac{d+2 \pm 1}{2d(d+1)} \right).
\end{equation*}
Thus, we have 
\begin{equation*}
    P^{\PPT}_{\suc,e}\Big(\cD^p, \cD^q;1,\frac{1}{2}\Big) \geq \frac{1}{2} +  \frac{(p - q)(d-1)}{2d}.
\end{equation*}
Now consider the dual SDP~\eqref{sdp:kebit_sucprob_dual} for point-to-point channels
\begin{equation*}
\begin{aligned}
    P^{\PPT}_{\suc,e}(\cN, \cM; k, \lambda) = \min &\;\; 1-\lambda + \alpha + \beta \\
    {\rm s.t.}
    &\;\; C_{AB}, ~D_{AB}, ~E_{AB},~G_{AB}, ~H_{AB}, ~K_{AB} \ge 0, \\
    &\;\; \lambda J_{AB}^{\cN} - (1-\lambda) J_{AB}^{\cM} \le H_{AB}^{\pT_B} - K_{AB}^{\pT_B} + G_{AB}^{\pT_B} - E_{AB}^{\pT_B} + C_{AB},\\
    &\;\; (1+k)(G_{AB}^{\pT_B}-K_{AB}^{\pT_B}) + (1-k)(H_{AB}^{\pT_B}-E_{AB}^{\pT_B}) \le D_{AB}, \\
    &\; \tr_{B}(C_{AB}+H_{AB}-K_{AB}) \le \alpha \idop_A,\\
    &\; \tr_{B} \big[ D_{AB}+(k-1)H_{AB}+(k+1)K_{AB} \big] \le \beta \idop_A.
\end{aligned}
\end{equation*}
Construct
\begin{equation}
    \alpha \coloneqq \frac{(p - q)(d-1)}{2d}, ~C_{AB} \coloneqq \frac{(p - q)(d-1)}{2}\Phi_d, ~E_{AB} \coloneqq \frac{(p - q)}{2d}(\idop - F).
\end{equation}
It can be checked that these solutions are feasible, and we have
\begin{equation*}
    P^{\PPT}_{\suc,e}\Big(\cD^p, \cD^q;1,\frac{1}{2}\Big) \le  \frac{1}{2}+ \frac{(p - q)(d-1)}{2d}.
\end{equation*}
Thus, we conclude that
\begin{equation*}
    P^{\PPT}_{\suc,e}\Big(\cD^p, \cD^q;1,\frac{1}{2}\Big) = \frac{1}{2}+ \frac{(p - q)(d-1)}{2d}.
\end{equation*}
Hence, we have shown that
\begin{equation*}
    P^{\PPT}_{\suc,e}\Big(\cD^p, \cD^q; 1,\frac{1}{2}\Big) < P^{\PPT}_{\suc,e}\Big(\cD^p, \cD^q; 2,\frac{1}{2}\Big) = P_{\suc}\Big(\cD^p,\cD^q; \frac{1}{2}\Big),
\end{equation*}
which completes the proof.
\end{proof}

\begin{remark}
A channel $\cN_{A\to B}$ is said
to be PPT if $\pT\circ \cN$ is a quantum channel, where $\pT: \cL(\cH_B)\to \cL(\cH_B)$ is the transpose map~\cite{Horodecki_2000}. Equivalently, $\cN_{A\to B}$ is PPT if and only if its Choi operator is PPT, i.e., $(J_{AB}^{\cN})^{\pT_B} \geq 0$. Notably, PPT channels can distribute bound entanglement but never distillable entanglement. 
A point-to-point depolarizing channel $\cD^p_{A\to B}(X_A) = (1-p)X_A + p \tr(X_A) \frac{\idop_A}{d_A}$ is PPT if and only if $p\geq \frac{d}{1+d}$. Proposition~\ref{prop:en_cost_ppdepo} demonstrates that for any two PPT depolarizing channels $\cD^p_{A\to B}$ and $\cD^q_{A\to B}$ with $p,q \geq \frac{d}{1+d}$, optimal discrimination requires nontrivial entanglement resources, even though these channels themselves cannot generate non-positive partial transpose (NPT) entanglement.
\end{remark}

\subsubsection{Depolarized SWAP channel}\label{sec:depo_swap}
Bipartite communication channels that actively exchange information between Alice and Bob naturally exhibit cross-covariance. A paradigmatic example of such dynamics is the depolarized SWAP channel, which applies a global SWAP operation subject to uniform white noise, i.e.,
\begin{equation}
    \cS^p_{A_0B_0\to A_1B_1}(X) = (1-p) F_{A_0B_0} X F_{A_0B_0} + p \tr(X)\frac{\idop_{A_1B_1}}{d_{A_1B_1}},
\end{equation}
where $F_{A_0B_0}$ is the SWAP operator between $A_0$ and $B_0$. In the following, we apply Proposition~\ref{prop:sym_redu_covariance} to reduce the SDP formulation for a pair of distinct depolarized SWAP channels to a tractable linear program.

\begin{proposition}[Symmetry reduction for depolarized SWAP channels]\label{prop:depo_swap_LP}
Let $\cS^{p}_{A_0B_0\to A_1B_1}$ and $\cS_{A_0B_0\to A_1B_1}^{q}$ be two depolarized SWAP channels with distinct noise parameters $p$ and $q$, respectively, where $d_A=d_B = d$. The maximal $k$-ebit-assisted success probability of discriminating $\cS^{p}_{A_0B_0\to A_1B_1},\cS_{A_0B_0\to A_1B_1}^{q}$ via a $k$-injectable PPT tester is given by the following linear program.
\begin{equation*}
P_{suc, e}^{\PPT}(\cS^p, \cS^q; k, \frac{1}{2}) = \max \frac{1}{2} + \frac{p-q}{2} \left[ \left(d^2 - \frac{1}{d^2}\right)w_1 - \frac{d^2-1}{d^2}(w_2 + w_3) - \frac{(d^2-1)^2}{d^2}w_4 \right]    
\end{equation*}
subject to $w_i, q_i \in [0, 1/d^2]$ for $i \in \{1, 2, 3, 4\}$, and for all $s_1, s_2 \in \{+1, -1\}$:
\begin{itemize}
    \item[i).] $(1-k)\mu_{s_1, s_2}(Q) \le \lambda_{s_1, s_2}(W) \le (1+k)\mu_{s_1, s_2}(Q)$,
    \item[ii).] $(1-k)\left[\frac{1}{d^2} - \mu_{s_1, s_2}(Q)\right] \le \frac{1}{d^2} - \lambda_{s_1, s_2}(W) \le (1+k)\left[\frac{1}{d^2} - \mu_{s_1, s_2}(Q)\right]$,
\end{itemize}
where the eigenvalue functions $\lambda_{s_1, s_2}(W)$ and $\mu_{s_1, s_2}(Q)$ are defined by
\[
\lambda_{s_1, s_2}(W) \coloneqq w_4 + s_1\frac{w_2 - w_4}{d} + s_2\frac{w_3 - w_4}{d} + s_1 s_2\frac{w_1 - w_2 - w_3 + w_4}{d^2},
\]
and $\mu_{s_1, s_2}(Q)$ takes the exact same form substituting $w_i$ with $q_i$.
\end{proposition}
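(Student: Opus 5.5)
We invoke the cross-covariant version of Proposition~\ref{prop:sym_redu_covariance}. With $d_A=d_B=d$, the depolarized SWAP channel satisfies $\cS^p\big((U\ox V)X(U\ox V)^\dagger\big)=(V\ox U)\cS^p(X)(V\ox U)^\dagger$ for all $U,V\in\cU(d)$. Hence $\cS^p$ and $\cS^q$ are $(U_{A_0},V_{B_0})$-cross-covariant for the group $\cU(d)\times\cU(d)$. The relevant twirl is $\Gamma_{\bfA\bfB}=\overline{U}_{A_0}\ox\overline{V}_{B_0}\ox V_{A_1}\ox U_{B_1}$.

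\textbf{Step 1: invariant variables.} Regroup the systems as $(A_0B_1)$, acted on by $\overline U\ox U$, and $(B_0A_1)$, acted on by $\overline V\ox V$. The commutant of each is spanned by $\Phi_{A_0B_1}$ and $\idop-\Phi_{A_0B_1}$ (isotropic twirling, as in the proof of Theorem~\ref{thm:sdp_k_inject_PPT}), and similarly for $B_0A_1$. Therefore the invariant $W$ can be written as
\[
W=w_1' P_{A_0B_1}\ox P_{B_0A_1}+w_2' P\ox P^\perp+w_3' P^\perp\ox P+w_4' P^\perp\ox P^\perp,
\]
with $P=\Phi_d$ and $P^\perp=\idop-\Phi_d$. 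Equivalently, $W$ lies in the span of $\Phi\ox\Phi,\ \Phi\ox\idop,\ \idop\ox\Phi,\ \idop\ox\idop$, and the same holds for $Q$. The invariant $\rho_{A_0B_0}$ and $\sigma_{A_0B_0}$ must commute with all $\overline U\ox\overline V$, so $\rho=\sigma=\idop/d^2$. This is what makes the bounds $1/d^2$ appear. The Choi operator is $J^{\cS^p}=d^2(1-p)\Phi_{A_0B_1}\ox\Phi_{B_0A_1}+p\,\idop/d^2$. Computing $\tr[W(J^{\cS^p}-J^{\cS^q})]$ in the basis of the four orthogonal projectors gives the stated linear objective. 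I will choose $w_i$ as the eigenvalues of $W$ on these four projectors, with $w_1$ on $\Phi\ox\Phi$ of multiplicity $1$, $w_2,w_3$ of multiplicity $d^2-1$, and $w_4$ of multiplicity $(d^2-1)^2$. The coefficient pattern $d^2-1/d^2,\ -(d^2-1)/d^2,\ -(d^2-1)^2/d^2$ then follows from $\tr[P_i(\Phi\ox\Phi-\idop/d^4)]$ up to the factor $d^2$. The constraint $0\le W\le\idop/d^2$ gives $w_i\in[0,1/d^2]$.

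\textbf{Step 2: partial transposes.} Taking $\pT_{\bfB}=\pT_{B_0B_1}$ transposes exactly one party of each pair: $B_1$ in $A_0B_1$ and $B_0$ in $B_0A_1$. By the computation in Theorem~\ref{thm:sdp_k_inject_PPT}, $\Phi^{\pT}=(P_+-P_-)/d$ and $\idop^{\pT}=P_++P_-$, where $P_\pm$ are the symmetric and antisymmetric projectors. So $W^{\pT_{\bfB}}$ is diagonal in the four projectors $P_{s_1}\ox P_{s_2}$ with $s_1,s_2\in\{\pm\}$. Expanding $W$ as $w_4\idop\ox\idop+(w_2-w_4)\Phi\ox\idop+(w_3-w_4)\idop\ox\Phi+(w_1-w_2-w_3+w_4)\Phi\ox\Phi$ (after the appropriate rescaling of the $\Phi$ terms) gives precisely $\lambda_{s_1,s_2}(W)$. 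The same expansion for $Q$ gives $\mu_{s_1,s_2}(Q)$.

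\textbf{Step 3: constraints.} Since $\rho=\sigma=\idop/d^2$ and $(\idop/d^2)^{\pT}=\idop/d^2$, the operator inequalities in SDP~\eqref{sdp:kebit_sucprob} become commuting-diagonal. They therefore reduce to the scalar inequalities i) and ii) for each of the four pairs $(s_1,s_2)$.

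The main obstacle is bookkeeping in Steps 1–2: the coefficient normalization must be consistent. The $w_i$ are defined as eigenvalues in the $P,P^\perp$ basis, while $\lambda_{s_1,s_2}$ is written through the $\Phi$/$\idop$ expansion. I will fix the convention by checking it against the stated formula for $\lambda_{s_1,s_2}$ and then re-deriving the objective coefficients in that convention. I will verify the result in the limiting case $w_1=\dots=w_4$, where $\lambda=w_4$ must hold.
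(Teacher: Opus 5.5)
Your proposal is correct and follows the paper's proof essentially step for step: the same $\cU(d)\times\cU(d)$ cross-covariance with $\Gamma_{\bfA\bfB}=\overline{U}_{A_0}\ox\overline{V}_{B_0}\ox V_{A_1}\ox U_{B_1}$, the same four-projector commutant on the pairs $(A_0B_1)$ and $(B_0A_1)$, and the same simultaneous diagonalization of $W^{\pT_{\bfB}},Q^{\pT_{\bfB}}$ in the joint eigenbasis of $F_{A_0B_1}$ and $F_{B_0A_1}$. You also justify $\rho=\sigma=\idop/d^2$, which the paper uses tacitly.

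One bookkeeping fix: $\tr[W(J^{\cS^p}-J^{\cS^q})]$ equals $-(p-q)$ times the stated bracket, not $+(p-q)$ times it. You should therefore either swap the channel labels, as the paper implicitly does by writing $J^{\cS^q}-J^{\cS^p}$, or substitute $W\mapsto\idop/d^2-W$ and $Q\mapsto\idop/d^2-Q$. This substitution exchanges constraints i) and ii), preserves the box constraints, and flips the sign of the bracket, since the bracket vanishes at $w_1=\dots=w_4$.
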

\begin{proof}
See Appendix~\ref{appendix:proof_depo_swap}.
\end{proof}

By analytically evaluating the linear program derived in Proposition~\ref{prop:depo_swap_LP}, we prove that the exact PPT entanglement cost for discriminating depolarized SWAP channels is $1$ ebit for arbitrary dimension.

\begin{proposition}[PPT entanglement cost of depolarized SWAP channels]
For any $d$-dimensional depolarized SWAP channels $\cS^p_{A_0B_0 \to A_1B_1}$ and $\cS^q_{A_0B_0 \to A_1B_1}$ with distinct noise parameters $p > q$ and equiprobable priors, the one-shot PPT entanglement cost of minimum-error discrimination is $1$ ebit.
\end{proposition}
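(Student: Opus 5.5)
I would follow the template of Proposition~\ref{prop:en_cost_ppdepo}. The argument has three parts: compute the global optimum $P_{\suc}(\cS^p,\cS^q;1/2)$ exactly, show that the linear program of Proposition~\ref{prop:depo_swap_LP} attains it at $k=2$, and show that it falls strictly short at $k=1$.

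\textbf{Global value.} The Choi operator is
\[
J^{\cS^p}_{\bfA\bfB}=(1-p)\,d^2\,\Phi'+p\,\idop/d^2,
\]
where $\Phi'$ is the maximally entangled state pairing $A_0$ with $B_1$ and $B_0$ with $A_1$. This is the SWAP version of $\Phi_{d}\ox\Phi_{d}$, with total Schmidt rank $d^2$. Hence
\[
\tfrac12\big(J^{\cS^p}-J^{\cS^q}\big)=\tfrac{q-p}{2}\Big(d^2\Phi'-\tfrac{\idop}{d^2}\Big).
\]
Up to a relabelling of the output ports, this is exactly the point-to-point depolarizing problem in dimension $d^2$. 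I would therefore reuse that primal/dual pair:
\begin{itemize}
\item Primal: the tester $W=\rho\ox\idop-(\ldots)$ with $\rho$ maximally mixed, which is the measurement on $\Phi'$.
\item Dual: $C\propto\Phi'$ in~\eqref{sdp:sucprob_dual_diamond}.
\end{itemize}
Together these give $P_{\suc}=\tfrac12+\tfrac{(p-q)(d^4-1)}{2d^4}$.

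In the LP variables this optimum corresponds to $w_1=1/d^2$, $w_2=w_3=w_4=0$, or to its complement; the sign is fixed by the objective in Proposition~\ref{prop:depo_swap_LP}. With this choice the eigenvalues are $\lambda_{s_1,s_2}=s_1s_2/d^4$.

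\textbf{Sufficiency at $k=2$.} I would take all $q_i$ equal to the same constant $c$, so that $\mu_{s_1,s_2}=c$. Choosing $c=1/d^4$ (the analogue of $u=v=1/d^2$ in the depolarizing proof), constraint i) reads $-1/d^4\le\pm1/d^4\le3/d^4$, which holds. Constraint ii) reads
\[
-\Big(\tfrac1{d^2}-\tfrac1{d^4}\Big)\le\tfrac1{d^2}\mp\tfrac1{d^4}\le3\Big(\tfrac1{d^2}-\tfrac1{d^4}\Big),
\]
which also holds. Hence $P^{\PPT}_{\suc,e}(\cS^p,\cS^q;2,1/2)=P_{\suc}$.

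\textbf{Necessity at $k=1$.} When $k=1$, constraint i) forces $\lambda_{s_1,s_2}(W)\ge0$ for all four sign patterns, and ii) forces $\lambda_{s_1,s_2}(W)\le 1/d^2$. The optimal point above has $\lambda_{+-}=\lambda_{-+}=-1/d^4<0$, so it is infeasible.

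To get a strict gap I will not solve the LP exactly. I will only upper bound it, which amounts to producing a dual feasible point of~\eqref{sdp:kebit_sucprob_dual} in the depolarizing style: $C\propto\Phi'$ together with an $E$ built from partial swap operators (the analogue of $E\propto\idop-F$), giving a value strictly below the global optimum. Alternatively, I would bound directly inside the LP. The objective is a fixed linear functional of $(w_i)$, and the constraints $0\le\lambda_{s_1,s_2}\le1/d^2$ form a small polytope in four variables. Maximizing over its vertices should give a value strictly below the global one; I expect something like $\tfrac12+\tfrac{(p-q)(d^2-1)}{2d^2}$ or a similar expression. This only needs to be verified, not derived in closed form.

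\textbf{Main obstacle.} The delicate step is this $k=1$ upper bound. I need the correct dual certificate, or a clean enumeration of the LP vertices uniformly in $d$. I also have to check that the upper bound really falls strictly below the global optimum for every $d\ge2$ and every $p>q$. I would first try to exploit the tensor-product form $\lambda_{s_1,s_2}$, which factorizes like a product of two point-to-point depolarizing eigenvalue structures, so that I can reuse the $k=1$ certificate from Proposition~\ref{prop:en_cost_ppdepo} factor by factor.

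Combining sufficiency at $k=2$ with the strict gap at $k=1$ gives cost $\log 2=1$ ebit.
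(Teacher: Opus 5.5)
Your global value and your $k=2$ point ($w_1=1/d^2$, $w_2=w_3=w_4=0$, all $q_i=1/d^4$) are exactly the paper's, and your feasibility checks are correct. (The right-hand inequality of ii) at $\lambda=-1/d^4$ needs $d^2\ge 2$, which holds.) The gap is the $k=1$ step, and that step is the entire content of necessity. You correctly observe that $k=1$ forces $\lambda_{s_1,s_2}(W)\ge 0$ and that the global optimizer has $\lambda_{+-}=\lambda_{-+}=-1/d^4$. But you then defer the strict bound to a dual certificate you never specify, or to a vertex enumeration you never carry out, and you only guess the resulting value. Showing that one particular point is infeasible does not by itself exclude another feasible point reaching the global value, so as written necessity is not established. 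Your fallback of reusing the point-to-point $k=1$ certificate ``factor by factor'' would not work as stated. The expression $\lambda_{s_1,s_2}(W)$ has four independent coefficients $w_1,\dots,w_4$, and it factorizes into two point-to-point eigenvalue structures only when $W$ is a product operator.

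The missing step is short and uses exactly the constraints you already isolated. Add the two cross-sign constraints; the $\pm d(w_2-w_3)$ terms cancel:
\[
d^2\big(\lambda_{+-}+\lambda_{-+}\big)=2\big[(d^2-1)w_4+w_2+w_3-w_1\big]\ge 0 .
\]
Now write the LP objective as $\frac12+\frac{p-q}{2}\frac{d^2-1}{d^2}\big[(d^2+1)w_1-(w_2+w_3)-(d^2-1)w_4\big]$. By the inequality above, the bracket is at most $d^2w_1\le 1$. Hence $P^{\PPT}_{\suc,e}(\cS^p,\cS^q;1,\frac12)\le\frac12+\frac{p-q}{2}\frac{d^2-1}{d^2}$, which is strictly below $\frac12+\frac{p-q}{2}\frac{d^4-1}{d^4}$ for $d\ge 2$. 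This is the paper's argument.

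You could also have finished qualitatively with what you had. For $p>q$ and $d\ge2$, the bracket has a unique maximizer on the box $[0,1/d^2]^4$, namely the very point you showed violates $\lambda_{+-}\ge 0$. The $k=1$ feasible set is a nonempty compact polytope, so its optimum is attained and is therefore strictly below the global value. That is all the cost statement requires.
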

\begin{proof}
First, we will show that when $k=2$, 
\begin{equation*}
P_{\suc, e}^{\PPT}\Big(\cS^p, \cS^q; 2, \frac{1}{2}\Big) = P_{\suc}\Big(\cS^p, \cS^q;\frac{1}{2}\Big) = \frac{1}{2} + \frac{p-q}{2}\frac{d^4-1}{d^4},
\end{equation*}
which means $k=2$ is sufficient to achieve the global optimum. Next, we will show that when $k=1$, the success probability is strictly bounded by $\frac{1}{2} + \frac{p-q}{2}\frac{d^2-1}{d^2}$, indicating that $k=2$ is necessary.

When $k=2$, in the linear program of Proposition~\ref{prop:depo_swap_LP}, let
\begin{equation*}
w_1 \coloneqq \frac{1}{d^2}, \quad w_2 = w_3 = w_4 \coloneqq 0,\quad q_1 = q_2 = q_3 = q_4 \coloneqq \frac{1}{d^4}.
\end{equation*}
Substituting these values into the eigenvalue functions yields
\begin{equation*}
    \lambda_{s_1, s_2}(W) = s_1 s_2 \frac{1}{d^4}, \quad \mu_{s_1, s_2}(Q) = \frac{1}{d^4} \quad \forall s_1, s_2 \in \{+1, -1\}.
\end{equation*}
We then verify the feasibility of this assignment. The boundary constraints $0 \le w_i \le 1/d^2$ and $0 \le q_i \le 1/d^2$ are satisfied since $1/d^4 \le 1/d^2$. For $k=2$, the primary PPT constraints $(1-k)\mu_{s_1, s_2} \le \lambda_{s_1, s_2} \le (1+k)\mu_{s_1, s_2}$ evaluate exactly to
\[
-\frac{1}{d^4} \le \pm \frac{1}{d^4} \le \frac{3}{d^4},
\]
which holds. The complementary PPT constraints $(1-k)\left(\frac{1}{d^2} - \mu_{s_1, s_2}\right) \le \frac{1}{d^2} - \lambda_{s_1, s_2} \le (1+k)\left(\frac{1}{d^2} - \mu_{s_1, s_2}\right)$ become
\[-\frac{d^2-1}{d^4} \le \frac{1}{d^2} \mp \frac{1}{d^4} \le \frac{3(d^2-1)}{d^4}
\]
For $\lambda = 1/d^4$, the inequality $-\frac{d^2-1}{d^4} \le \frac{d^2-1}{d^4} \le \frac{3(d^2-1)}{d^4}$ holds for any $d \ge 1$. For $\lambda = -1/d^4$, the inequality $-\frac{d^2-1}{d^4} \le \frac{d^2+1}{d^4} \le \frac{3(d^2-1)}{d^4}$ simplifies on the right-hand side to $d^2+1 \le 3d^2-3 \implies 2d^2 \ge 4$, which holds for any valid local dimension $d \ge 2$. Thus, this constitutes a feasible solution that reaches the maximum possible trace objective.

When $k=1$, the PPT constraints reduce to $0 \le \lambda_{s_1, s_2} \le 2\mu_{s_1, s_2}$. This strictly requires $\lambda_{s_1, s_2}(W) \ge 0$ for all sign combinations. Specifically, evaluating the cross-sign cases $\lambda_{+-}$ and $\lambda_{-+}$ gives
\begin{align*}
\lambda_{+-} &= w_4 + \frac{w_2 - w_4}{d} - \frac{w_3 - w_4}{d} - \frac{w_1 - w_2 - w_3 + w_4}{d^2}\\
&= \frac{1}{d^2}\Big[(d^2-1)w_4 + d(w_2 - w_3) + w_2 + w_3 - w_1\Big] \ge 0
\end{align*}
and
\begin{align*}
\lambda_{-+} &= w_4 - \frac{w_2 - w_4}{d} + \frac{w_3 - w_4}{d} - \frac{w_1 - w_2 - w_3 + w_4}{d^2}\\
&= \frac{1}{d^2}\Big[(d^2-1)w_4 + d(w_3 - w_2) + w_2 + w_3 - w_1\Big] \ge 0.
\end{align*}
Summing these two inequalities yields
$$\lambda_{+-} + \lambda_{-+} = \frac{2(d^2-1)w_4 + 2(w_2 + w_3) - 2w_1}{d^2} \ge 0,$$
which provides the strict algebraic requirement
$w_2 + w_3 + (d^2-1)w_4 \ge w_1.$
We now evaluate the trace objective under this condition. Factoring the objective function
$$\tr[W(J_{\bfA\bfB}^{\cS^q} - J_{\bfA\bfB}^{\cS^p})] = (p-q) \frac{d^2-1}{d^2} \left[ (d^2+1)w_1 - (w_2 + w_3) - (d^2-1)w_4 \right].$$
Applying the necessary condition $-(w_2 + w_3) - (d^2-1)w_4 \le -w_1$, we strictly bound the objective by
$$\tr[W(J_{\bfA\bfB}^{\cS^q} - J_{\bfA\bfB}^{\cS^p})] \le (p-q) \frac{d^2-1}{d^2} \left[ (d^2+1)w_1 - w_1 \right] = (p-q)(d^2-1)w_1\leq (p-q)\frac{d^2-1}{d^2}.$$
where we used the condition $w_1 \le 1/d^2$ in the inequality. Therefore, the maximal success probability for $k=1$ is strictly bounded by
$$P_{\suc, e}^{\PPT}\left(\cS^p, \cS^q; 1, \frac{1}{2}\right) \le \frac{1}{2} + \frac{p-q}{2} \frac{d^2-1}{d^2}.$$
Since $\frac{d^2-1}{d^2} = \frac{d^4-d^2}{d^4} < \frac{d^4-1}{d^4}$ for any $d \ge 2$, the unassisted $k=1$ strategy is strictly suboptimal. Thus, 1 ebit of entanglement is necessary to close the gap.
\end{proof}

\subsection{Werner-Holevo channel}\label{sec:werner}
Beyond covariant channels, we further consider another class of fundamental channels, the Werner-Holevo channels~\cite{Werner2002}. For an arbitrary dimension $d\geq 2$, the Werner-Holevo channels, denoted as $\cW_0$ and $\cW_1$, are defined via the transpose map by
\begin{equation}
\cW_0(\rho) = \frac{1}{d+1}\left[\tr(\rho) \idop + \rho^{\mathsf{T}}\right] \quad\text{and}\quad \cW_1(\rho) = \frac{1}{d-1}\left[\tr(\rho) \idop - \rho^{\mathsf{T}}\right].
\end{equation}
Leveraging the primal and dual SDP formulations established in Theorem~\ref{thm:sdp_k_inject_PPT}, we precisely quantify the entanglement resources required for this quantum channel discrimination task.
\begin{proposition}
For the $d$-dimensional Werner-Holevo channels $\cW_0$ and $\cW_1$, the one-shot PPT entanglement cost of minimum-error discrimination is $\log_2 d$ ebits.
\end{proposition}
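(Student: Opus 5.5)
The plan is to show that with $k$-injectable PPT testers perfect discrimination is achievable exactly when $k\ge d$. I will work in the point-to-point reduction~\eqref{sdp:kebit_sucprob_pp} with $A=A_0$, $B=B_1$ and $F$ the swap.

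\textbf{Step 1: the Choi operators and the global optimum.} Since the Choi operator of the transpose map is $F$, we get $J^{\cW_0}=\frac{1}{d+1}(\idop+F)=\frac{2}{d+1}P_+$ and $J^{\cW_1}=\frac{2}{d-1}P_-$. These have orthogonal supports, so $P_{\suc}=1$ for every prior $\lambda$. The global tester $\rho_A=\pi_A$, $W=P_+/d$ is feasible for~\eqref{Eq:Diamond} and attains $1$.

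\textbf{Step 2: uniqueness of the perfect tester.} I will show that any tester achieving success probability $1$ must have $W=P_+/d$ and $\rho_A=\pi_A$. Success $1$ forces $WP_-=0$ and $(\rho_A\ox\idop-W)P_+=0$. Hence $W=(\rho_A\ox\idop)P_+$. Hermiticity of $W$ then requires $\rho_A\ox\idop$ to commute with $P_+$, equivalently with $F$, which forces $\rho_A=\pi_A$. So in the $k$-assisted SDP only $\sigma_A$ and $Q_{AB}$ remain free, and the question reduces to deciding for which $k$ they can be chosen feasibly.

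\textbf{Step 3: partial transposes and achievability at $k=d$.} Using $P_\pm^{\pT_B}=(\idop\pm d\Phi_d)/2$, we have
\begin{equation*}
W^{\pT_B}=\frac{\idop+d\Phi_d}{2d},\qquad \Big(\frac{\idop}{d}-W\Big)^{\pT_B}=\frac{\idop-d\Phi_d}{2d}.
\end{equation*}
Their eigenvalues are $\frac{1}{2d}$, $\frac{1+d}{2d}$ and $\frac{1}{2d}$, $\frac{1-d}{2d}$ respectively. For achievability take $\sigma_A=\pi_A$ and $Q=\idop/(2d)$, so that both $Q^{\pT_B}$ and the complement $\sigma_A\ox\idop-Q$ (after $\pT_B$) equal $\idop/(2d)$. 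The two sandwich constraints then reduce to $(1-k)\le 1-d$ and $1+d\le 1+k$, i.e.\ $k\ge d$. Hence $k=d$ suffices, giving cost at most $\log_2 d$.

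\textbf{Step 4: converse for $k<d$ (the main step).} I will test the constraints against $\Phi_d$, noting that $\Phi_d$ is an eigenvector of $W^{\pT_B}$. Set $a:=\tr[\Phi_d Q^{\pT_B}]$. Since $\tr[\Phi_d(\sigma_A^{\pT}\ox\idop)]=1/d$, the corresponding quantity for the complementary block is $b=1/d-a$.
\begin{itemize}
\item The upper constraint $W^{\pT_B}\le(1+k)Q^{\pT_B}$ gives $(1+k)a\ge\frac{1+d}{2d}$.
\item The lower constraint on the complement gives $(1-k)b\le\frac{1-d}{2d}$, i.e.\ $(k-1)\big(\tfrac1d-a\big)\ge\frac{d-1}{2d}$.
\end{itemize}
For $k=1$ the second inequality already fails, since its left side is $0$ and its right side is positive. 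For $k\ge2$, combining the lower bound on $a$ with the upper bound on $a$ gives
\begin{equation*}
\frac{1+d}{1+k}\le 2-\frac{d-1}{k-1}.
\end{equation*}
This simplifies to $2dk-2\le 2k^2-2$, i.e.\ $k\ge d$. So no feasible $(\sigma_A,Q)$ exists for $k<d$, and by Step 2 the $k$-assisted success probability is then strictly below $1$.

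\textbf{Expected obstacle.} The delicate point is Step 2: pinning down the perfect tester rigidly, including the case where $\rho_A$ is not full rank. After that, the converse is the scalar computation above.

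Combining Steps 3 and 4, $E^{(1)}_{c,\PPT}(\cW_0,\cW_1;\lambda)=\log_2 d$.
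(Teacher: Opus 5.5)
Your proof is correct. Achievability coincides with the paper's: the same point $W=P_+/d$, $Q=\idop/(2d)$, $\rho_A=\sigma_A=\pi_A$, with $k\ge d$ read off from the $\Phi_d$ eigenvalue.

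The converse takes a genuinely different route.

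\textbf{The paper's converse.} The paper works with the dual SDP~\eqref{sdp:kebit_sucprob_dual}. It exhibits explicit dual-feasible points in two regimes, $\lambda\le (d+1)/2d$ and $\lambda>(d+1)/2d$. These give quantitative bounds such as $P^{\PPT}_{\suc,e}(\cW_0,\cW_1;k,\lambda)\le 1-\lambda+\lambda(k+1)/(d+1)$ for $k<d$.

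\textbf{Your converse.} You argue entirely in the primal, in two steps.
\begin{enumerate}
\item Because $J^{\cW_0}$ and $J^{\cW_1}$ have orthogonal supports, value $1$ forces the rigid tester $W=(\rho_A\ox\idop)P_+$. Hermiticity then forces $[\rho_A\ox\idop,F]=0$, i.e.\ $\rho_A=\pi_A$. This step needs no full-rank assumption, so the ``delicate point'' you flag is not actually an issue.
\item You evaluate two constraints in $\Phi_d$: the upper bound $W^{\pT_B}\le(1+k)Q^{\pT_B}$, and the lower bound on the complementary block. Together these yield $k\ge d$.
\end{enumerate}

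\textbf{How the two routes relate.} The paper's nonzero multipliers $G$ and $H$ are both multiples of $\Phi_{AB}$. They are attached to exactly these two constraints, so your test is the primal counterpart of the same certificate.

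\textbf{What your route buys.} It avoids the case split in $\lambda$ and does not depend on the correctness of the derived dual. It also explicitly covers $k=1$; the paper's text only states $2\le k<d$, although its certificates happen to work at $k=1$ too. It does rely on the maximum being attained, which holds by compactness of the feasible set.

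Your writing $\sigma_A^{\pT}$ is immaterial. In the point-to-point SDP the $\sigma_A$ term is not transposed, and the $\Phi_d$-expectation equals $1/d$ either way.

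\textbf{What the paper's route buys.} Your argument gives only strict inequality below $1$, not the explicit gap for each $k<d$. It is also tailored to perfectly distinguishable pairs. The dual-certificate method extends to channel pairs whose global optimum is strictly below one.
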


\begin{proof}
First, we show that when $k=d$, we have that $P^{\PPT}_{\suc,e}(\cW_0, \cW_1; d, \lambda)=1$. 
The Choi operators of the Werner-Holevo channels are given by
\begin{equation}
J^{\cW_0}_{AB} = \frac{1}{d+1}(\idop_{AB} + F_{AB}),\quad J^{\cW_1}_{AB} = \frac{1}{d-1}(\idop_{AB} - F_{AB}),
\end{equation}
where $d_A=d_B=d$. Let us construct 
\begin{equation}\label{Eq:fea_WQ}
W_{AB}^{(0)} \coloneqq \frac{1}{2d}(\idop_{AB} + F_{AB}),~ W_{AB}^{(1)} \coloneqq \frac{1}{2d}(\idop_{AB} - F_{AB}),~
Q_{AB}^{(0)} = Q_{AB}^{(1)} \coloneqq \frac{\idop_{AB}}{2d},~\sigma_A = \rho_A \coloneqq \frac{\idop_A}{d}.
\end{equation}
We have that
\begin{equation*}
(1-\lambda) + \Big[\lambda\tr (W_{AB}^{(0)}J^{\cW_0}_{AB}) - (1-\lambda)\tr (W_{AB}^{(0)}J^{\cW_1}_{AB})\Big] = (1-\lambda) + \lambda\Big[\frac{1}{d(d+1)}\tr (\idop_{AB}+F_{AB})\Big] = 1,
\end{equation*}
and
\begin{equation*}
(W_{AB}^{(0)})^{\pT_B} = \frac{1}{2d}(\idop_{AB} + d\Phi_{AB}),~(W_{AB}^{(1)})^{\pT_B} = \frac{1}{2d}(\idop_{AB} - d\Phi_{AB}).
\end{equation*}
Since $\idop_{AB} \ge \Phi_{AB}$ and $1-k = 1-d < 0$, it can be checked that
\begin{align*}
    &\rho_A\ox\idop_B - (1+d)(\sigma_A\ox\idop_B - Q_{AB}^{(0)})^{\pT_B} = \frac{(1-d)\idop_{AB}}{2d} \le \frac{(\idop_{AB} + d\Phi_{AB})}{2d} = (W_{AB}^{(0)})^{\pT_B},\\
    &\rho_A\ox\idop_B - (1-d)(\sigma_A\ox\idop_B - Q_{AB}^{(0)})^{\pT_B} = \frac{(1+d)\idop_{AB}}{2d} \ge \frac{\idop_{AB} + d\Phi_{AB}}{2d} = (W_{AB}^{(0)})^{\pT_B}.
\end{align*}
Thus, the operators in Eq.~\eqref{Eq:fea_WQ} constitute a feasible solution to SDP~\eqref{sdp:kebit_sucprob} which yields that $P^{\PPT}_{\suc,e}(\cW_0, \cW_1; d, \lambda)\ge 1$. As $P^{\PPT}_{\suc,e}(\cW_0, \cW_1; d, \lambda ) \le 1$ by definition, we conclude that \[P^{\PPT}_{\suc,e}(\cW_0, \cW_1; d, \lambda)= 1.\]
In the following, we shall show that $P^{\PPT}_{\suc,e}(\cW_0, \cW_1; k, \lambda)< 1$ for any $2\leq k < d$, by considering the dual SDP~\eqref{sdp:kebit_sucprob_dual}
\begin{equation*}
\begin{aligned}
    P^{\PPT}_{\suc,e}(\cN, \cM; k, \lambda) = \min &\;\; 1-\lambda + \alpha + \beta \\
    {\rm s.t.}
    &\;\; C_{AB}, ~D_{AB}, ~E_{AB},~G_{AB}, ~H_{AB}, ~K_{AB} \ge 0, \\
    &\;\; \lambda J_{AB}^{\cN} - (1-\lambda) J_{AB}^{\cM} \le H_{AB}^{\pT_B} - K_{AB}^{\pT_B} + G_{AB}^{\pT_B} - E_{AB}^{\pT_B} + C_{AB},\\
    &\;\; (1+k)(G_{AB}^{\pT_B}-K_{AB}^{\pT_B}) + (1-k)(H_{AB}^{\pT_B}-E_{AB}^{\pT_B}) \le D_{AB}, \\
    &\; \tr_{B}(C_{AB}+H_{AB}-K_{AB}) \le \alpha \idop_A,\\
    &\; \tr_{B} \big[ D_{AB}+(k-1)H_{AB}+(k+1)K_{AB} \big] \le \beta \idop_A.
\end{aligned}
\end{equation*}
\textbf{Case 1:} $0 \le \lambda \le (d+1)/2d$. Let $C=D=E=K=0$ and
\begin{equation}\label{eq:fea_case1}
    \beta = (k-1)\alpha \coloneqq \frac{\lambda(k-1)(k+1)}{k(d+1)},~ G\coloneqq \frac{\lambda(k-1)d\Phi_{AB}}{k(d+1)} \ge 0, ~H\coloneqq \frac{\lambda(k+1)d\Phi_{AB}}{k(d+1)} \ge 0.
\end{equation}
We will verify that these operators constitute a feasible solution to the dual SDP above. 
Noticing $(1+k)G^{\pT_B} + (1-k)H^{\pT_B} = 0,~\idop_{AB} \ge F_{AB}$, and $0 \le \lambda \le (d+1)/2d$, we have that
\begin{equation*}
    G^{\pT_B} + H^{\pT_B} = \frac{2\lambda}{d+1}F_{AB} \ge \frac{(d-2\lambda+1)F_{AB}-(d+1-2d\lambda )\idop_{AB}}{(d+1)(d-1)} = \lambda J^{\cW_0}_{AB} - (1-\lambda) J^{\cW_1}_{AB}.
\end{equation*}
Further, notice that
\begin{equation*}
\begin{aligned}
    &\tr_B (C_{AB}+H_{AB}-K_{AB}) = \tr_B H_{AB} = \frac{\lambda(k+1)}{k(d+1)}\idop_A = \alpha \idop_A, \\ &\tr_{B} \big[ D_{AB}+(k-1)H_{AB}+(k+1)K_{AB} \big] = \tr_B[(k-1)H_{AB}] = \frac{\lambda(k-1)(k+1)}{k(d+1)}\idop_A = \beta \idop_A.
\end{aligned} 
\end{equation*}
Hence, the operators in Eq.~\eqref{eq:fea_case1} constitute a feasible solution to the dual, which yields that when $k < d$, 
\begin{equation*}
    P^{\PPT}_{\suc,e}(\cW_0, \cW_1; k, \lambda) \le 1-\lambda + \frac{\lambda(k+1)}{(d+1)} < 1.
\end{equation*}
\textbf{Case 2:} $(d+1)/2d <  \lambda  \le 1$, let $D=E=K=0$ and 
\begin{equation}\label{eq:fea_case2}
\begin{aligned}
\alpha &\coloneqq \frac{(2d\lambda-d-1)k+(1-\lambda)(k+1)}{k(d-1)},~\beta \coloneqq \frac{(1-\lambda)(k+1)(k-1)}{k(d-1)}, \\
C &\coloneqq \frac{(2d\lambda-1-d)(\idop_{AB}+F_{AB})}{(d-1)(d+1)}, ~G\coloneqq\frac{(1-\lambda)(k-1)d\Phi_{AB}}{k(d-1)}, ~H \coloneqq\frac{(1-\lambda)(k+1)d\Phi_{AB}}{k(d-1)}.
\end{aligned}
\end{equation}
We will also verify that these operators constitute a feasible solution to the dual SDP above. 
Noticing $(1+k)G^{\pT_B} + (1-k)H^{\pT_B} = 0,~\idop_{AB} \ge F_{AB}$, and $(d+1)/2d <  \lambda  \le 1$, we have that
\begin{equation*}
    C + G^{\pT_B} + H^{\pT_B} = \frac{(d+1-2\lambda)F_{AB} + (2d\lambda-1-d)\idop_{AB}}{(d-1)(d+1)} = \lambda J^{\cW_0}_{AB} - (1-\lambda) J^{\cW_1}_{AB}.
\end{equation*}
Further, notice that
\begin{equation*}
\begin{aligned}
    &\tr_B(C_{AB}+H_{AB}-K_{AB}) = \tr_B(C_{AB} + H_{AB}) = \frac{(2d\lambda-d-1)k+(1-\lambda)(k+1)}{k(d-1)}\idop_A = \alpha \idop_A, \\
    &\tr_{B} \big[ D_{AB}+(k-1)H_{AB}+(k+1)K_{AB} \big] = \tr_B[(k-1)H_{AB}] = \frac{(1-\lambda)(k+1)(k-1)}{k(d-1)}\idop_A = \beta \idop_A.
\end{aligned}
\end{equation*}
Hence, the operators in Eq.~\eqref{eq:fea_case2} constitute a feasible solution to the dual, which yields that when $k<d$, 
\begin{equation*}
    P^{\PPT}_{\suc,e}(\cW_0, \cW_1; k, \lambda) \le 1 - \lambda + \frac{2d\lambda-d-\lambda+k(1-\lambda)}{d-1} < 1.
\end{equation*}
Hence, we complete the proof.
\end{proof}

It is known that for two $d$-dimensional Werner--Holevo channels, the one-shot LOCC entanglement cost is also $\log_2 d$~\cite{Puzzuoli2017}. By recovering this exact cost, our analysis reveals that PPT testers do not exhibit any advantage over LOCC testers in this regime, showing that the PPT testers serve as a strictly tight approximation.

\subsection{Amplitude damping channel}
We further show some numerical results on the amplitude damping channels $\cA_{\gamma}(\cdot) = K^{}_0(\cdot)K_0^\dag + K^{}_1(\cdot)K_1^\dag$ from Alice to Bob, defined by the Kraus operators
\begin{equation}
K_0 = \left(\begin{array}{cc}
    1 & 0 \\
    0 & \sqrt{1-\gamma}
\end{array}\right),\quad
K_0 = \left(\begin{array}{cc}
    0 & \sqrt{\gamma} \\
    0 & 0
\end{array}\right).
\end{equation}
Suppose the unknown channel is drawn from the equiprobable ensemble $\{(\frac{1}{2}, \cA_{\gamma}), (\frac{1}{2}, \cA_{1-\gamma})\}$. We first compute the globally optimal average success probability (cf. Eq.~\eqref{Eq:Diamond}) for one, two, and three parallel uses of the channel. To evaluate the performance of local strategies, we then compute the maximal $k$-ebit-assisted success probability achieved by PPT testers using the SDP formulated in Theorem~\ref{thm:sdp_k_inject_PPT}, specifically benchmarking the unassisted regime ($k=1$, corresponding to 0 ebits) against the entanglement-assisted regime ($k=2$, corresponding to 1 ebit).

\begin{figure}[t]
    \centering
    \includegraphics[width=1.0\linewidth]{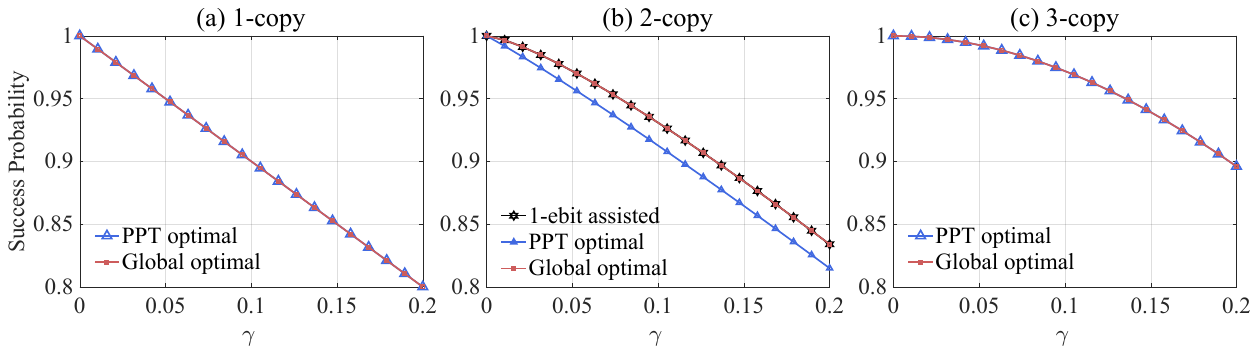}
    \caption{Maximal average success probability for discriminating the equiprobable ensemble $\{(\frac{1}{2}, \cA_{\gamma}), (\frac{1}{2}, \cA_{1-\gamma})\}$ of amplitude damping channels over the parameter range $\gamma \in [0,0.2]$. Panels (a), (b), and (c) correspond to single, two, and three parallel uses of the unknown channel. We benchmark the globally optimal success probability against the performance of PPT testers in the unassisted regime ($k=1$, $0$ ebits) and the entanglement-assisted regime ($k=2$, $1$ ebit). While single and three parallel uses achieve the global optimum without any shared entanglement, the two-copy scenario strictly requires $1$ ebit to close the performance gap between unassisted local strategies and the global optimum.}
    \label{fig:ad_numerics}
\end{figure}

From~\Cref{fig:ad_numerics}, we observe that the entanglement requirements for optimally discriminating the ensemble $\{(\frac{1}{2}, \cA_{\gamma}), (\frac{1}{2}, \cA_{1-\gamma})\}$ with $\gamma \in [0,0.2]$ depend heavily on the number of channel uses. For single and three parallel uses of the amplitude damping channel, PPT testers achieve the globally optimal success probability without any shared entanglement. In contrast, for the two-copy scenario, entanglement becomes strictly necessary, and exactly $1$ ebit is sufficient to recover the global optimum via a PPT tester.

\section*{Acknowledgments}
The authors thank Mario Berta for helpful discussions. S.\ H.\ thanks Mario Berta for hosting him in the Quantum Information Group at RWTH Aachen University during the 2025-2026 winter semester.  C.\ Z., S.\ H., and X.\ W. were partially supported by the National Key R\&D Program of China (Grant No.~2024YFB4504004), the National Natural Science Foundation of China (Grant No.~92576114, 12447107), the Guangdong Provincial Quantum Science Strategic Initiative (Grant No.~GDZX2403008, GDZX2503001), the CCF-Tencent Rhino-Bird Open Research Fund, the Guangdong Provincial Key Lab of Integrated Communication, Sensing and Computation for Ubiquitous Internet of Things (Grant No.~2023B1212010007), and X Program from HKUST(Guangzhou). G. \ K. acknowledges support from the Excellence Cluster -- Matter and Light for Quantum Computing (ML4Q-2) and funding by the European Research Council (ERC Grant Agreement No.\ 948139). 

\bibliographystyle{alpha}
\bibliography{main}

\appendix

\section{Proof of Proposition~\ref{prop:classical_torus}}\label{appendix:proof_class}
\begin{proof}
Let $\mathbb{T}^{d_{A_0}} \cong \{\mathrm{diag}(e^{i\theta_1},\cdots,e^{i\theta_{d_{A_0}}}):\theta \in [0,2\pi)\}$ and
\[
\cG \coloneqq \mathbb{T}^{d_{A_0}}\times \mathbb{T}^{d_{B_0}}\times \mathbb{T}^{d_{A_1}}\times \mathbb{T}^{d_{B_1}},
\]
equipped with the normalized Haar measure $\mu$ (product of the uniform measures on each torus factor).
For $g=(\theta^{A_0},\theta^{B_0},\theta^{A_1},\theta^{B_1})\in \cG$, define diagonal unitaries
\begin{align*}
U_{A_0}(g) &\coloneqq \sum_{a_0} e^{i\theta^{A_0}_{a_0}}\ketbra{a_0}{a_0},&
V_{B_0}(g) &\coloneqq \sum_{b_0} e^{i\theta^{B_0}_{b_0}}\ketbra{b_0}{b_0},\\
U_{A_1}(g) &\coloneqq \sum_{a_1} e^{i\theta^{A_1}_{a_1}}\ketbra{a_1}{a_1},&
V_{B_1}(g) &\coloneqq \sum_{b_1} e^{i\theta^{B_1}_{b_1}}\ketbra{b_1}{b_1}.
\end{align*}
Define the unitary representation on $A_0B_0A_1B_1$ used in Proposition~\ref{prop:sym_redu_covariance} by
\begin{equation}\label{eq:Gamma_def_nodimmatch}
\Gamma_{\bfA\bfB}(g)
\;\coloneqq\;
\overline U_{A_0}(g)\ox \overline V_{B_0}(g)\ox U_{A_1}(g)\ox V_{B_1}(g),
\end{equation}
where the overline denotes entrywise complex conjugation in the chosen bases. Since $\Gamma_{\bfA\bfB}(g)$ is diagonal in the basis $\{\ket{a_0b_0a_1b_1}\}$, it commutes with every diagonal operator in that basis.
For classical channel $\cN$ and $\cM$, $J^{\cN}$ and $J^{\cM}$ are diagonal in $\{\ket{a_0b_0a_1b_1}\}$, hence for all $g\in G$ and $\cC\in\{\cN,\cM\}$,
\begin{equation}\label{eq:J_invariant_nodimmatch}
\Gamma_{\bfA\bfB}(g)\,J^{\cC}\,\Gamma_{\bfA\bfB}(g)^\dagger = J^{\cC}.
\end{equation}
Therefore $\cN$ and $\cM$ satisfy the covariance assumption of Proposition~\ref{prop:sym_redu_covariance} with respect to $\cG$ and the diagonal unitary families $\{U_{A_0}(g)\}$ and $\{V_{B_0}(g)\}$. 
Consider the \emph{unrestricted} (global) discrimination SDP for $P_{\suc}(\cN,\cM;\lambda)$. Applying the same twirling construction to an optimal unrestricted solution yields an optimal unrestricted solution $(W^\star,\rho^\star)$ that is diagonal.
By diagonality, $(W^\star)^{\pT_{\bfB}}=W^\star\ge 0$ and $(\rho^\star\ox \idop - W^\star)^{\pT_{\bfB}}=\rho^\star\ox \idop - W^\star\ge 0$, so $(W^\star,\rho^\star)$ is feasible to the PPT-restricted program and achieves the global optimum. Consequently,
$P^{\PPT}_{\suc,e}(\cN,\cM;1,\lambda)=P_{\suc}(\cN,\cM;\lambda)$, and thus the one-shot PPT entanglement cost is $0$ ebits.
\end{proof}

\section{Proof of Proposition~\ref{prop:bi_depo_LP}}\label{appendix:proof_bi_depo}
\begin{proof}
For depolarizing channel $\cD^p_{A_0B_0\to A_1B_1}$, we can choose \[\Gamma_{\bfA\bfB} \coloneqq \overline{U}_{A_0} \ox \overline{V}_{B_0} \ox U_{A_1}\ox V_{B_1}\] for $U \in \cU(\cH_{A_0})$ and $V\in\cU(\cH_{B_0})$ in Proposition~\ref{prop:sym_redu_covariance}. Consequently, we can assume the optimal solution  $(W_{\bfA\bfB},Q_{\bfA\bfB},\rho_{A_0B_0},\sigma_{A_0B_0})$ of SDP~\eqref{sdp:kebit_sucprob} for $\cD_{A_0B_0\to A_1B_1}^p,\cD^q_{A_0B_0\to A_1B_1}$ satisfy
\begin{equation}
    \rho_{A_0B_0} = \sigma_{A_0B_0} = \frac{\idop}{d_{A_0B_0}},~~W_{\bfA\bfB}, Q_{\bfA\bfB} \in \operatorname{span}\left\{ P_1,P_2,P_3,P_4 \right\},
\end{equation} 
where the orthogonal projectors spanning the local commutant are given by
\begin{align*}
    &P_1 = \Phi_{A_0A_1} \otimes \Phi_{B_0B_1},  & P_2 = \Phi_{A_0A_1} \otimes (\idop_{B_0B_1} - \Phi_{B_0B_1}),\\
    &P_3 = (\idop_{A_0A_1} - \Phi_{A_0A_1}) \otimes \Phi_{B_0B_1}, & P_4 = (\idop_{A_0A_1} - \Phi_{A_0A_1}) \otimes (\idop_{B_0B_1} - \Phi_{B_0B_1}).
\end{align*}
We can parameterize $W_{\bfA\bfB}= \sum_{i=1}^4 w_i P_i,~Q_{\bfA\bfB} =  \sum_{i=1}^4 q_i P_i$. The operator constraint $0 \le W_{\bfA\bfB} \le \frac{\idop_{\bfA\bfB}}{d_{A_0B_0}}$ yields the scalar bounds
\begin{equation}\label{eq:boundary_bidepo}
    0 \le w_i \le \frac{1}{d_{A_0B_0}},\quad 0 \le q_i \le \frac{1}{d_{A_0B_0}},\quad \forall i\in \{0,1,2,3\}.
\end{equation}
Noticing that
\begin{equation}
W_{\bfA\bfB}^{\pT_{\bfB}} = W_{\bfA\bfB}, \quad
Q_{\bfA\bfB}^{\pT_{\bfB}} = Q_{\bfA\bfB},
\end{equation}
the operator inequality $(1-k)Q_{\bfA\bfB}^{\pT_{\bfB}} \le W_{\bfA\bfB}^{\pT_{\bfB}} \le (1+k)Q_{\bfA\bfB}^{\pT_{\bfB}}$ strictly decouples into four independent sets of scalar inequalities
\begin{equation}\label{eq:PPT_bidepo1}
(1-k)q_i \le w_i \le (1+k)q_i \quad \forall i \in \{1,2,3,4\}.
\end{equation}
Similarly, since $\rho_{A_0B_0}^{\pT_{B_0}} \otimes \idop_{A_1B_1} = \frac{\idop_{\bfA\bfB}}{d_{A_0B_0}}$, the complementary PPT constraint identically decouples into
\begin{equation}\label{eq:PPT_bidepo2}
(1 - k)\left( \frac{1}{d_{A_0B_0}} - q_i \right) \leq \frac{1}{d_{A_0B_0}} - w_i \leq (1 + k)\left( \frac{1}{d_{A_0B_0}} - q_i \right) \quad \forall i \in \{1,2,3,4\}.
\end{equation}
Now consider the objective function. Note that
\[
J_{\bfA\bfB}^{\cD^q} - J^{\cD^p}_{\bfA\bfB} = (p-q) \left( d_{A_0B_0} P_1 - \frac{\idop_{\bfA\bfB}}{d_{A_0B_0}} \right).
\]
Substituting $\tr P_1 = 1$, $\tr P_2 = d_{B_0}^2-1$, $\tr P_3 = d_{A_0}^2-1$, and $\tr P_4 = (d_{A_0}^2-1)(d_{B_0}^2-1)$) yields the objective function:
\begin{equation*}
\tr\!\Big[ W_{\bfA\bfB} \left( J_{\bfA\bfB}^{\cD^q} - J_{\bfA\bfB}^{\cD^p} \right) \Big] \!= (p - q) \left[ w_1 \frac{d_{A_0B_0}^2-1}{d_{A_0B_0}} - w_2 \frac{d_{B_0}^2-1}{d_{A_0B_0}} - w_3 \frac{d_{A_0}^2-1}{d_{A_0B_0}} - w_4 \frac{(d_{A_0}^2-1)(d_{B_0}^2-1)}{d_{A_0B_0}} \right].
\end{equation*}
Assuming $p > q$, maximizing this objective requires maximizing $w_1$ while minimizing $w_2$, $w_3$, and $w_4$. Crucially, because the PPT constraints in Eq.~\eqref{eq:PPT_bidepo1},\eqref{eq:PPT_bidepo2} and boundary constraints Eq.~\eqref{eq:boundary_bidepo} are completely decoupled and identical for each $i$, the independent optimizations will naturally select a maximal value for $w_1 \coloneqq x$ and a minimal, identical value for $w_2 = w_3 = w_4 \coloneqq y$. It follows that $W_{\bfA\bfB}, Q_{\bfA\bfB}$ can be parameterized by two pairs of coefficients $(x,y)$ and $(u,v)$, respectively, i.e., \[W_{\bfA\bfB} = x\Phi_{d_{A_0}}\ox \Phi_{d_{B_0}} + y(\idop_{\bfA\bfB} - \Phi_{d_{A_0}}\ox \Phi_{d_{B_0}}),~~Q_{\bfA\bfB} = u\Phi_{d_{A_0}}\ox \Phi_{d_{B_0}} + v(\idop_{\bfA\bfB} - \Phi_{d_{A_0}}\ox \Phi_{d_{B_0}}).\] 
Substituting these variables directly recovers the linear program stated in the proposition.
\end{proof}

\section{Proof of Proposition~\ref{prop:pp_depo_LP}}\label{appendix:proof_pp_depo}
\begin{proof}
Since the depolarizing channel $\cD_{A\to B}(X_A) = (1-p)X_A + p \tr(X_A) \frac{\idop_A}{d_A} $ is covariant under the full unitary group $\cU(\cH_A)$ where $\cH_A\cong \cH_B$ and $d_A=d_B=d$, we can choose $\Gamma_U \coloneqq \overline{U} \ox U$ for $U \in \cU(\cH_A)$ in Proposition~\ref{prop:sym_redu_covariance}. Consequently, we can assume the optimal solution  $(W_{AB},Q_{AB},\rho_{A},\sigma_{A})$ of SDP~\eqref{sdp:kebit_sucprob} for $\cD^p,\cD^q$ satisfy
\begin{equation}
    \rho_A = \sigma_A = \frac{\idop}{d},~~W_{AB}, Q_{AB} \in \operatorname{span}\left\{ \Phi_d,\ \idop - \Phi_d \right\}.
\end{equation}
It follows that $W_{AB}, Q_{AB}$ can be parameterized by two pairs of coefficients $(x,y)$ and $(u,v)$, respectively, i.e., $W_{AB} = x\Phi_d + y(\idop_{d^2} - \Phi_d)$ and $Q_{AB} = u\Phi_d + v(\idop_{d^2} - \Phi_d)$. It follows that $0 \leq x, y, u, v \leq 1/d$ since $0 \leq W_{AB}, Q_{AB} \leq \idop_{AB}/d$. We can calculate that
\begin{equation*}
\tr\left[ W_{AB} \left( J_{AB}^{\cD^q} - J_{AB}^{\cD^p} \right) \right]
= (p - q) \, \tr\left[ W_{AB} \left( d\Phi_d - \frac{\idop_{AB}}{d} \right) \right]
= (p - q) \left(d - \frac{1}{d}\right) (x - y).
\end{equation*}
Recall $\Phi_d^{\pT_B} = \frac{1}{d} F$. Hence we have
\begin{equation*}
W_{AB}^{\pT_B} = y \, \idop + \frac{x - y}{d} F, \qquad
Q_{AB}^{\pT_B} = v \, \idop + \frac{u - v}{d} F.
\end{equation*}
Since $F$ has eigenvalues $\pm 1$ on the symmetric and antisymmetric subspaces, the inequalities $(1 - k) Q_{AB}^{\pT_B} \leq W_{AB}^{\pT_B} \leq (1 + k) Q_{AB}^{\pT_B}$ are equivalent to the two scalar pairs
\begin{equation}
(1 - k) \mu_{\pm} \leq \lambda_{\pm} \leq (1 + k) \mu_{\pm},
\end{equation}
with $\lambda_{\pm}, \mu_{\pm}$ as claimed. The complementary bounds involving $\rho_A \ox \idop - W_{AB}^{\pT_B}$ give the second displayed pair. All other constraints have already been taken into account. This completes the reduction.
\end{proof}

\section{Proof of Proposition~\ref{prop:depo_swap_LP}}\label{appendix:proof_depo_swap}
\begin{proof}
The Choi operator of a depolarized SWAP channel $\cS^{p}_{A_0B_0\to A_1B_1}$ is given by
\begin{equation*}
    J_{\bfA\bfB}^{\cS^p} = (1-p)d^2 \Phi_{A_0B_1}\ox \Phi_{B_0A_1} + p\frac{\idop_{\bfA\bfB}}{d_{\bfA\bfB}}.
\end{equation*}
Consider the group $\cG = \cU(\cH_d) \times \cU(\cH_d)$ and the representation $\Gamma_{\bfA\bfB} = \overline{U}_{A_0}\ox \overline{V}_{B_0} \ox V_{A_1} \ox U_{B_1}$ for all $U,V\in \cU(\cH_d)$. We have 
\[
\Gamma_{\bfA\bfB} J^\cC_{\bfA\bfB} \Gamma_{\bfA\bfB} = J^\cC_{\bfA\bfB},~\forall \cC\in\{\cS^{p}_{A_0B_0\to A_1B_1},\cS^{q}_{A_0B_0\to A_1B_1}\}.
\]
By Schur's Lemma, the commutant of $\Gamma_{AB}(G)$ decomposes into four orthogonal projectors spanning the cross-paired subsystems $(A_0, B_1)$ and $(B_0, A_1)$:
\begin{align*}
    & P_1 = \Phi_{A_0B_1} \ox \Phi_{B_0A_1}, &P_2 = \Phi_{A_0B_1} \ox (\idop_{B_0A_1} - \Phi_{B_0A_1})\\
    & P_3 = (\idop_{A_0B_1} - \Phi_{A_0B_1}) \ox \Phi_{B_0A_1}, &P_4 = (\idop_{A_0B_1} - \Phi_{A_0B_1}) \ox (\idop_{B_0A_1} - \Phi_{B_0A_1}).
\end{align*}
Any invariant operator $W_{\bfA\bfB}$ can be parameterized as $W_{\bfA\bfB} = \sum_{i=0}^3 w_i P_i$. The constraint $0 \le W_{\bfA\bfB} \le \rho_{A_0B_0} \ox \idop_{A_1B_1} = \frac{\idop_{\bfA\bfB}}{d^2}$ yields the bounds $0 \le w_i \le \frac{1}{d^2}$ for all $i$. We parameterize $Q_{\bfA\bfB} = \sum_{i=0}^3 q_i P_i$ identically. Note that 
\[J_{\bfA\bfB}^{\cS^q} - J_{\bfA\bfB}^{\cS^p} = (p-q) \left( d^2 P_1 - \frac{\idop_{\bfA\bfB}}{d^2} \right).\] 
Since $\idop_{\bfA\bfB} = P_1 + P_2 + P_3 + P_4$ and $\tr P_1 = 1, \tr P_2 = \tr P_3 = d^2-1$, and $\tr P_4 = (d^2-1)^2$, we obtain
\begin{equation*}
    \tr [W_{\bfA\bfB}(J_{\bfA\bfB}^{\cS^q} - J_{\bfA\bfB}^{\cS^p})] = (p-q) \left[ w_1 \left(d^2 - \frac{1}{d^2}\right) - (w_2 + w_3)\frac{d^2-1}{d^2} - w_4\frac{(d^2-1)^2}{d^2} \right].
\end{equation*}
Applying $\pT_B$ to the four projectors yields
\begin{align*}
& P_1^{\pT_{\bfB}} = \frac{1}{d^2} F_{A_0B_1} \ox F_{B_0A_1}, & P_2^{\pT_{\bfB}} = \frac{1}{d} F_{A_0B_1} \ox \left(\idop_{B_0A_1} - \frac{1}{d} F_{B_0A_1}\right), \\
& P_3^{\pT_{\bfB}} = \left(\idop_{A_0B_1} - \frac{1}{d} F_{A_0B_1}\right) \ox \frac{1}{d} F_{B_0A_1}, & P_4^{\pT_{\bfB}} = \left(\idop_{A_0B_1} - \frac{1}{d} F_{A_0B_1}\right) \ox \left(\idop_{B_0A_1} - \frac{1}{d} F_{B_0A_1}\right).    
\end{align*}
Because the SWAP operators $F$ have eigenvalues $s = +1$ on the symmetric subspace and $s = -1$ on the antisymmetric subspace, the operator $W_{\bfA\bfB}^{\pT_{\bfB}}$ jointly diagonalizes into four distinct eigenspaces uniquely characterized by the signs $s_1, s_2 \in \{+1, -1\}$. Notice that 
\begin{itemize}
    \item the coefficient for $\idop_{A_0B_1} \otimes \idop_{B_0A_1}$ is $w_4$;
    \item the coefficient for $\idop_{A_0B_1} \otimes F_{B_0A_1}$ is $\frac{w_3 - w_4}{d}$;
    \item the coefficient for $F_{A_0B_1} \otimes \idop_{B_0A_1}$ is $\frac{w_2 - w_4}{d}$;
    \item the coefficient for $F_{A_0B_1} \otimes F_{B_0A_1}$ is $\frac{w_1 - w_2 - w_3 + w_4}{d^2}$,
\end{itemize}
and $F_{A_0B_1},F_{B_0A_1}$ can be simultaneously diagonalized since they commute. Therefore, the operator inequality $(1-k)Q_{\bfA\bfB}^{\pT_{\bfB}} \le W_{\bfA\bfB}^{\pT_{\bfB}} \le (1+k)Q_{\bfA\bfB}^{\pT_{\bfB}}$  is strictly equivalent to enforcing the corresponding scalar inequalities identically on each of the four eigenspaces. The complementary bounds involving $\rho_{A_0B_0}^{\pT_{B_0}} \otimes \idop_{A_1B_1} - W_{\bfA\bfB}^{\pT_{\bfB}}$ identically translate to the second pair of scalar inequalities, recognizing that $\rho_{A_0B_0}^{\pT_{B_0}} \ox \idop_{A_1B_1} = \frac{\idop_{\bfA\bfB}}{d^2}$. This completes the reduction.
\end{proof}
\end{document}